\newtheorem{theorem}{Theorem}
\newtheorem{lemma}{Lemma}
\newtheorem*{lemma*}{Lemma}
\newtheorem{corollary}{Corollary}
\newtheorem{proposition}{Proposition}
\newtheorem{definition}{Definition}
\newtheorem*{claim*}{Claim}
\theoremstyle{remark}
\newtheorem{remark}{Remark}
\newtheorem{example}{Example}
\DeclareMathOperator{\poly}{poly}
\DeclareMathOperator{\polylog}{polylog}
\DeclareMathOperator{\Char}{char}
\newcommand{\FF}{{\mathbb{F}}}
\newcommand{\QQ}{{\mathbb{Q}}}
\newcommand{\NN}{{\mathbb{N}}}
\newcommand{\Z}{{\mathcal{Z}}}
\newcommand{\Znew}{{\mathcal{Z}_\text{new}}}
\newtheorem{question}[theorem]{Question}
\title{Parameterized Sensitivity Oracles and Dynamic Algorithms using Exterior Algebras}
\author{Josh Alman\thanks{Department of Computer Science, Columbia University. Email: \href{mailto:josh@cs.columbia.edu}{josh@cs.columbia.edu}} , Dean Hirsch\thanks{Department of Computer Science, Columbia University. Email: \href{mailto:deanh@cs.columbia.edu}{deanh@cs.columbia.edu}}}
\date{April 2022}
\providecommand*{\cupdot}{%
  \mathbin{%
    \mathpalette\@cupdot{}%
  }%
}
\newcommand*{\@cupdot}[2]{%
  \ooalign{%
    $\m@th#1\cup$\cr
    \sbox0{$#1\cup$}%
    \dimen@=\ht0 %
    \sbox0{$\m@th#1\cdot$}%
    \advance\dimen@ by -\ht0 %
    \dimen@=.5\dimen@
    \hidewidth\raise\dimen@\box0\hidewidth
  }%
}
\providecommand*{\bigcupdot}{%
  \mathop{%
    \vphantom{\bigcup}%
    \mathpalette\@bigcupdot{}%
  }%
}
\newcommand*{\@bigcupdot}[2]{%
  \ooalign{%
    $\m@th#1\bigcup$\cr
    \sbox0{$#1\bigcup$}%
    \dimen@=\ht0 %
    \advance\dimen@ by -\dp0 %
    \sbox0{\scalebox{2}{$\m@th#1\cdot$}}%
    \advance\dimen@ by -\ht0 %
    \dimen@=.5\dimen@
    \hidewidth\raise\dimen@\box0\hidewidth
  }%
}
\begin{document}

\maketitle

\begin{abstract}
We design the first efficient sensitivity oracles and dynamic algorithms for a variety of parameterized problems. Our main approach is to modify the algebraic coding technique from static parameterized algorithm design, which had not previously been used in a dynamic context. We particularly build off of the `extensor coding' method of Brand, Dell and Husfeldt [STOC'18], employing properties of the exterior algebra over different fields.

For the \textsc{$k$-Path} detection problem for directed graphs, it is known that no efficient dynamic algorithm exists (under popular assumptions from fine-grained complexity). We circumvent this by designing an efficient sensitivity oracle, which preprocesses a directed graph on $n$ vertices in $2^k \poly(k) n^{\omega+o(1)}$ time, such that, given $\ell$ updates (mixing edge insertions and deletions, and vertex deletions) to that input graph, it can decide in time $\ell^2 2^k\poly(k)$ and with high probability, whether the updated graph contains a path of length $k$. We also give a deterministic sensitivity oracle requiring $4^k \poly(k) n^{\omega+o(1)}$ preprocessing time and $\ell^2 2^{\omega k + o(k)}$ query time, and obtain a randomized sensitivity oracle for the task of approximately counting the number of $k$-paths. For \textsc{$k$-Path} detection in undirected graphs, we obtain a randomized sensitivity oracle with $O(1.66^k n^3)$ preprocessing time and $O(\ell^3 1.66^k)$ query time, and a better bound for undirected bipartite graphs.

In addition, we present the first fully dynamic algorithms for a variety of problems: \textsc{$k$-Partial Cover}, \textsc{$m$-Set $k$-Packing}, \textsc{$t$-Dominating Set}, \textsc{$d$-Dimensional $k$-Matching}, and \textsc{Exact $k$-Partial Cover}. For example, for \textsc{$k$-Partial Cover} we show a randomized dynamic algorithm with $2^k \poly(k)\polylog(n)$ update time, and a deterministic dynamic algorithm with $4^k \poly(k)\polylog(n)$ update time.
Finally, we show how our techniques can be adapted to deal with natural variants on these problems where additional constraints are imposed on the solutions.
\end{abstract}

	\section{Introduction}
	The area of dynamic algorithms studies how to quickly and efficiently solve computational problems when the input data is changing. For example, if $P$ is a property of a graph, then a dynamic graph algorithm for $P$ is a data structure which maintains an $n$-node graph $G$, and can handle updates which insert or remove an edge of $G$, and queries which ask whether $G$ currently satisfies $P$. Efficient dynamic algorithms, which handle updates and queries in $n^{o(1)}$ time, have been designed for many important problems, and they are used in many applications, both as ways to analyze evolving data, and as subroutines of larger algorithms which need to iterate over and check many similar possibilities. See, for instance, the recent survey~\cite{dynamic-algs-survey}.
	
	However, there are many prominent dynamic problems which would have many applications, but for which we do not have efficient algorithms. Often times, we even have conditional lower bounds from fine-grained complexity, showing that efficient dynamic algorithms for these problems are unlikely to exist (see e.g.~\cite{AW-lbs, HKNS-lbs, patrascu-lbs} and~\cite[{Section 2.1}]{dynamic-algs-survey}). It has recently become popular to circumvent such lower bounds by instead designing a \emph{sensitivity oracle} for the problem, a weaker notion which can still be used in many applications.
	
	Let $\ell$ be a positive integer. A sensitivity oracle for a dynamic problem, with sensitivity $\ell$, preprocesses an initial input, and must answer queries where $\leq \ell$ changes are made to the \emph{initial} input. For example, if $P$ is a property of a graph, then a graph algorithm for $P$ with sensitivity $\ell$ is a data structure which preprocesses an initial graph $G$, and can handle queries where $\ell$ edges are updated (inserted or removed) in the initial graph $G$, and asks whether $P$ is still satisfied. One can imagine `resetting' $G$ back to its original state after each query\footnote{Sensitivity oracles are sometimes referred to as `fault-tolerant' or `emergency planning' algorithms in the literature. For graph problems, these terms also sometimes refer to the decrement-only case (where edge updates only remove edges), but following~\cite[{Section A.1}]{so-hardness}, we use `sensitivity oracle' to refer to the fully dynamic case, where edges can be inserted and deleted.}.
	
	In this paper, we study dynamic algorithms and sensitivity oracles for \emph{parameterized} problems. Consider, for instance, the \textsc{$k$-Path} problem: given a positive integer $k$, in an $n$-node graph $G$ (directed or undirected), determine whether there is a path of length $k$. This problem is \textsf{NP}-complete, so we should not hope for a dynamic algorithm with update time $n^{o(1)}$ (such a dynamic algorithm could be used to solve the static problem in $n^{2 + o(1)}$ time!). However, \textsc{$k$-Path} is known to be fixed-parameter tractable (\textsf{FPT}), and can be solved in time $2^{O(k)} \cdot n^2$~\cite{narrow-sieves,williams-k-path,best-kpath-directed}, which is sufficiently efficient when $k$ is small. We can thus hope for dynamic parameterized algorithms for the problem, with update time $f(k) \cdot n^{o(1)}$. And indeed, a recent line of work has designed efficient dynamic parameterized algorithms with such a running time for many different problems, typically by using dynamic variants on classic techniques from the parameterized algorithms literature like kernelization and color coding. For the \textsc{$k$-Path} problem in \emph{undirected} graphs, such an algorithm is known with update time $k! \cdot 2^{O(k)} \cdot \polylog(n)$~\cite{dynamic-parameterized}, and another with amortized update time $2^{O(k^2)}$~\cite{dynamic-elim-forests}.
	
	By contrast, no efficient dynamic parameterized algorithm for \textsc{$k$-Path} in \emph{directed} graphs is known. Moreover, Alman, Mnich and Vassilevska~\cite{dynamic-parameterized} proved a conditional lower bound, that it does not have such an efficient dynamic parameterized algorithm assuming any one of three popular conjectures from fine-grained complexity theory (the \textsc{3SUM} conjecture, the \textsc{Triangle} conjecture, and a `\textsc{Layered Reachability Oracle}' conjecture they introduce, which concerns a special case of a more popular hypothesis about reachability oracles).
	
	This leads naturally to the two main questions we address in this paper. The first asks whether there is an analogue of the aforementioned line of work on sensitivity oracles for problems without efficient dynamic algorithms in the parameterized setting.
	
	\begin{question} \label{question:1}
	Is there an efficient parameterized sensitivity oracle for \textsc{$k$-Path} in directed graphs?
	\end{question}
	
	We say that a sensitivity oracle for a parameterized problem, with parameter $k$, is \emph{efficient} if its preprocessing time is $f(k) \cdot \poly(n)$ for some computable function $f$, and its query time is $\poly(\ell) \cdot g(k) \cdot n^{o(1)}$ for some computable function $g$ (where $\ell$ is the sensitivity parameter, i.e., the number of updates allowed per query). In the case of \textsc{$k$-Path} in directed graphs, and other graph problems, we specifically seek such an efficient sensitivity oracle in the \emph{fully dynamic} setting where queries can change any $\ell$ edges by inserting and deleting them.
	
	It is natural to ask that the query time has a \emph{polynomial} dependence on $\ell$ (rather than, say, just a $f(\ell)$ dependence), as we do here, for two reasons. First, this is the dependence one would get by converting an efficient dynamic algorithm into a sensitivity oracle. Second, with our definition, any parameterized problem with an efficient sensitivity oracle is in \textsf{FPT} via the algorithm where the sensitivity oracle preprocesses an empty graph and then gets the full input graph as a query (but this would not be true if an arbitrary $f(\ell)$ term were allowed in the query time). Along the way to answering Question~\ref{question:1}, we will also address more precisely the relationship between the classes of parameterized problems with efficient dynamic algorithms, efficient sensitivity oracles, and efficient static algorithms (a.k.a. the class \textsf{FPT}). 
	
	To our knowledge, such a \emph{fully dynamic} notion of sensitivity oracles for parameterized problems has not been previously studied. The closest prior work is very recent~\cite{so-decremental} which considered a similar but \emph{only decremental} setting, wherein queries may only delete edges from the graph, and not insert new edges. They design very elegant decremental sensitivity oracles for directed \textsc{$k$-Path} and for $k$-Vertex Cover, but their preprocessing and query times have exponential dependence on $\ell$ and hence are not `efficient' as we defined above. We also give evidence that the techniques of~\cite{so-decremental} cannot extend to the fully-dynamic setting; see \Cref{sec:relatedwork} below for more details.
	
	The second question we address is inspired by prior work on static algorithms for \textsc{$k$-Path}. Many fundamental techniques in the literature on parameterized algorithms were first introduced to study the \textsc{$k$-Path} problem. One such technique, algebraic coding (sometimes called `monomial testing' or `multilinear monomial detection'), is used in the current fastest static randomized algorithms for \textsc{$k$-Path}, and has also been used in other applications in algebraic complexity theory \cite{algebraic-problems, best-kpath-directed, extensor-coding, patching-colors, narrow-sieves, koutis-constraints}. Nonetheless, to our knowledge, these techniques have not been used in a dynamic or sensitivity setting before.
	
	\begin{question} \label{question:2}
	Can algebraic coding techniques from the design of parameterized algorithms be used to design efficient dynamic algorithms or sensitivity oracles?
	\end{question}
	
	A positive answer to Question~\ref{question:2} could lead to efficient dynamic algorithms or sensitivity oracles for a host of parameterized problems.
	
	\subsection{Our results}
	
	Let $\omega < 2.373$ be such that we can multiply two $n \times n$ matrices in $O(n^\omega)$ arithmetic operations~\cite{matrix-mult-exponent}. Our first main result gives a positive answer to Question~\ref{question:1}.
	\begin{theorem} \label{thm:main1}
	    The \textsc{$k$-Path} problem in directed graphs has an efficient parameterized sensitivity oracle. It can be solved with\footnote{We work in the word-RAM model of computation with $w$-bit words for $w = O(\log n)$. Hence, only $O(\ell)$ words are needed to specify the $\ell$ edges to change in a query, and we can achieve query times independent of $n$.}:
	    \begin{itemize}
	        \item a Monte Carlo randomized algorithm with preprocessing time $2^k \poly(k) n^\omega$ and query time $\ell^2 2^k \poly(k)$, or
	        \item a deterministic algorithm with preprocessing time $4^k \poly(k) n^\omega$ and query time $\ell^2 2^{\omega k}$.
	    \end{itemize}
	    In addition to edge insertion and deletions, these algorithms also allow for vertex failures as part of the $\ell$ updates per query.
	\end{theorem}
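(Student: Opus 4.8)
The plan is to re-express the core of the static extensor-coding algorithm for directed \textsc{$k$-Path} as a single matrix power, so that an $\ell$-fold edge-and-vertex update becomes a low-rank perturbation of that matrix; the oracle then precomputes enough powers of the unperturbed matrix (over an exterior algebra) so that a query reduces to a small dynamic program over an $O(\ell)$-dimensional ``interface space,'' with no residual dependence on $n$.

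Recall the extensor-coding setup: assign each vertex $v$ an element $\xi_v$ of an exterior algebra $\mathcal A$ (over a field of characteristic $2$ for the randomized version, giving $\dim\mathcal A = 2^k\poly(k)$; over characteristic $0$ for the deterministic version, where $\mathcal A$ is larger and its product is realized by multiplication of $2^k\times 2^k$ matrices). The walk sum $\Phi := \sum_W \xi_{w_1}\cdots\xi_{w_k}$, over all length-$k$ walks $W=(w_1,\dots,w_k)$ in $G$, has the property that walks repeating a vertex contribute $0$, and a fixed linear functional on $\Phi$ decides (with high probability, resp.\ always) whether $G$ has a $k$-path. Since $\Phi_{i+1}[v] = \sum_{u:(u,v)\in E}\Phi_i[u]\,\xi_v$ with $\Phi_1=(\xi_v)_v$ and $\Phi=\sum_v\Phi_k[v]$, the map $\Phi_i\mapsto\Phi_{i+1}$ is a linear operator $\mathcal T=\mathcal T(A)$ on $\FF^V\otimes\mathcal A$ depending linearly on the adjacency matrix $A$, and $\Phi=\mathbf 1^{\mathsf T}\mathcal T^{k-1}\Phi_1$.

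For a query, let $\Delta$ be the induced change in $A$ and $S$ the $O(\ell)$ affected vertices; then $\mathcal T(A+\Delta)=\mathcal T+\mathcal P$. The edge-update part of $\mathcal P$ factors as $\mathcal L\mathcal R\mathcal C$ where $\mathcal C$ reads $O(\ell)$ coordinate-combinations of $\FF^V\otimes\mathcal A$, $\mathcal R$ acts only on the $\mathcal A$-factor, and $\mathcal L$ writes into $O(\ell)$ coordinates. Vertex deletions are subtler because deleting $v$ changes $\Theta(n)$ entries of $A$, and a brute-force inclusion--exclusion over the deleted vertices would cost $2^{\ell}$; the fix is to observe that the ``spread along edges leaving a deleted vertex'' piece of $\mathcal P$ equals one application of $\mathcal T$ following a read of a single coordinate, so it can be absorbed into an extra $\mathcal T$-step rather than enlarging the interface. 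Writing $(\mathcal T+\mathcal P)^{k-1}=\sum_{t}\sum_{a_0+\cdots+a_t=k-1-t}\mathcal T^{a_0}\mathcal P\cdots\mathcal P\mathcal T^{a_t}$ and aggregating over $t$ (a by-now-standard iterated low-rank-update expansion), one maintains $\Psi_s:=(\mathcal T+\mathcal P)^s\Phi_1$ \emph{implicitly} as $\Psi_s=\mathcal T^s\Phi_1+\sum_{a<s}\mathcal T^a\mathcal L\zeta_{s,a}$ with $\zeta_{s,a}\in\FF^{O(\ell)}\otimes\mathcal A$: the step $s\to s+1$ shifts the $\zeta$'s and sets $\zeta_{s+1,0}=\mathcal R\big(\mathcal C\mathcal T^s\Phi_1+\sum_{a<s}(\mathcal C\mathcal T^a\mathcal L)\,\zeta_{s,a}\big)$ (plus a contribution to $\zeta_{s+1,1}$ from the deleted vertices). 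The preprocessing computes, by $O(k)$ rounds of matrix multiplication over $\mathcal A$, the $\mathcal A$-valued families from which the query reads off $\mathcal C\mathcal T^a\Phi_1$, $\mathbf 1^{\mathsf T}\mathcal T^a\mathcal L$, and the $O(\ell)\times O(\ell)$ interface matrices $\mathcal C\mathcal T^a\mathcal L$ for $a=0,\dots,k$ — essentially the all-pairs length-$a$ walk sums of $G$ together with two one-extra-step variants needed for vertex deletions; the dominant cost per round is $\dim(\mathcal A)$ ordinary $n\times n$ matrix multiplications, for $2^k\poly(k)n^{\omega}$ total (resp.\ $4^k\poly(k)n^{\omega}$), using $\poly(n)\cdot 2^k\poly(k)$ space so the needed $O(\ell)$-sized submatrices can be extracted at query time. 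A query then runs the recurrence for $s=0,\dots,k-1$ and evaluates $\mathbf 1^{\mathsf T}\Psi_{k-1}$, doing $O(k)$ products of an $O(\ell)\times O(\ell)$ matrix with an $O(\ell)$-vector at each step, hence $O(k^2\ell^2)$ multiplications in $\mathcal A$ in total.

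The main obstacle is getting the exponential factor in the \emph{query} time down to $2^k$ in the randomized case: the $O(k^2\ell^2)$ multiplications in $\mathcal A$ would cost $\Theta(\dim(\mathcal A)^2)=\Theta(4^k)$ each if done entrywise. The resolution is to use that the degree-$2^k$ exterior algebra admits \emph{near-linear-time} multiplication — over characteristic $2$ the exterior product is exactly a subset convolution, computable in $2^k\poly(k)$ time — so each multiplication in $\mathcal A$ costs only $2^k\poly(k)$ and the query runs in $\ell^2 2^k\poly(k)$; the deterministic variant, whose algebra multiplication reduces to multiplying two $2^k\times 2^k$ matrices, instead gets query time $\ell^2 2^{\omega k}$. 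The remaining points to nail down are the exact bookkeeping for vertex deletions inside the implicit representation of $\Psi_s$, and verifying that every object the query touches has size independent of $n$, which is precisely why the walk-sum tables must be stored for all vertex pairs.
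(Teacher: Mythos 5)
Your proposal is correct in its essentials and reaches the stated bounds, but it executes the key step differently from the paper. Both solutions share the same master plan: precompute all-pairs extensor walk sums so that a query only has to do algebra on an $O(\ell)\times O(\ell)$ interface of affected vertices, with characteristic-2 subset convolution (resp.\ the $\Lambda(\QQ^{2k})$ lift and fast Clifford-style multiplication) giving the $2^k\poly(k)$ (resp.\ $2^{\omega k}$) per-product cost. Where you diverge: the paper aggregates walks of \emph{all} lengths into a single table $Q$ (plus row/column sums $S,F$ and the total $\Z$), relying on the grading of the exterior algebra to isolate length-$k$ walks at the end, and proves one closed-form correction $\Znew=\Z+\sum_{i=1}^{k}F'^{\mathsf T}\Delta(Q'\Delta)^{i-1}S'$ with $\Delta=\Delta^+-\Delta^-$, whose correctness needs a short inclusion–exclusion/counting argument ($\sum_{b'}\binom{b}{b'}(-1)^{b'}$) because $Q$ still contains the deleted edges; you instead keep lengths separated and expand $(\mathcal T+\mathcal P)^{k-1}$ as an exact iterated low-rank update, so correctness is pure linearity with no combinatorial cancellation argument, at the price of storing length-indexed tables (a factor-$k$ overhead in space and bookkeeping, harmless for the claimed times). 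For vertex failures the paper preprocesses a split graph ($v_{\mathrm{in}},v_{\mathrm{out}}$ with $\chi(v_{\mathrm{in}})=1$), turning a vertex deletion into a single edge deletion in the same algebra, which is somewhat cleaner than your absorption of the dense row/column perturbation into extra $\mathcal T$-steps, though your mechanism does work with the ``one-extra-step'' tables you describe. One point you should repair when writing this up: as you recall the static primitive, $\Phi=\sum_W\xi_{w_1}\cdots\xi_{w_k}$ over characteristic $2$ with only vertex extensors is \emph{not} a sound randomized test --- two distinct $k$-paths on the same vertex set contribute equal determinants mod $2$ and cancel (e.g.\ $a\to b\to c$ and $a\to c\to b$) --- so, as in the paper, you must put independent random edge weights $y_{uv}$ on the entries of $\mathcal T$ (including weights reserved for edges that may be inserted at query time, carried by your $\mathcal R$) and invoke DeMillo--Lipton--Schwartz--Zippel; this slots into your framework without change since $\mathcal T$ is already linear in the (now weighted) adjacency matrix. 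With that fix, and modulo the $\poly(k)$ slack in the deterministic query bound that the paper also incurs, your argument establishes the theorem.
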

	
	Although \textsc{$k$-Path} is known to not have an efficient dynamic parameterized algorithm (assuming the aforementioned \textsc{3SUM}, \textsc{Triangle}, or \textsc{Layered Reachability Oracle} hardness assumptions from fine-grained complexity), \Cref{thm:main1} shows it does have an efficient parameterized sensitivity oracle. Since the reductions used in \cite{dynamic-parameterized} are still valid in the sensitivity setting, one corollary is that the sensitivity versions of the \textsc{3SUM}, \textsc{Triangle}, and \textsc{Layered Reachability} problems have efficient algorithms (although there are simple algorithms showing this for \textsc{3SUM} and \textsc{Triangle} which do not go through \textsc{$k$-Path}; see \Cref{sec:dynamic-vs-sensitivity} for more details). 
	
	It follows from prior work~\cite{so-hardness} that assuming another popular conjecture, the Strong Exponential Time Hypothesis (\textsf{SETH}), there are problems in \textsf{FPT} which do not have efficient parameterized sensitivity oracles (one example is the counting version of the single-source reachability problem; see again \Cref{sec:dynamic-vs-sensitivity} for more details).
	Hence, assuming popular conjectures from fine-grained complexity, it follows that the class of parameterized problems with an efficient sensitivity oracle lies \emph{strictly between} the class of parameterized problems with an efficient dynamic algorithm, and the class \textsf{FPT} (i.e., both class inclusions are strict).
	
	Our sensitivity oracle for \Cref{thm:main1} uses $\Theta(n^2)$ space (for constant $k$), and it is natural to wonder whether this can be improved, especially since known dynamic parameterized algorithms for many other problems use much smaller space (e.g., many which dynamically maintain a small kernel~\cite{dynamic-parameterized}). However, we prove unconditionally that the space usage of our algorithm cannot be improved:
	\begin{theorem} \label{thm:main-space}
	    Any randomized or deterministic sensitivity oracle for \textsc{$k$-Path} in directed graphs which handles edge insertion queries must use $\Omega(n^2)$ space.
	\end{theorem}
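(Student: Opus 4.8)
The plan is to prove this unconditional lower bound by an information-theoretic incompressibility argument: the memory state produced by preprocessing a suitable graph must let us read off an arbitrary bipartite graph on about $n/2+n/2$ vertices, which carries $\Omega(n^2)$ bits. I assume $k\ge 3$ (for $k\le 2$ one checks directly that $O(n)$ space suffices, so the statement concerns the regime $k\ge 3$), and that a query may insert at least two edges. Set $m=\lfloor (n-k+1)/2\rfloor$. For a subset $W\subseteq [m]\times[m]$, let $G_W$ be the directed graph on the vertex set $\{s\}\cup\{a_1,\dots,a_m\}\cup\{b_1,\dots,b_m\}\cup\{t,p_1,\dots,p_{k-3}\}$ (padded to exactly $n$ vertices by isolated vertices) whose edges are the \emph{data edges} $a_i\to b_j$ for each $(i,j)\in W$, together with the fixed \emph{tail} $t\to p_1\to\cdots\to p_{k-3}$. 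Every edge of $G_W$ either goes from an $a$-vertex to a $b$-vertex or lies on the tail, so $G_W$ has no path with more than $k-1$ edges.

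The key claim is that a single edge-insertion query reveals one bit of $W$. Given $(i,j)\in[m]^2$, consider the query that inserts the two edges $s\to a_i$ and $b_j\to t$; I claim the resulting graph $G'$ contains a path with $k$ edges iff $(i,j)\in W$. If $(i,j)\in W$, then $s\to a_i\to b_j\to t\to p_1\to\cdots\to p_{k-3}$ is such a path. Conversely, in $G'$ the vertex $s$ has in-degree $0$, $a_i$ is the only $a$-vertex with an incoming edge, $b_j$ is the only $b$-vertex with an outgoing edge, $t$ has the unique incoming edge $b_j\to t$, and the tail is a simple directed path ending at the sink $p_{k-3}$. A short case analysis (on whether a hypothetical long path reaches the tail, whether it reaches $t$, and where it starts) shows that any path with at least $k$ edges must be exactly $s\to a_i\to b_j\to t\to p_1\to\cdots\to p_{k-3}$, and in particular must use the data edge $a_i\to b_j$; hence $(i,j)\in W$.

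With this gadget the deterministic case is immediate: the map sending $W$ to the memory state of the oracle after preprocessing $G_W$ must be injective, since from that memory we can answer the $m^2$ queries above and reconstruct $W$, so some memory state has at least $m^2=\Omega(n^2)$ bits. For a Monte Carlo oracle with per-query error at most $1/3$, run the same reduction with the oracle's random bits $R$ treated as public: preprocessing on $G_W$ yields a memory state $M$ (a function of $W$ and $R$), and for each $(i,j)$ the query above, executed on $M$, returns a bit $\widehat{W}_{ij}$ that is a deterministic function of $(M,R)$ with $\Pr_{W,R}[\widehat{W}_{ij}\ne W_{ij}]\le 1/3$ when $W$ is uniform. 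By Fano's inequality, $H(W\mid M,R)\le m^2\cdot H_2(1/3)$, where $H_2$ is the binary entropy function, so $I(W;M\mid R)\ge (1-H_2(1/3))\,m^2=\Omega(m^2)$; hence $M$ takes $2^{\Omega(m^2)}$ distinct values and some memory state has $\Omega(m^2)=\Omega(n^2)$ bits. (Equivalently: fix $R$ minimizing the expected number of wrongly decoded bits, and encode $W$ by $M$ together with a correction list of $H_2(1/3)\,m^2+o(m^2)$ bits.)

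The main thing to get right is the gadget-correctness claim of the second step: one must verify both that $G_W$ is genuinely $(k-1)$-path-free and that after the two probe insertions the \emph{only} length-$k$ path is the canonical one through the data edge $a_i\to b_j$, so that the recovered bit is pinned to a single coordinate of $W$ (and not to, say, ``row $i$ of $W$ is nonempty''). The remaining mild subtlety, on the randomized side, is that one cannot boost the per-query error away by repetition, since it may be inherent to a fixed memory state; this is why the decoding tolerates a constant fraction of errors, which still suffices because $H_2(1/3)<1$. Finally, since the sensitivity oracle of \Cref{thm:main1} uses $\Theta(n^2)$ space for constant $k$, this lower bound shows that its space is optimal up to the dependence on $k$, and that only two edge insertions per query are needed to force it.
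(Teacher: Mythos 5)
Your proof is correct, and its skeleton is the same as the paper's: a family of bipartite "data" graphs whose edge set must be recoverable from the preprocessed state via two-edge insertion probes of the form $s\to u$, $v\to t$, giving an incompressibility bound of $\Omega(n^2)$ bits. The deterministic half is essentially identical to the paper's pigeonhole argument (the paper works with $k=4$ and a collision between two graphs; you handle general $k$ explicitly by appending a tail $t\to p_1\to\cdots\to p_{k-3}$, which is the natural way to make the paper's "it is enough to consider $k=4$" reduction concrete — just note the paper measures path length in vertices, so the tail length should be shifted by one to match its convention). Where you genuinely diverge is the randomized case: the paper first amplifies the per-query success probability to $99\%$, then fixes a large subfamily of graphs with pairwise Hamming distance $\Omega(n^2)$ (a Gilbert--Varshamov-style probabilistic construction), fixes the randomness by averaging, and argues the resulting deterministic oracle still distinguishes all graphs in the subfamily; you instead decode all $m^2$ bits with constant per-bit error and invoke Fano's inequality to lower bound $I(W;M\mid R)$. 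Your route is cleaner in that it needs no code construction and, importantly, no amplification — a real advantage, since the paper's amplification step is delicate for a decision problem when only "success probability bounded away from $0$" is assumed (majority voting does not obviously help below $1/2$, and the randomness is fixed at preprocessing, as you correctly point out); on the other hand, the paper's distance-based argument is what lets it state the theorem under that weaker success-probability hypothesis, whereas your Fano bound needs error bounded away from $1/2$ (e.g.\ the standard $1/3$), which matches the usual Monte Carlo setting of \Cref{thm:main1}.
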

	
	We also give three additional algorithmic results to complement \Cref{thm:main1}. First, we extend \Cref{thm:main1} to give an algorithm for approximately counting $k$-paths:
	\begin{theorem} \label{thm:main-approx-counting}
	    There is a randomized efficient parameterized sensitivity oracle which approximately counts the number of $k$-paths in an $n$-node, $m$-edge directed graph. For any $\epsilon > 0$, it produces an estimate to the number of $k$-paths in the graph that, with probability $> 99 \%$, is within $\epsilon$ relative error, with preprocessing time $\epsilon^{-2} \cdot 4^k \poly(k) \cdot \min\{ mn, n^\omega \}$ and update time $O\left(\epsilon^{-2} \cdot \ell^2 \cdot 2^{\omega k}\right)$.
	    In addition to edge insertion and deletions, it also allows for vertex failures as part of the $\ell$ updates per query.
	\end{theorem}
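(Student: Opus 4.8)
The plan is to combine the extensor-coding estimator of Brand, Dell and Husfeldt~\cite{extensor-coding} for counting $k$-paths with the dynamic exterior-algebra machinery behind \Cref{thm:main1}, averaged over many independent random vertex labelings. Recall the extensor-coding setup over a base field of characteristic $0$: one draws, for each vertex $v$, an i.i.d.\ random vector $\xi(v)$ in the degree-$1$ part of an exterior algebra, and forms $Z_\xi := \sum_{W}\xi(v_1)\wedge\dots\wedge\xi(v_k)$, the sum over all length-$(k{-}1)$ walks $W=(v_1,\dots,v_k)$. A repeated vertex makes the corresponding wedge vanish, so $Z_\xi$ only records $k$-paths, and from $Z_\xi$ (with Gaussian or Vandermonde entries truncated to $\poly(k)$ bits, together with a few auxiliary walk-sums) they build an \emph{unbiased} estimator $\widehat p_\xi$ of the number $p_k(G)$ of $k$-paths with $\operatorname{Var}[\widehat p_\xi]\le\poly(k)\cdot p_k(G)^2$, whose evaluation is one exterior-algebra walk-sum computation in a suitably lifted algebra of dimension $4^k$, costing $4^k\poly(k)\cdot\min\{mn,n^\omega\}$ time (the $mn$ variant coming from a sparse matrix–vector implementation of the walk-sum, as in \cite{extensor-coding}). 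I would first reproduce this static estimator verbatim.

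The key point is that, for a \emph{fixed} labeling $\xi$, computing $\widehat p_\xi$ is exactly the kind of exterior-algebra walk quantity handled by the construction behind \Cref{thm:main1}: unroll $G$ into its $(k{+}1)$-layer DAG, tensor the layered adjacency matrix with the lifted exterior algebra to obtain a nilpotent matrix $B$ over the base field, and observe that $Z_\xi$ (and the auxiliary walk-sums entering $\widehat p_\xi$) are fixed bilinear forms in $(I-B)^{-1}$ whose seed and read-off vectors are supported on the outer layers. An update changing $\ell$ edges of $G$, or failing $\ell$ vertices, touches only $O(\ell k)$ blocks of $B$ — one block per layer per changed edge, while a failed vertex zeroes the coordinates of all its copies — hence perturbs $B$ by rank $\ell\cdot 2^{O(k)}$, so a Sherman--Morrison--Woodbury correction recovers the updated value of each bilinear form from precomputed sub-blocks of $(I-B)^{-1}$. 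Running the deterministic oracle of \Cref{thm:main1} essentially verbatim — but over the characteristic-$0$ base field and on a random labeling, reading off scalar values rather than testing for zero — therefore yields, per labeling, the same bounds as \Cref{thm:main1}: preprocessing $4^k\poly(k)\min\{mn,n^\omega\}$ and query time $\ell^2 2^{\omega k}$.

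Finally, I would take $r=\Theta(\poly(k)\,\epsilon^{-2})$ independent copies of this dynamic oracle with labelings $\xi^{(1)},\dots,\xi^{(r)}$ drawn before any updates, and on a query apply the update to all copies and output $\tfrac1r\sum_{i=1}^r\widehat p_{\xi^{(i)}}$ evaluated on the updated graph $G'$. Since the $\xi^{(i)}$ are independent of the update sequence, each $\widehat p_{\xi^{(i)}}$ is still unbiased for $p_k(G')$ with variance $\le\poly(k)\,p_k(G')^2$, so for a suitable $r=\Theta(\poly(k)\,\epsilon^{-2})$ the average has variance $\le\tfrac{\epsilon^2}{100}\,p_k(G')^2$ and Chebyshev's inequality gives relative error $\le\epsilon$ with probability $>99\%$; multiplying the per-copy costs by $r$ gives preprocessing $\epsilon^{-2}4^k\poly(k)\min\{mn,n^\omega\}$ and query time $O(\epsilon^{-2}\ell^2 2^{\omega k})$. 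The probabilistic step is thus immediate once \cite{extensor-coding} is invoked; the part I expect to require real care is the plumbing — checking that the scalar-valued estimator, in its dimension-$4^k$ lifted form, slots into the \Cref{thm:main1} oracle unchanged over a characteristic-$0$ field, including that the small matrix inverted in the Woodbury step is nonsingular (which holds with high probability, or is forced by the same symbolic-perturbation device used for \Cref{thm:main1}), and that a vertex failure really does induce a single low-rank change to $B$ across all $k$ layers simultaneously.
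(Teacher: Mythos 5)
Your proposal is in essence the paper's own proof: the paper obtains \Cref{thm:main-approx-counting} by rerunning the \Cref{thm:main1} oracle with the Vandermonde vectors replaced by lifts $\bar{\chi}(v)$ of uniformly random vectors in $\{1,-1\}^k$ (and all edge variables set to $1$), observing that $[e_{[2k]}]\Znew$ then equals exactly the Brand--Dell--Husfeldt counting statistic on the updated graph, and averaging $O(\epsilon^{-2})$ independent copies whose randomness is fixed at preprocessing and hence independent of the queries. The one caveat is your gloss of how the \Cref{thm:main1} machinery maintains these quantities (layered-DAG unrolling, a rank-$\ell\cdot 2^{O(k)}$ perturbation, and a Sherman--Morrison--Woodbury correction requiring inversion of a capacitance matrix over the exterior algebra): that is not how the paper's oracle works and would itself need justification, since the exterior algebra is full of non-units and the paper uses no symbolic-perturbation or genericity device anywhere in \Cref{thm:main1}. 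Instead the paper maintains $\Znew$ via the exact inclusion--exclusion identity $\Znew = \Z + \sum_{i=1}^{k} F'^T\Delta(Q'\Delta)^{i-1}S'$ of \Cref{lem:Z-new}, which involves no matrix inversion and is valid verbatim for arbitrary vertex extensors and edge values, so substituting the random lifted vectors requires no change at all. Since you ultimately invoke the \Cref{thm:main1} oracle as a black box rather than relying on the Woodbury description, this does not affect correctness; the only other (minor) discrepancy is your $r=\Theta(\poly(k)\,\epsilon^{-2})$ repetition count, slightly coarser than the $O(\epsilon^{-2})$ applications the paper inherits from the variance analysis of \cite{extensor-coding}.
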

	
	Second, we present a randomized sensitivity oracle with a better dependence on $k$, at the cost of a worse dependence on $\ell$, but only for undirected graphs:
	
\begin{restatable}[Undirected graphs]{theorem}{undirectedfastso} \label{thm:so-undirected-fast}
	For the \textsc{$k$-Path} detection problem on an undirected graph $G$ on $n$ vertices, there exists a randomized sensitivity oracle with preprocessing time $O(1.66^k n^3)$ and query time $O(\ell^3 1.66^k)$.
\end{restatable}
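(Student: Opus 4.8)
The plan is to combine two ingredients. The first is the known $O^\ast(1.66^k)$-time randomized algorithm for \textsc{$k$-Path} detection in undirected graphs (Bj\"orklund, Husfeldt, Kaski and Koivisto~\cite{narrow-sieves}, recast in exterior-algebra terms in~\cite{extensor-coding}), which exploits the reversal symmetry of undirected graphs to perform the multilinear detection inside an algebra $\mathcal{R}$ of dimension only $\approx 2^{0.73k}=1.66^k$ instead of the $2^k$-dimensional exterior algebra needed for directed graphs. The second is the sensitivity-oracle template behind \Cref{thm:main1}: over the coding algebra, precompute walk-generating data between all pairs of vertices of the original graph, and then answer each query by collapsing everything onto the $O(\ell)$ vertices incident to the updates. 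These fit together because the undirected algorithm, like the directed one, ultimately reduces \textsc{$k$-Path} to extracting the degree-$k$ part of a resolvent $\vec{u}^{\,\top}(I-zA)^{-1}\vec{v}$ over $\mathcal{R}$, so the only change in the preprocessing is that one $\mathcal{R}$-operation now costs $1.66^k\poly(k)$ rather than $2^k\poly(k)$; the $\poly(k)$ factors will be absorbed into the gap between the true base ($\approx 1.657$) and the stated $1.66$, as $(1.66/1.657)^k$ eventually dominates every polynomial in $k$.

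For the preprocessing, I would fix the coding algebra $\mathcal{R}$ of dimension $1.66^k$ over a field of size $\poly(k,\ell)$ and characteristic $2$ (so that edge insertions and deletions become the same ``XOR in the change'' operation), weight each vertex by a generic element of $\mathcal{R}$ as in the base algorithm, form the $n\times n$ matrix $A$ over $\mathcal{R}$, and compute the truncation $(I-zA)^{-1}\bmod z^{k+1}$ by $\poly(k)$ matrix multiplications via Newton iteration. Each arithmetic operation is a product in $\mathcal{R}$, computable in $1.66^k\poly(k)$ time with fast zeta/Walsh--Hadamard-type transforms, for a total of $O(1.66^k n^3)$; storing the coefficients of $(I-zA)^{-1}\bmod z^{k+1}$ uses $O(n^2 k\cdot 1.66^k)$ words.

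For a query with failed vertices $F$, deleted edges $D$ and inserted edges $I$, $|F|+|D|+|I|\le\ell$, let $U:=F\cup V(D)\cup V(I)$ together with a virtual source and sink, so $|U|=O(\ell)$. Every $k$-path of the updated graph decomposes, at the $U$-vertices it visits, into segments that are walks of the \emph{original} graph with interiors disjoint from $U$; since every deleted edge and every failed vertex is incident to $U$, these segments use only surviving edges and avoid all failed vertices, while the exterior part of $\mathcal{R}$ kills exactly the decompositions in which a vertex repeats. The query then (a) extracts the $U\times U$ block $M_{UU}(z)$ of the precomputed resolvent and inverts it, which by the Schur-complement identity yields the generating matrix $\widetilde A(z)$ of $U$-interior-free segments between $U$-vertices; (b) applies length-one corrections for $D$ and $I$ (one XOR per changed pair, by characteristic $2$) and deletes the rows and columns indexed by $F$; and (c) feeds the resulting $O(\ell)\times O(\ell)$ instance $\widehat A(z)$ into the final step of the undirected algorithm, extracting the degree-$k$ coefficient of $\vec e_{s^\ast}^{\,\top}(I-z\widehat A(z))^{-1}\vec e_{t^\ast}$ and testing it for non-vanishing. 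Steps (a) and (c) perform $O(\ell^3)$ operations in $\mathcal{R}$ ($O(\ell)\times O(\ell)$ matrix inversions over a degree-$k$-truncated polynomial ring), giving query time $O(\ell^3\,1.66^k)$; one-sided error comes from Schwartz--Zippel and is amplified by repetition. For undirected bipartite graphs one runs the same reduction from the better bipartite base algorithm.

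The step I expect to be the main obstacle is (c). Whereas directed \textsc{$k$-Path} reduces to a plain resolvent, correctness of the fast undirected algorithm rests on a parity/cancellation argument tied to global structure of walks in an undirected graph (reversal symmetry, and in the narrow-sieve formulation a position-dependent labelling), so one must verify that the reduced instance on $O(\ell)$ super-vertices --- whose super-edges carry sums of precomputed walk-extensors rather than single edges --- is still a faithful undirected \textsc{$k$-Path} instance in which the cancellation annihilates precisely the non-paths, and that each vertex-extensor is counted with the correct multiplicity when segments are concatenated (in particular, that the midpoint of a $k$-path, which may be an unaffected vertex, is still seen by the combining step). A secondary, routine point is that $M_{UU}(z)$ is invertible over the degree-$k$-truncated power-series ring because its constant term is the identity, so step (a) is well defined, together with checking that no $\poly(k)$ or $\poly(\ell)$ overhead escapes the stated budget.
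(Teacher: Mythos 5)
Your query framework (collapse onto the $O(\ell)$ affected vertices, Schur-complement/resolvent inversion of the $U\times U$ block, length-one corrections for changed edges, delete rows/columns for failed vertices) is a sound way to do the bookkeeping for \emph{plain} walk sums, and would give an alternative derivation of the directed-case oracle. But for the $1.66^k$ bound it has a genuine gap, and it is exactly the point you flag in step (c) and then leave unresolved. The Bj\"orklund--Husfeldt--Kaski--Koivisto algorithm does \emph{not} reduce to extracting a coefficient of a plain resolvent $\vec u^{\top}(I-zA)^{-1}\vec v$ over a commutative algebra of dimension $1.66^k$. It requires (i) iterating over exponentially many ($\approx 1.015^k$) random partitions $V=V_1\cup V_2$, with the exterior algebra only of dimension $2^{k_1+k_2}\approx 1.63^k$; (ii) summing only over \emph{admissible} walks (no subpattern $vuv$ with $v\in V_2$, $u\in V_1$); and (iii) a symmetry-breaking variable attached to the first vertex of each walk. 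The characteristic-2 pairing argument that certifies ``nonzero coefficient $\Rightarrow$ $k$-path'' is valid only under these restrictions. This breaks your query scheme in both possible instantiations: if you precompute the plain resolvent $(I-zA)^{-1}$, then the quantity you reassemble at query time is a sum over all walks, for which the cancellation argument gives no guarantee; if instead you precompute sums over admissible walks, then the Schur-complement identity $B=I-(M_{UU})^{-1}$ is no longer valid, because admissibility is a constraint on consecutive vertex triples that couples adjacent segments across a junction --- admissible walks do not decompose at their $U$-visits into independent admissible, $U$-interior-avoiding segments, and conversely concatenating admissible segments need not yield an admissible walk. The first-vertex label raises the same decomposition problem (which segment carries it?).

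This is precisely where the paper's proof does its work: it precomputes admissible-walk sums $Q[u,v]$ together with two start-vectors $S$ and $\mathbf{S}$ (with and without the end-vertex variable), and at query time stitches segments through the changed edges using explicit admissibility corrections $\tilde S[s;t]$, $\tilde{\mathbf S}[s;t]$ and $\tilde A[t_1,s_2;s_1,t_2]$ (inclusion--exclusion at each junction on whether the junction creates a forbidden $vuv$ pattern), combined with a dynamic program over matrices $A_{a,r}$ indexed by the number of explicitly placed inserted and deleted edges, which guarantees every admissible walk of the updated graph with nonvanishing extensor is counted exactly once; all of this is then repeated for each of the $\approx 1.015^k$ partitions. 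Your clean inversion-based shortcut would need to be replaced by (or shown equivalent to) such junction-level corrections before the claimed $O(\ell^3\,1.66^k)$ query bound is justified; as written, the proposal establishes the directed-style template but not the undirected theorem. (Your treatment of deletions via length-one corrections at $U$ and of vertex failures via row/column removal is fine, since every deleted edge and failed vertex is incident to $U$; the problem is solely the admissibility/cancellation structure.)
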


Third, we obtain the following corollary for bipartite graphs:

\begin{restatable}[Undirected bipartite  graphs]{corollary}{undirectedbipartiteso} \label{cor:undirected-bipartite}
	For the \textsc{$k$-Path} detection problem on an undirected bipartite graph, where the partition of the vertices $V$ into the two sides $V=S \cup T$ is known in advance and holds after any updates, there exists a sensitivity oracle with $2^{k/2}\poly(k) n^3$ preprocessing time and $\ell^3 2^{k/2} \poly(k)$ query time.
\end{restatable}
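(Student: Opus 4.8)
The plan is to re-run the algebraic coding behind \Cref{thm:so-undirected-fast}, changing only the exterior algebra in which the relevant $k$-walk sums are evaluated, and exploiting the bipartite structure to shrink that algebra from dimension $\approx 1.66^k$ down to $2^{\lceil k/2\rceil}$. The point is the following: the $1.66^k$ factor in \Cref{thm:so-undirected-fast} should be read as $2^{k/2}$ — the cost of ``labeling'' with generic extensors only one of the two halves of the vertex set of a $k$-path — times an overhead incurred because, in a general undirected graph, one does not know in advance which $\approx k/2$ vertices to label, and repeats among the \emph{unlabeled} vertices must be removed by a sign-reversing involution whose bookkeeping inflates the base of the exponential by a constant factor. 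In a bipartite graph with a fixed partition $V = S \cupdot T$, a path on $k$ vertices alternates between $S$ and $T$ and hence has exactly $\lceil k/2\rceil$ of its vertices on one prescribed side — so we may simply fix the labeled set to be the $S$-side vertices of the path, removing the overhead and leaving a clean $2^{\lceil k/2\rceil}$.

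Concretely, I would work in the exterior algebra $\Lambda(\FF^{\lceil k/2\rceil})$ over a suitably large field $\FF$, give every vertex of $S$ an independent generic degree-one extensor and every vertex of $T$ the scalar $1$, attach independent random weights to the edges so that distinct $k$-paths do not cancel, and evaluate the $k$-walk sum via exactly the transfer-matrix / walk-sum computation used to prove \Cref{thm:so-undirected-fast}. A walk starting in $S$ contributes the wedge of the extensors of its $S$-vertices in walk order, which vanishes precisely when two of those $S$-vertices coincide; it remains to see that a walk with a repeated $T$-vertex also contributes zero, and this is where the alternating structure is used: the two occurrences of a repeated $T$-vertex bound a closed sub-walk of even edge-length, and a suitable ``reflection''-type involution on such walks flips the sign of the contribution (this is the bipartite specialization of the cancellation step inside \Cref{thm:so-undirected-fast}, which becomes uniform because the unlabeled side is always $T$ and the offending sub-walks always have even length). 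Hence only genuine $k$-paths survive, and the whole computation succeeds with the same probability guarantee as \Cref{thm:so-undirected-fast}.

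Since every extensor manipulated in this computation lives in $\Lambda(\FF^{\lceil k/2\rceil})$, each arithmetic operation now costs $2^{k/2}\poly(k)$ rather than $1.66^k$, and this is the only change: the data computed during preprocessing, and the rank-$O(\ell)$ correction performed at query time to account for the $\ell$ inserted/deleted edges and failed vertices (solved with $O(\ell^3)$ field operations, via the Sherman--Morrison--Woodbury-type update of \Cref{thm:so-undirected-fast}), are carried out verbatim. This yields preprocessing time $2^{k/2}\poly(k)\,n^3$ and query time $\ell^3\,2^{k/2}\poly(k)$. Two minor points: because the larger side of a $k$-vertex path can be $S$ or $T$ (for odd $k$ this depends on which side the path starts on), one runs the procedure twice, once labeling $S$ and once labeling $T$, at the cost of a factor $2$; and the hypothesis that the $S/T$ partition is fixed and preserved by all updates is exactly what is needed so that the preprocessed walk-sum data — which depends on the labeling, hence on the partition — remains valid across queries.

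The step I expect to be the main obstacle is verifying this cancellation argument in detail: defining the involution on repeated-$T$-vertex walks precisely (the naive ``reverse the offending sub-walk'' map has palindromic fixed points and must be refined), checking that the refined map is fixed-point-free, and confirming that paired walks contribute exactly opposite values over the chosen field and random weights. In the general undirected case this is the technical core of \Cref{thm:so-undirected-fast}; the bipartite setting should make it cleaner — a fixed unlabeled side and even-length offending sub-walks — but confirming that nothing breaks when the labeled set is \emph{forced} to be $S$ rather than chosen adaptively, and that this robustness persists under the rank-$O(\ell)$ query-time perturbation, is where the real work lies.
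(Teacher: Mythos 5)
Your proposal matches the paper's proof: the corollary is obtained by running the sensitivity oracle of \Cref{thm:so-undirected-fast} with the fixed bipartition playing the role of the random one, $V_1=S$ and $V_2=T$, so that there are no $V_2\times V_2$ edges, one sets $k_2=0$ and $k_1=k/2$, the algebra shrinks to $\Lambda(\FF^{k/2})$ of dimension $2^{k/2}$, and no repetition over random partitions is needed. The cancellation of walks repeating unlabeled ($T$-side) vertices that you flag as the main obstacle requires no new work: it is exactly the characteristic-2 pairing argument (admissible walks plus the first-vertex variable) already built into \Cref{thm:so-undirected-fast} and \cite{narrow-sieves}, which the corollary inherits verbatim.
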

	
	Our algorithm for \Cref{thm:main1} uses a new dynamic version of the algebraic coding technique, and more specifically, a recent implementation called `extensor coding' by Brand, Dell, and Husfeldt~\cite{extensor-coding}. See \Cref{sec:extensor-coding} and \ref{sec:overview} below for more background on this technique, and the new ideas we introduce to be able to use it in a dynamic or sensitivity setting. We are then able to use and further modify our approach to design efficient dynamic parameterized algorithms for many problems for which no such algorithm was previously known, positively answering Question~\ref{question:2}. We present the definitions for the problems we consider, followed by the results.

	\begin{restatable}[\textsc{$k$-Partial Cover}]{definition}{partialcover} \label{def:partial-cover}
		Given a collection of subsets $S_1,...,S_n\subseteq [N]$, find the minimum size of a sub-collection $T$ of these, for which $\big|\bigcup_{S\in T} S\big| \ge k$, or declare that no such $T$ exists.
	\end{restatable}
	\begin{restatable}[\textsc{$m$-Set $k$-Packing}]{definition}{setpacking} \label{def:m-set-k-packing}
	Given subsets $S_1,...,S_n\subseteq [N]$, all with size $|S_i|=m$, decide whether there exists a sub-collection of $k$ sets, which are all pairwise disjoint.
	\end{restatable}
	\begin{restatable}[\textsc{$t$-Dominating Set}]{definition}{tdominating} \label{def:t-dominating}
		Given an undirected graph $G$ on $n$ vertices, find the minimum size of a set of vertices $S\subseteq V(G)$ such that $|S\cup N(S)|\ge t$ where $N(S)$ is the set of neighbors of vertices in $S$, i.e., $N(S)=\bigcup_{v\in S} N(v)$.
	\end{restatable}
	\begin{restatable}[\textsc{$d$-Dimensional $k$-Matching}]{definition}{dimensionalmatching} \label{def:d-dim-k-matching}
		Fixing a universe $U=U_1\times U_2\times \ldots \times U_d$, where $U_i$ are pairwise disjoint of combined size $|\bigcupdot_i U_i|=N$, and given a collection of tuples $T\subseteq U$, decide whether $T$ contains a sub-collection of $k$ pairwise disjoint tuples. (We say the tuples $a,b \in U$ are disjoint if $a[i] \neq b[i]$ for all $i = 1,2, \ldots,d$.)
	\end{restatable}
	\begin{restatable}[\textsc{Exact $k$-Partial Cover}]{definition}{exactpartialcover} \label{def:exact-partial-cover}
		Given a collection of subsets $S_1,...,S_n\subseteq [N]$, decide whether there is a sub-collection $T$ of these, in which all sets are pairwise-disjoint, and $\big|\bigcupdot_{S\in T} S\big| = k$.
	\end{restatable}
	\begin{theorem}
	There are efficient dynamic parameterized algorithms for the following problems.
	We write $O^*$ here to hide factors that are polynomial in the parameters ($m, k, t$) and polylogarithmic in the size of the instance ($n$, $N$).
    \begin{itemize}
        \item 
    	For \textsc{$k$-Partial Cover} there are dynamic algorithms, with either randomized $O^*(2^k)$ or deterministic $O^*(4^k)$ update time (see \Cref{sec:partial-cover}).
	    \item
    	For \textsc{$m$-Set $k$-Packing} there are dynamic algorithms, with either randomized $O^*(2^{mk})$ or deterministic $O^*(4^{mk})$ update time (see \Cref{sec:set-packing}).
    	\item
		For \textsc{$t$-Dominating Set} there are dynamic algorithms, with either randomized $O^*(2^t)$ or deterministic $O^*(4^t)$ update time (see \Cref{sec:t-dominating}).
		\item
		For \textsc{$d$-Dimensional $k$-Matching} there are dynamic algorithms, with either randomized $O^*(2^{(d-1)k})$ or deterministic $O^*(4^{(d-1)k})$ update time (see \Cref{sec:dimensionalmatching}).
		\item
		For \textsc{Exact $k$-Partial Cover} there are dynamic algorithms, with either randomized $O^*(2^k)$ or deterministic $O^*(4^k)$ update time (see \Cref{sec:exact-partial-cover}).
	\end{itemize}
	\end{theorem}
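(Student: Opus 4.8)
The plan is to treat all five problems by the same two-part recipe: first, encode feasibility as the (non)vanishing of an element of a small exterior algebra, in the spirit of the extensor coding of~\cite{extensor-coding} and of the dynamic version of that technique developed for \Cref{thm:main1}; second, maintain that element under updates with a balanced ``product tree''. I describe the common core and then the per-problem encodings.

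\emph{The common core.} In each case the quantity of interest has the form
\[
  Q \;=\; \big[\text{degree truncation}\big]\ \prod_{i=1}^{n}\bigl(1 + y\cdot\gamma_i\bigr),
\]
where $y$ is a formal variable counting chosen objects and $\gamma_i$ is a gadget in an exterior algebra $\Lambda(\FF^{D})$ over a suitable field, with $D = O(k)$, $O(mk)$, $O(t)$, or $O((d-1)k)$ depending on the problem, so that an element of $\Lambda(\FF^{D})$ is stored as $2^{D}$ field entries (each of $O(\log(n+N))$ bits after a suitable choice of field). For the decision problems, feasibility is equivalent to $Q\neq 0$; for the two minimization problems (\textsc{$k$-Partial Cover}, \textsc{$t$-Dominating Set}) a finite optimum is at most $k$ (resp.\ $t$), since one object per covered universe-element already suffices, so it is enough to truncate the $y$-degree (and the exterior degree) at $k$ (resp.\ $t$) and return the least $j$ with $[y^{j}]Q\neq 0$. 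I maintain $Q$ with a segment-tree-style binary tree over the $n$ index slots: each leaf holds $1+y\gamma_i$ (or $1$ for an empty slot) and each internal node holds the truncated product of its children. A change to one $\gamma_i$ triggers recomputation only along its $O(\log n)$ root path, each recomputation being one truncated polynomial multiplication, i.e.\ $O(k^{2})$ products in $\Lambda(\FF^{D})$, each a signed subset convolution on a $D$-element ground set costing $\widetilde{O}(2^{D})$. This gives update time $2^{D}\poly(k)\polylog(n,N)$, which is the claimed $O^{*}$ bound once $D$ is fixed. Replacing the random extensors by the explicit construction of~\cite{extensor-coding} (exactly as in the deterministic half of \Cref{thm:main1}) removes all randomness at the cost of working over an exterior algebra of squared dimension, so every $2^{D}$ becomes $4^{D}$.

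\emph{The per-problem encodings.} For the packing-type problems I use the extensor encoding directly: give each universe element $e$ a random (deterministically: explicitly constructed) vector $\phi(e)\in\Lambda^{1}(\FF^{D})$, set the set-gadget to the wedge $\psi(S_i)=\bigwedge_{e\in S_i}\phi(e)$, and apply the ``lift'' of~\cite{extensor-coding} to control cancellations; then $\bigwedge_{i\in I}\psi(S_i)$ vanishes precisely when $\{S_i : i\in I\}$ is not pairwise disjoint and is nonzero with high probability otherwise. This yields: \textsc{$m$-Set $k$-Packing} with $D = mk$; \textsc{Exact $k$-Partial Cover} with $D=k$, where $\psi(S_i)$ has exterior degree $|S_i|$ (in particular it is $0$ once $|S_i|>k$), so the exterior-degree-$k$ part of $\prod_i(1+\psi(S_i))$ detects a pairwise-disjoint sub-collection of total size exactly $k$; and \textsc{$d$-Dimensional $k$-Matching}, where a factor $2^{k}$ is saved by grouping tuples on their first coordinate --- put $\psi'(a)=\bigwedge_{i=2}^{d}\phi(a_i)$ in the lifted exterior algebra over $\FF^{(d-1)k}$, let $\sigma_u=\sum_{a\in T,\ a_1=u}\psi'(a)$, and take $Q=[y^{k}]\prod_{u\in U_1}(1+y\sigma_u)$, so distinct first coordinates are automatic and $Q\neq 0$ iff $k$ pairwise disjoint tuples exist, giving $D=(d-1)k$. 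For the covering-type problems I instead track the \emph{span} of the images of the covered sub-universe: assign $\phi(e)\in\FF^{D}$ at random with $D = k$ (resp.\ $t$), set $g_i=\prod_{e\in S_i}(1+\phi(e))$ over a field of large characteristic (so $(1+\phi(e))^{m}=1+m\phi(e)$), and note that $\prod_{i\in T}g_i$ has a nonzero top-exterior-degree part iff the images $\{\phi(e): e\in\bigcup_{i\in T}S_i\}$ span $\FF^{D}$, i.e.\ iff $|\bigcup_{i\in T}S_i|\ge D$; \textsc{$t$-Dominating Set} is the case $S_v=\{v\}\cup N(v)$, where an edge update toggles two memberships.

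\emph{Maintaining the gadgets and the main obstacle.} For the covering problems an update toggles one membership $e\in S_i$, multiplying $g_i$ by $(1+\phi(e))^{\pm1}$; since $(1+\phi(e))^{-1}=1-\phi(e)$ in odd characteristic, the affected leaf is refreshed in $\widetilde{O}(2^{D})$. For \textsc{Exact $k$-Partial Cover}, where sets may be large and updates toggle single memberships, one cannot ``un-wedge'' a single $\phi(e)$ out of $\psi(S_i)$, so I maintain each $\psi(S_i)$ by a secondary balanced tree over its current elements (leaves the $\phi(e)$'s, internal nodes their wedges), adding one $O(\log N)$ factor per toggle; for \textsc{$m$-Set $k$-Packing} and \textsc{$d$-Dimensional $k$-Matching} the relevant set/tuple extensor has size bounded by the parameter and is simply recomputed. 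I expect the real work to lie in the correctness of the encodings rather than the data structure: in particular (i) ruling out spurious cancellations between sub-collections with equal unions --- exactly where the lift of~\cite{extensor-coding} and the choice of field enter --- and (ii) for the deterministic variants, pushing the explicit-extensor construction through each encoding and checking that it only squares the ambient dimension. The running-time accounting is otherwise uniform across all five problems and follows the template above.
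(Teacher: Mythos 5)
There is a genuine gap, and it sits in the complexity accounting at the heart of your ``common core,'' not in the data structure per se. Your segment tree forces you to multiply two \emph{general} elements of $\Lambda(\FF^{D})$ at every internal node on the root path, and you price each such product at $\widetilde{O}(2^{D})$ via a ``signed subset convolution,'' with the deterministic lift merely turning $2^{D}$ into $4^{D}$. But the $\widetilde{O}(2^{D})$ product is only available in characteristic $2$, where the algebra is commutative and the convolution is unsigned (\Cref{par:general-prod-char-2}); over $\QQ$ (or any odd characteristic) the best known general product is $O(2^{\omega D/2})$ (\Cref{par:general-prod}), and whether even the commutative subalgebra generated by the lifted vectors can be multiplied faster is stated as an open problem in this paper. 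Consequently your deterministic variants, which work in $\Lambda(\QQ^{2k})$ (resp.\ $\Lambda(\QQ^{2mk})$, etc.), cost $O^*(2^{\omega k})\approx O^*(5.18^{k})$ per tree-node product, not $O^*(4^{k})$; and your covering encoding, which you deliberately place over a field of \emph{large} characteristic so that multiplicities survive, loses the cheap product as well, so even the randomized $O^*(2^{k})$ and $O^*(2^{t})$ claims do not follow from the tree. The paper's algorithms (\Cref{sec:partial-cover}--\ref{sec:exact-partial-cover}) avoid general products entirely: they maintain a single running product and delete a factor by multiplying by its explicit inverse --- $(1+y_S\chi(S))^{-1}=1+y_S\chi(S)$ in characteristic $2$, and $(1+X)^{-1}=1-X+X^2-\cdots+(-X)^k$ over $\QQ$ since $X^{k+1}=0$ --- where every multiplication decomposes into $\poly(k)$ additions and \emph{skew} multiplications costing $O^*(4^{k})$; the covering problems additionally use the Koutis--Williams disjointness reformulation and, in the randomized case, per-slot variables $y_{a,j}$ so that characteristic $2$ can be kept. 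Without some substitute for these ideas (an explicit inverse for the departing factor, or a way to keep all products skew), the claimed update times are not achieved by your route.

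A second, fixable issue: your template carries only a single degree-counting variable $y$, whereas the paper attaches an independent random variable to each set/tuple ($y_i$, $y_t$). Over characteristic $2$ with plain (unlifted) random vectors, distinct solutions can cancel \emph{identically}: e.g.\ for \textsc{$2$-Set $2$-Packing} on universe $\{a,b,c,d\}$ the two packings $\{\{a,b\},\{c,d\}\}$ and $\{\{a,c\},\{b,d\}\}$ contribute $\det(\phi_a\,\phi_b\,\phi_c\,\phi_d)$ and $-\det(\phi_a\,\phi_b\,\phi_c\,\phi_d)$, which cancel over every field; the analogous swap argument kills your $d$-dimensional matching encoding as well. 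The lift does prevent this, but using it in the randomized variants doubles $D$ and forfeits the $O^*(2^{\cdot})$ bounds, so the per-object variables (plus DeMillo--Lipton--Schwartz--Zippel over them) are not optional. Finally, note the paper's dynamic model inserts and removes whole sets/tuples, so your secondary tree for element-level toggles in \textsc{Exact $k$-Partial Cover} is unnecessary, though harmless.
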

	
	We also explain how the results can be extended to problems with additional constraints. As a simple example, we show how we can design an efficient sensitivity oracle for the problem of detecting whether a directed graph contains a walk of length $k$ which visits at least $k-1$ vertices.
    As a second example, we discuss a different problem: given a directed graph $G$ on $n$ vertices, two (possibly intersecting) subsets $V_1,V_2\subseteq V$, and two positive integers $\mu_1, \mu_2$, decide whether $G$ contains a $k$-path that contains at most $\mu_1$ vertices of $V_1$ and at most $\mu_2$ vertices of $V_2$. For this problem we give a static deterministic algorithm running in time $4^{k+\min\{k,|V_1\cap V_2|\}} \poly(k) \cdot n^2$, as well as a sensitivity oracle counterpart.
	We refer the reader to \Cref{sec:constraints} for the full discussion.
	
	\subsection{Comparison with related work} \label{sec:relatedwork}
	
	\paragraph*{Algorithms for \textsc{$k$-Path}}
	The static \textsc{$k$-Path} problem has a long history. The current best upper bounds for it are a randomized $1.66^k \poly(n)$ algorithm in undirected graphs due to Björklund, Husfeldt, Kaski and Koivisto~\cite{narrow-sieves}, a randomized time $2^k\poly(n)$ in directed (and hence also undirected) graphs due to Williams~\cite{williams-k-path}, and $2.554^k \poly(n)$ deterministic time in directed graphs due to Tsur~\cite{best-kpath-directed}.
	
	In comparison, our randomized sensitivity oracle for directed graphs in \Cref{thm:main1} has a dependency on $k$ (both in preprocessing and query time) of $O^*(2^k)$, so one cannot hope to improve on it without improving the best static algorithm. For undirected graphs, we give in \Cref{thm:so-undirected-fast} a sensitivity oracle with a dependency of $O^*(1.66^k)$, which matches the best bound of \cite{narrow-sieves}. Our deterministic sensitivity oracle has a dependency of $O^*(4^k)$; we leave open the question of whether the techniques of~\cite{best-kpath-directed} can be used in a sensitivity oracle setting to achieve $O^*(2.554^k)$.
	
	There has also been work on the dynamic version of \textsc{$k$-Path}. However, Alman, Mnich and Vassilevska~\cite{dynamic-parameterized} showed that under the aforementioned fine-grained conjectures, there is no dynamic algorithm for \emph{directed} graphs. At the same time, they give a deterministic dynamic algorithm for the case of \emph{undirected} graphs, with update time $k! \cdot 2^{O(k)} \cdot \polylog(n)$. A subsequent result due to Chen et al.~\cite{dynamic-elim-forests} presents a deterministic dynamic algorithm with amortized update time $2^{O(k^2)}$.
	
	In comparison, our sensitivity oracle works for directed graphs just as well as for undirected graphs, and with a lower dependence on $k$ (only $2^{O(k)}$). However, we obtain a sensitivity oracle, rather than a dynamic algorithm.
	
	\paragraph*{Decremental Parameterized Sensitivity Oracles}
	
	A recent paper by Bilò et al.~\cite{so-decremental} constructs decremental sensitivity oracles (``fault tolerant'') for the \textsc{$k$-Path} problem (i.e., where all updates are edge deletions). They give one construction with a randomized $k^\ell 2^k \poly(n)$ preprocessing time and $O(\ell \min\{\ell, k+\log \ell\})$ update time, and a second construction with a lower preprocessing time of $2^k\cdot\frac{(\ell+k)^{\ell+k}}{\ell^\ell k^k}\cdot\ell \poly(n)$ at the expense of an increased query time of $O\left(\frac{(\ell+k)^{\ell+k}}{\ell^\ell k^k}\cdot \ell \min\{\ell, k\}\log n \right)$. They also give a deterministic algorithm, requiring $k^\ell 2.554^k \poly(n)$ preprocessing time and a very low $O(\ell \min\{\ell, k+\log \ell\})$ update time. Our \Cref{thm:main1} improves on the dependency on $\ell$, making it polynomial in both the preprocessing and update times, and allows for both increments and decrements, while the methods of \cite{so-decremental} are specific to decrements.
	We also achieve an update time which is independent of $n$ in the word-RAM model. 
	In \cite{so-decremental}, preprocessing is made aware of the value of $\ell$, due to the superpolynomial dependence on it.
	
	The techniques of Bilò et al.~\cite{so-decremental} also achieve a very low memory footprint for their fault-tolerant oracle, attaining at most logarithmic space dependency on $n$ in all variants, when $k$ and $\ell$ are considered constants. In contrast, we prove a lower bound, showing that one cannot hope for such dependency if increments are allowed, and that any fully dynamic sensitivity oracle must store at least $\Omega(n^2)$ bits (see \Cref{thm:main-space}). This suggests that the methods used in \cite{so-decremental} cannot be used to devise a fully dynamic sensitivity oracle. Their algorithms are based on simple and elegant combinatorial arguments; for instance, in their algorithm with a faster preprocessing time, they precompute a collection of $k$-paths in the graph, in a way that ensures that with high probability, if there exists a path after $\ell$ edge deletions, one of the precomputed paths is still valid. We instead take a very different approach based on algebraic coding.
	
	\paragraph*{Existing distance sensitivity oracles}
	It has come to our attention that results of van den Brand and Saranurak~\cite{algebraic-distance-oracles} on distance sensitivity oracles, combined with standard color-coding techniques, imply the existence of an efficient sensitivity oracle for directed \textsc{$k$-Path} with $k^{O(k)}n^\omega$ preprocessing time and $\ell^\omega k^{O(k)}$ query time, positively answering our \Cref{question:1}. The sensitivity oracles presented in this paper for this problem (\Cref{thm:kpath-so-rand} and \Cref{thm:kpath-so-det}) achieve better dependence on the parameters $k$ and $\ell$.
	
	\paragraph*{Cover, Matching, Dominating Set, and Packing problems}
	We are unaware of previous work on dynamic algorithms for the problems we study here (other than $k$-Path in undirected graphs, which we discussed above). Instead, we compare the dependence on $k$ in the update times of our dynamic running times with the best known dependence on $k$ for static algorithms. In the cases where we match, it is impossible to speed up the dependence on $k$ in our dynamic algorithms without speeding up the fastest known static algorithms.
	\begin{enumerate}
    \item For both \textsc{$k$-Partial Cover} and \textsc{$t$-Dominating Set}, the fastest known static randomized running time is $2^k \poly(nk)$ by Koutis and Williams~\cite{algebraic-problems} (where for \textsc{$t$-Dominating Set} we replace $k$ with $t$ in the running time), which our dynamic algorithm matches, and the fastest known static deterministic running time is $2.554^{k}\poly(kn)$, stated by Tsur~\cite{best-kpath-directed}, whereas our deterministic dynamic algorithm achieves query time $O^*(4^k)$.
    
    \item The fastest known static randomized algorithm for \textsc{$m$-Set $k$-Packing} by Björklund, Husfeldt, Kaski and Koivisto~\cite{narrow-sieves} has the intimidating running time of
    \[ \left(\frac{0.108157\cdot 2^m(1-1.64074/m)^{1.64076-m}m^{0.679625}}{(m-1)^{0.679623}} \right)^k n^6 \poly(N) \]
    
    This is less than $2^{mk}\poly(n)$, and considerably so for smaller $m$. For example, when $m=3$, the running time is bounded by $1.4953^{3k}n^6 \poly(N)$.
    Due to the super-quadratic dependence on $n$, these techniques are unlikely to be adaptable for the dynamic case.
    
    In contrast, Koutis~\cite{koutis-packing} proposes a static randomized algorithm running in time \[ 2^{mk}n \poly(mk) \polylog(n).\]
    Our dynamic result matches this running time.
    \item The fastest known static randomized algorithm for \textsc{$d$-Dimensional $k$-Matching}, by Björklund, Husfeldt, Kaski, and Koivisto~\cite{narrow-sieves}, runs in time $2^{(d-2)k} \poly(Nk) n$. In contrast, we match only an earlier algorithm by Koutis and Williams~\cite{algebraic-problems}, which runs in time $2^{(d-1)k}\poly(kdn)$. The fastest known static deterministic algorithm runs in time $2.554^{(d-1)k}\poly(Nn)$, stated in Tsur~\cite{best-kpath-directed}, whereas we achieve deterministic dynamic update time $O^*(4^{(d-1)k})$.
    \item \textsc{Exact $k$-Partial Cover} is a natural generalization of similar problems (such as \textsc{$m$-Set $k$-Packing}), though we are unaware of explicit prior work on it. A slight modification of the algorithm of Koutis~\cite{koutis-packing} for \textsc{$m$-Set $k$-Packing} solves the static problem in randomized $O^*(2^k)$ time, matching the dependence on $k$ in our dynamic algorithm.
	\end{enumerate}
	
	\paragraph*{Dynamic Parameterized Algorithms}
	Efficient dynamic algorithms have previously been proposed for a number of parameterized algorithms, often using dynamic versions of classic techniques from the design of fixed-parameter tractable algorithms.
	These techniques include dynamic kernels \cite{dynamic-parameterized, dynamic-kernels, parameterized-streaming, Iwata2014FastDG}, color-coding \cite{dynamic-parameterized}, dynamic elimination forests \cite{dynamic-elim-forests}, dynamic branching trees \cite{dynamic-parameterized}, sketching \cite{parameterized-streaming}, and other methods \cite{dynamic-tree-depth-decomposition, dynamic-parameterized-counting}.
	To our knowledge, no previous work on dynamic parameterized algorithms has used algebraic techniques.
    
    \subsection{Subsequent algorithm for skew-multliplication}\label{sec:subsequent-skew-prod}
    Shortly after the publication of this paper, Brand~\cite{fast-deterministic-skew-product} presented an improvement to a primitive we are consistently using throughout this paper when devising the deterministic variants of our algorithms. Specifically, skew-multiplication in the sub-algebra we are using for deterministic algorithms can be done more efficiently than assumed here, replacing every occurrence of $4^k$ in our running times with $\varphi^{2k}\poly(k)$, where $\varphi=\frac{1+\sqrt{5}}{2}< 1.619$. This applies to all our deterministic algorithms, presented in Theorems~\ref{thm:kpath-so-det},\ref{thm:approx-count},\ref{thm:exact-set-cover},\ref{thm:partial-cover},\ref{thm:m-set-k-packing},\ref{thm:t-dominating}, and \ref{thm:k-matching}. We emphasize that this does not improve the skew-multiplication in the most general form that appears in \Cref{par:skew_prod}, but rather improves specifically the operation for the structure used in the deterministic variants. For more precise discussion see \Cref{par:skew-prod-det}.
    
	\section{Preliminaries}	
	
	\subsection{Notation}
	For a positive integer $k$, write $[k] := \{1,2,\ldots,k\}$.
	
	We will use the asymptotic notation $O^*(T)$; its meaning will depend on context, so we will define it each time it arises. Generally, in static algorithms and preprocessing times we hide factors that are polynomial in the input parameters ($k$, $n$, etc), but when talking about update times, we will only hide factors that are polylogarithmic in the input size.
	
	Let $\omega$ be such that two $n \times n$ matrices can be multiplied using $O(n^{\omega})$ field operations\footnote{This is a slight abuse of notation. The exponent of matrix multiplication $\omega$ is typically defined as the smallest constant such that $n \times n$ matrices can be multiplied in $n^{\omega + o(1)}$ field operations, but we drop the `$o(1)$' in the exponent here for simplicity.}; it is known that we can take $\omega < 2.373$~\cite{matrix-mult-exponent}.
	\subsection{Sensitivity Oracles}
	
	\begin{definition}[sensitivity oracle] \label{def:sensitivity-oracle}
	A \emph{sensitivity oracle} is a data structure containing two functions:
	\begin{enumerate}
	    \item Initialization with an input instance of size $n$.
	    \item Query with $\ell$ changes (with problem-specific definition) to the \emph{original input}. This returns the desired output on the altered input.
	\end{enumerate}
	The time spent in the initialization step is called the \emph{preprocessing time}, and the time spent in the query step is called the \emph{query time}.
	\end{definition}
	
	In random sensitivity oracles, we assume the queries are independent of the randomness of the oracle, and thus cannot be used to fool a sensitivity oracle by obtaining the randomness used at the preprocessing phase. We say that a random sensitivity oracle errs with probability at most $p$, if for any pair of an initial configuration and a query, the probability of error is at most $p$.
	
	We say that a sensitivity oracle for a parameterized problem, with parameter $k$, is \emph{efficient} if its preprocessing time is $f(k)\poly(n)$ for some computable function $f$, and its query time is $\poly(\ell) g(k)n^{o(1)}$ for some computable function $g$.
	
	\subsection{Exterior algebra}
	
	A key technical tool we will make use of in our algorithms is the exterior algebra. We give a brief introduction to its definition and properties which we will use.
	
	Given a field $\FF$ and a positive integer $k$, the exterior algebra $\Lambda(\mathbb{F}^k)$ is a non-commutative ring over $\FF$. In this paper we will be interested in the cases where $\FF$ is either the rational numbers $\QQ$, or a finite field $\FF_{2^d}$ of characteristic 2.
	
	Following the terminology of
	Brand, Dell and Husfeldt \cite{extensor-coding}, elements of the exterior algebra are called \emph{extensors}.
	Multiplication in $\Lambda(\FF^k)$ is denoted by the wedge sign; for $x,y\in\Lambda(\FF^k)$, their product is $x\wedge y$.
	The generators of this ring are $e_1,e_2,...,e_k$, the standard basis of $\FF^k$, and we impose the relations $e_i\wedge e_j = -e_j \wedge e_i$ for all $i,j\in [k]$ and $e_i\wedge e_i=0$ for all $i\in [k]$ (while the latter equation follows from the former for fields of characteristic different from 2, we still require it when $\Char \FF=2$). We also require that the wedge product is bilinear\footnote{$(a+b)\wedge c = (a \wedge c) + (b \wedge c)$ and $a \wedge (b+c) = (a \wedge b) + (a \wedge c)$.}. Furthermore, it can be seen that the wedge product is associative\footnote{$(a \wedge b) \wedge c = a \wedge (b \wedge c)$}.
	
	Additionally, we consider any field element $a\in \FF$ to be an element of $\Lambda(\FF^k)$, with multiplication defined simply by $a\wedge e_i = e_i \wedge a = ae_i$.
	
	For example, we have
	\begin{align*}
	    e_1\wedge (e_5 - e_2 + 3)\wedge e_3 &= e_1\wedge e_5 \wedge e_3 - e_1 \wedge e_2 \wedge e_3 + 3 e_1 \wedge e_3 \\ & = - e_1\wedge e_3 \wedge e_5 - e_1 \wedge e_2 \wedge e_3 + 3 e_1 \wedge e_3.
	\end{align*}
	
	To simplify the notation, for $I\subseteq [k]$ we also write $e_{I}:=\bigwedge_{i\in I} e_i$, where the product is over the elements of $I$ in ascending order, with the convention $e_\emptyset=1$. We will sometimes also replace the wedge sign by a simple product sign, when it is clear from context that we mean the wedge product. For example, we can write $e_{I}:=\prod_{i\in I} e_i$. Thus, the above example extensor can also be written as 
	\[ e_1\wedge (e_5 - e_2 + 3)\wedge e_3 = - e_{\{1,3,5\}} - e_{\{1,2,3\}} + 3 e_{\{1,3\}} \]
	
	We see that two products of the basis elements that differ only by the order of the terms can differ only in sign (e.g., $e_1 \wedge e_2 \wedge e_3 = - e_2 \wedge e_1 \wedge e_3$), and furthermore any product of $e_i$s with a repeating factor vanishes (e.g., $e_1 \wedge e_2 \wedge e_2 = e_1 \wedge 0 = 0$). Thus, the set $\{e_I:I\subseteq [k]\}$ spans $\Lambda(\FF^k)$ as a vector space over $\FF$. It is in fact a basis of this vector space, and we have $\dim_{\FF} \Lambda(\FF^k) = 2^k$. That is, for any $x\in\Lambda(\FF^k)$ there is a unique choice of the $2^k$ coefficients $\alpha_I$ so that $x=\sum_{I\subseteq [k]} \alpha_I e_I$. For any $x\in\Lambda(\FF^k)$ and $T\subseteq [k]$ we denote by $[e_T]x$ the coefficient of $e_T$ when $x$ is represented in the basis $\{e_I:I\subseteq [k]\}$ (i.e., the value of $\alpha_T$ as above).
	
	If there is a $d$ such that all $I$ for which $[e_I]x\neq 0$ have $|I|=d$, we say that $x$ is a \emph{degree-$d$} extensor. The space of degree-$d$ extensors is denoted by $\Lambda^d(\FF^k)$.
	
	We identify any vector $v=(v[1],v[2],...,v[k]) \in \FF^k$ with the exterior algebra element $\sum_i v[i] e_i \in \Lambda^1(\FF^k)$. A crucial property of the exterior algebra is that for any vector $v\in \FF^k$ it holds that $v\wedge v = 0$. Indeed,
	\begin{align*} v\wedge v = \sum_{i,j} v[i] v[j] e_i \wedge e_j &= \sum_{i<j} v[i] v[j] e_i \wedge e_j + \sum_{i<j} v[i] v[j] e_j \wedge e_i \\
	&= \sum_{i<j} v[i] v[j] e_i \wedge e_j - \sum_{i<j} v[i] v[j] e_i \wedge e_j = 0.
	\end{align*}
	Similarly, for any two vectors $u,v\in\FF^k$, it holds that $u\wedge v = -v\wedge u$. It is important to notice that these properties do not hold for all elements in $\Lambda(\FF^k)$. As an example, for $x=e_1+e_2\wedge e_3$ we have $x\wedge x = (e_1+e_2\wedge e_3)\wedge (e_1+e_2\wedge e_3) = e_1\wedge e_2\wedge e_3 + e_2\wedge e_3\wedge e_1 = 2 e_1\wedge e_2\wedge e_3$. In fact, we can see that the wedge product of an even number of vectors commutes with any extensor.
	
	Another very useful property of the exterior algebra is its connection to determinants. Let $v_1,v_2,...,v_k\in\FF^k$ be $k$ vectors of dimension $k$, and consider their product $v_1\wedge v_2\wedge ... \wedge v_k$. Replace each $v_i$ with its linear combination of the basis elements $e_j$ and expand the product. We see that any monomial that repeats a basis element gets cancelled, and the others are all $e_{[k]}$ up to a sign. We get
	$$v_1\wedge v_2\wedge ... \wedge v_k = \bigwedge_{i=1}^k \sum_{j=1}^k v_i[j] e_j = \sum_{\sigma\in S_k} \bigwedge_{i=1}^k v_i[\sigma(i)] e_{\sigma(i)} = \sum_{\sigma\in S_k} \operatorname{sgn}(\sigma)e_{[k]} \bigwedge_{i=1}^k v_i[\sigma(i)].$$
	We recognize a determinant in the right hand side, thus showing that
	\begin{equation} \label{eq:determinant}
	    v_1\wedge v_2\wedge ... \wedge v_k = \det(v_1|v_2|...|v_k) e_{[k]}.
	\end{equation}
	
	\subsection{Complexity of ring operations in the exterior algebra}
	We represent the elements of $\Lambda(\FF^k)$ as the length-$2^k$ vector of coefficients of the basis elements $e_I$.
	Addition is performed by coordinate-wise addition, hence requires $O(2^k)$ field operations.
	Multiplication is trickier, and there are a few important cases. We have the following main cases, which were also noted and used in \cite{extensor-coding}.
	
	\begin{proposition}[Skew product]\label{par:skew_prod}
	Computing $x\wedge v$ for a general extensor $x\in\Lambda(\FF^k)$ and a vector $v\in\FF^k$ can be done in $O(2^k\cdot k)$ field operations.
	\end{proposition}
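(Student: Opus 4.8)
The plan is to expand $x \wedge v$ coordinate-by-coordinate in the basis $\{e_I : I \subseteq [k]\}$ and to show that each of the $2^k$ output coordinates is a sum of at most $k$ field products. Write $x = \sum_{I \subseteq [k]} \alpha_I e_I$ with $\alpha_I = [e_I]x$, and $v = \sum_{i=1}^k v[i]\, e_i$. Using bilinearity and associativity of the wedge product,
$$ x \wedge v = \sum_{I \subseteq [k]} \sum_{i=1}^k \alpha_I\, v[i]\, (e_I \wedge e_i). $$
So the first step is to pin down the single term $e_I \wedge e_i$. If $i \in I$ then a basis element repeats and $e_I \wedge e_i = 0$; otherwise, moving $e_i$ leftward into its sorted position among the elements of $I$ costs one sign flip per element of $I$ that is larger than $i$, so $e_I \wedge e_i = (-1)^{\sigma(I,i)}\, e_{I \cup \{i\}}$, where $\sigma(I,i) := |\{\, j \in I : j > i \,\}|$.

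Second, I would regroup the double sum by its output index. For each $J \subseteq [k]$, the only contributions to $[e_J](x\wedge v)$ come from pairs $(I,i)$ with $i \in J$ and $I = J \setminus \{i\}$, giving
$$ [e_J](x \wedge v) = \sum_{i \in J} (-1)^{\sigma(J \setminus \{i\},\, i)}\, \alpha_{J \setminus \{i\}}\, v[i]. $$
This is a sum of exactly $|J| \le k$ terms, each of which needs one field multiplication (the sign is applied by negation, which I count as free, or as one further field operation). Summing over all $2^k$ subsets $J$, the total number of field operations is $\sum_{J \subseteq [k]} O(|J|) = O(2^k k)$, which is the claimed bound.

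Third, for the index bookkeeping I would represent each subset by its $k$-bit mask, an integer in $[0, 2^k)$: then $J \setminus \{i\}$ is obtained by clearing bit $i$, and $\sigma(J\setminus\{i\}, i)$ is the population count of the bits of $J$ strictly above position $i$ — both are word-RAM operations costing $O(1)$ (or $O(k/w)$) time and, importantly, \emph{zero} field operations, so they do not affect the count. Equivalently, one can write $x \wedge v = \sum_{i=1}^k v[i]\,(x \wedge e_i)$ and perform $k$ passes, each of which scans the $2^k$ coefficients of $x$ once and routes coefficient $\alpha_I$ (for $i \notin I$) with the appropriate sign into position $I \cup \{i\}$; this localizes the sign handling and is the form most convenient to implement. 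I do not expect a genuine obstacle here: the only point that needs care is the reordering sign $(-1)^{\sigma(\cdot,\cdot)}$ and the verification that all index arithmetic and sign computation is charged to word operations rather than field operations, so that the final tally is indeed $O(2^k k)$ field operations.
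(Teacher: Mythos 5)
Your proposal is correct and follows exactly the approach the paper intends: the paper's proof is only the one-line sketch ``expand the product,'' and your expansion --- computing $e_I \wedge e_i = (-1)^{\sigma(I,i)} e_{I\cup\{i\}}$, regrouping by output index $J$, and counting $\sum_{J}|J| = k\,2^{k-1} = O(2^k k)$ field operations with index/sign bookkeeping charged to word operations --- is the standard way to fill in that sketch. Nothing further is needed.
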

	\begin{proof}[Proof sketch]
	This is accomplished by simply expanding the product.
	\end{proof}
	
	\begin{proposition}[General product over characteristic 2] \label{par:general-prod-char-2}
	Computing $x\wedge y$ for any $x,y\in\Lambda(\FF^k)$, when $\Char (\FF)=2$, can be done in $O(2^k k^2)$ field operations.
	\end{proposition}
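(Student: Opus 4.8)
The plan is to use the fact that over a field of characteristic $2$ the wedge product of basis extensors carries no sign, which turns multiplication in $\Lambda(\FF^k)$ into the \emph{subset convolution} of coefficient vectors; the latter admits a fast algorithm. Concretely: when $\Char \FF = 2$ we have $e_i\wedge e_j = -e_j\wedge e_i = e_j\wedge e_i$, so permuting the factors of a product of basis vectors never changes it, and together with $e_i\wedge e_i=0$ this gives $e_I\wedge e_J = e_{I\cup J}$ if $I\cap J=\emptyset$ and $e_I\wedge e_J=0$ otherwise. Writing $x=\sum_I \alpha_I e_I$ and $y=\sum_J \beta_J e_J$ and expanding bilinearly, we get $[e_K](x\wedge y)=\sum_{I\subseteq K}\alpha_I\,\beta_{K\setminus I}$ for every $K\subseteq[k]$. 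Thus the length-$2^k$ coefficient vector of $x\wedge y$ is exactly the subset convolution (over $\FF$) of the coefficient vectors of $x$ and $y$.

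It therefore suffices to compute a subset convolution of two functions $2^{[k]}\to\FF$ in $O(2^k k^2)$ field operations, which is exactly the guarantee of the standard fast subset-convolution algorithm. For completeness I would sketch it. Split $\alpha$ into its $k+1$ rank slices $\alpha^{(i)}$, where $\alpha^{(i)}_I=\alpha_I$ when $|I|=i$ and $0$ otherwise, and similarly for $\beta$. Apply the zeta transform $\hat f(S)=\sum_{T\subseteq S}f(T)$ to each slice; each such transform costs $O(2^k k)$ field operations by the usual coordinate-by-coordinate sweep (and needs only addition). For each set $S$ and each rank $\ell$ with $0\le \ell\le k$, form $h^{(\ell)}(S)=\sum_{i=0}^{\ell}\widehat{\alpha^{(i)}}(S)\,\widehat{\beta^{(\ell-i)}}(S)$; this is $O(k^2)$ multiplications per set and $O(2^k k^2)$ in all. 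Finally apply the Möbius (inverse zeta) transform to each of the $k+1$ functions $h^{(\ell)}$, again $O(2^k k)$ operations each, and output $(\alpha*\beta)_S=(\mu h^{(|S|)})(S)$. Summing, the total is $O(2^k k^2)$ field operations.

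The one point requiring care — and the step I would expect to be the main obstacle — is the correctness of this ``ranked'' trick: a plain zeta-transform product computes the covering product $\sum_{I\cup J=S}\alpha_I\beta_J$ rather than the disjoint-union product we want, and it is precisely the cardinality bookkeeping that fixes this, because $I\cup J=S$ together with $|I|+|J|=|S|$ forces $|I\cap J|=0$. Verifying this, and that the inverse transform correctly recovers the rank-$|S|$ contribution, is the heart of the argument; the rest is routine cost accounting. I would also remark that the characteristic-$2$ hypothesis is essential here: over other fields $e_I\wedge e_J=\pm e_{I\cup J}$ with a sign depending on how $I$ and $J$ interleave, which breaks the clean convolution structure and is why the general case must be handled by a different method.
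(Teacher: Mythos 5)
Your proof is correct and follows essentially the same route as the paper: reduce the characteristic-2 wedge product to a subset convolution of coefficient vectors and apply the fast subset-convolution algorithm of Björklund, Husfeldt, Kaski and Koivisto to get $O(2^k k^2)$ field operations. The paper's proof sketch simply cites that algorithm, whereas you additionally sketch its ranked zeta/Möbius implementation, which is a fine (and standard) elaboration rather than a different approach.
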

	\begin{proof}[Proof sketch]
	Here, the ring $\Lambda(\FF^k)$ is commutative. Then we can write
	\[[e_I](x\wedge y) = \sum_{T\subseteq I} ([e_T]x)([e_{I\setminus T}]y),\]
	which can be seen as a subset convolution operation. Using the fast subset convolution discovered by Björklund, Husfeldt, Kaski and Koivisto \cite{fast-subset-convolution}, we can compute this with $O(2^k k^2)$ field operations.
	\end{proof}

    \begin{proposition}[General product over any characteristic] \label{par:general-prod}
	Computing $x\wedge y$ for $x,y\in\Lambda(\FF^k)$, for $\FF$ of any characteristic, can be done in $O(2^{\omega k / 2})$ field operations.
    \end{proposition}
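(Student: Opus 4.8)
The plan is to reduce the multiplication of two general extensors in $\Lambda(\FF^k)$ to a small number of rectangular (in fact, essentially square) matrix multiplications, so that the cost becomes $O(2^{\omega k/2})$ field operations. The key observation is that the basis $\{e_I : I \subseteq [k]\}$ can be split according to how $I$ intersects a fixed partition $[k] = A \cupdot B$ with $|A| = |B| = k/2$ (assume $k$ even; otherwise split into sizes $\lceil k/2\rceil$ and $\lfloor k/2 \rfloor$ and the bound is unaffected). Every $I$ decomposes uniquely as $I = I_A \cupdot I_B$ with $I_A \subseteq A$, $I_B \subseteq B$, and correspondingly $e_I = \pm\, e_{I_A} \wedge e_{I_B}$. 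So an extensor $x$ can be encoded as a $2^{k/2} \times 2^{k/2}$ matrix $X$ indexed by $(I_A, I_B)$, with $X[I_A, I_B] = [e_{I_A}\wedge e_{I_B}]\,x$ (up to a fixed sign normalization); likewise $y \mapsto Y$.

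Next I would work out what the product $x \wedge y$ looks like in these coordinates. Writing $x = \sum_{I_A, I_B} X[I_A,I_B]\, e_{I_A} e_{I_B}$ and $y = \sum_{J_A, J_B} Y[J_A, J_B]\, e_{J_A} e_{J_B}$, the product $e_{I_A} e_{I_B} \wedge e_{J_A} e_{J_B}$ vanishes unless $I_A \cap J_A = \emptyset$ and $I_B \cap J_B = \emptyset$, in which case (after sorting the factors, which only introduces a sign depending on $I_A, I_B, J_A, J_B$) it equals $\pm\, e_{I_A \cupdot J_A}\, e_{I_B \cupdot J_B}$. Thus the coefficient of $e_{K_A} e_{K_B}$ in $x \wedge y$ is a sum over splits $K_A = I_A \cupdot J_A$, $K_B = I_B \cupdot J_B$ of (sign-adjusted) products $X[I_A, I_B]\, Y[J_A, J_B]$. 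The sum over the $A$-side ($K_A = I_A \cupdot J_A$) is a subset convolution over the Boolean lattice on $A$; the sum over the $B$-side is likewise a subset convolution on $B$. The standard trick of the fast subset convolution of Björklund--Husfeldt--Kaski--Koivisto is to replace each subset convolution by $O(k/2)$ pointwise products of zeta/ranked transforms, stratified by the size of the subset: after applying the ranked zeta transform on both the $A$- and $B$-coordinates (costing $2^{k/2}\cdot\poly(k)$ applied to each of the $2^{k/2}$ rows/columns, i.e.\ $2^{k}\poly(k)$ total — dominated by the bound we want), the convolution becomes, for each pair of rank-levels $(a,b)$ with $a \le k/2$, $b \le k/2$, an ordinary matrix product of two $2^{k/2}\times 2^{k/2}$ matrices (the $(I_A, I_B)$-indexed transform of $x$ restricted to rank $a$ on $A$, times the $(J_A, J_B)$-indexed transform of $y$ restricted to rank $b$ on $B$), with the signs folded into the entries. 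There are only $\poly(k)$ such level-pairs, each matrix product costs $O(2^{\omega k/2})$ field operations, and finally we invert the ranked transforms (again $2^k\poly(k)$) and apply the sign normalization. Since $2^k \poly(k) = O(2^{\omega k/2})$ for $\omega > 2$, the total is $O(2^{\omega k/2})$ as claimed.

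The main obstacle I expect is bookkeeping the sign factors so that the reindexed multiplication really is (a bounded number of) clean matrix multiplications with no cross terms left over. One must check that the sign incurred when reordering $e_{I_A} e_{I_B} e_{J_A} e_{J_B}$ into $e_{I_A} e_{J_A} e_{I_B} e_{J_B}$ and then into $e_{K_A} e_{K_B}$ factors as a product of a function of $(I_A, J_A)$, a function of $(I_B, J_B)$, and a cross term depending only on $|I_B|$ and $|J_A|$ (the number of transpositions to move the $B$-block of $x$ past the $A$-block of $y$), so that on each rank-level pair $(|I_B|, |J_A|) = (\text{fixed}, \text{fixed})$ the cross sign is a constant and the remaining signs can be absorbed into the definitions of $X$ and $Y$ per level. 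Once that factorization of the sign is verified, the reduction to $\poly(k)$ square matrix multiplications of dimension $2^{k/2}$ is immediate, and the complexity bound follows. (I would also remark that this proposition is exactly the tool that gives the deterministic $2^{\omega k}$-type query times in \Cref{thm:main1}, by contrast with the $O(2^k k^2)$ bound of \Cref{par:general-prod-char-2} available only in characteristic $2$.)
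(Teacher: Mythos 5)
There is a genuine gap, and it sits exactly where you flagged your ``main obstacle.'' Writing the product in the split $[k]=A\cupdot B$ is fine, and the Koszul factorization of the sign into $\sigma_A(I_A,J_A)\cdot\sigma_B(I_B,J_B)\cdot(-1)^{|I_B||J_A|}$ is correct, with the cross term indeed constant on rank levels. But the blockwise signs $\sigma_A(I_A,J_A)=(-1)^{\#\{(i,j)\in I_A\times J_A:\ i>j\}}$ (and likewise $\sigma_B$) depend \emph{jointly} on the pair and do not factor as $\alpha(I_A)\beta(J_A)\gamma(I_A\cupdot J_A)$, so they cannot be ``folded into the entries'' of $X$ and $Y$ separately. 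With these signs present, the $A$-side (resp.\ $B$-side) sum is not a subset convolution but exactly a wedge product in $\Lambda(\FF^{k/2})$: you have reproduced the original problem on half the ground set, not eliminated it. (If such an absorption were possible, applying it to the full ground set $[k]$ would give a $2^k\poly(k)$ algorithm in every characteristic, i.e., \Cref{par:general-prod-char-2} without the characteristic-2 hypothesis, which is precisely what is not known.) Moreover, even ignoring the signs, the step ``ranked zeta transforms, then one matrix product per rank-level pair'' does not type-check: after zeta transforms a (double) subset convolution becomes a \emph{pointwise} product stratified by ranks, costing $2^k\poly(k)$, whereas a matrix product $\sum_M \hat{X}[\cdot,M]\,\hat{Y}[M,\cdot]$ contracts over a shared middle index; in your target formula $I_B$ and $J_A$ are independent summation variables tied to $K_B$ and $K_A$ by disjointness, and no index of size $2^{k/2}$ is shared between $X$ and $Y$. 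So the claimed $O(2^{\omega k/2})$ matrix-multiplication step never materializes.

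For comparison, the paper does not prove this proposition by a convolution argument at all; it invokes W{\l}odarczyk~\cite{extensor-mult}. There one deforms $\Lambda(\FF^k)$ to a Clifford algebra with a \emph{nondegenerate} quadratic form; for even $k$ (over a suitable extension of $\FF$) that algebra is isomorphic to the algebra of $2^{k/2}\times 2^{k/2}$ matrices, so each Clifford product is a single fast matrix multiplication, and the exterior product is recovered from $O(k^2)$ such Clifford products by tracking a parameter/filtration that allows passing back to the degenerate (zero) form. The nondegeneracy (semisimplicity) is what buys the matrix-algebra structure; your construction has no analogue of it, and I do not see how to repair the argument without introducing something of that kind.
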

    \Cref{par:general-prod} is a result of Włodarczyk \cite{extensor-mult}, which reduces computing $x\wedge y$ over $\Lambda(\FF^k)$ to $k^2$ multiplications in a Clifford algebra, which in turn can be embedded in matrices of size $2^{k/2}\times 2^{k/2}$.
    
    The following is a considerably more specialized case due to Brand~\cite{fast-deterministic-skew-product} that first appeared subsequent to this paper, but exponentially improves our results for the deterministic algorithms, as mentioned in \Cref{sec:subsequent-skew-prod}. It concerns a specific subalgebra of $\Lambda(\FF^{2k})$, namely the ring generated by the unit $e_\emptyset$ and the elements of the form $\binom{v}{0}\wedge \binom{0}{v}$, where $v\in\FF^k$ is an arbitrary vector and $0$ is the zero vector $0\in \FF^k$.
    Let us denote this ring by $D(\FF^{2k})$.
    
    \begin{theorem}[Skew product in $D(\QQ^{2k})$, \cite{fast-deterministic-skew-product}] \label{par:skew-prod-det}
	We can represent elements in $D(\QQ^{2k})$ with $\varphi^{2k}\poly(k)$ field values, such that computing $x\wedge y$ for $x,y\in D(\QQ^{2k})$ with $y=\binom{v}{0}\wedge \binom{0}{v}$ for some $v\in \QQ^k$, can be done in $\varphi^{2k}\poly(k)$ field operations. Adding two elements can be done with the same time bound.
    \end{theorem}
	
	\subsection{Extensor-coding} \label{sec:extensor-coding}
	In this subsection, we give a brief overview of the techniques recently used by Brand, Dell and Husfeldt~\cite{extensor-coding} to design a randomized and a deterministic algorithm for the (static) \textsc{$k$-Path} problem; we heavily build off of these techniques in this paper.
	
	Given a directed graph $G$, we denote its vertex set $V=\{v_1,...,v_n\}$, and we denote by $W_s(G)$ the set of all walks in $G$ of length $s$. Recall that walks may repeat vertices, whereas paths may not. For each edge $e\in E(G)$ we have an \emph{edge variable} $y_e$, whose possible values will be in a field $\FF$ we will pick later.
	
	Furthermore, we define vectors $\chi:V\to \FF^k$ by $\chi(v_i)=(1, j, j^2, \ldots, j^{k-1})$ where $j = f(i)$ for some injective function $f : [n] \to \FF$. We call these \emph{Vandermonde vectors}.
	
	Given a walk in $G$ of length $s$, $w=(w_1,w_2,...,w_s)\in W_s(G)$, we define the corresponding \emph{walk extensor} to be $\chi(w_1) \wedge y_{w_1w_2} \wedge \chi(w_2) \wedge y_{w_2w_3} \wedge \chi(w_3) \wedge ... \wedge y_{w_{s-1}w_s} \wedge \chi(w_s)$. We will sometimes denote the walk extensor of a walk $w$ by $\chi(w)$.
	
	Our goal is to compute the sum of all walk extensors for walks of length $k$,
	
    \begin{equation} \label{eq:walk-extensor-sum}
        \Z = \sum_{(w_1,w_2,...,w_k)\in W_k(G)} \chi(w_1) \wedge y_{w_1w_2} \wedge \chi(w_2) \wedge y_{w_2w_3} \wedge \chi(w_3) \wedge ... \wedge y_{w_{k-1}w_k} \wedge \chi(w_k).
    \end{equation}
	The result of this sum is seen to be proportional to the single basis element $e_{[k]}$, and by abuse of notation we will consider it as a scalar equal to that coefficient (or, more precisely, a polynomial in the $y$ variables).
	
	Any $k$-walk that is not a path does not contribute to \Cref{eq:walk-extensor-sum}, because its walk extensor repeats a vector in the product. Thus, any nonzero term in the sum \Cref{eq:walk-extensor-sum} corresponds to a $k$-\emph{path}.
	
	The choice of $\chi(v_i)$ is such that for any walk $(w_1,w_2,...,w_k)$ that is a $k$-path, the walk-extensor is a nonzero monomial in the $y$ variables. This is seen using \Cref{eq:determinant}, because $\chi(w_1)\wedge \chi(w_2) \wedge ... \wedge \chi(w_k) = e_{[k]} \det(\chi(w_1)|\chi(w_2)|...|\chi(w_k))$, which is the determinant of a Vandermonde matrix with unique columns, which is known to be nonzero.
	
	Additionally, the monomial in the $y$ variables given to each $k$-path is seen to uniquely identify the $k$-path (since there is a different $y$ variable for each edge). Hence, the monomials of different $k$-paths cannot cancel, and we have proven:
	
	\begin{lemma} \label{lem:k-path-existence}
	For a directed graph $G$, the value of $\Z$ given in \Cref{eq:walk-extensor-sum} is a nonzero polynomial in the $y$ variables if and only if $G$ contains a $k$-path.
	\end{lemma}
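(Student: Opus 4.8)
The plan is to expand $\Z$ exactly as written in \Cref{eq:walk-extensor-sum}, as a sum of walk extensors over all $k$-walks, and determine which terms survive. The first step is to note that each edge variable $y_e$ behaves as a central scalar (it commutes with every extensor), so the walk extensor of a walk $w=(w_1,\ldots,w_k)\in W_k(G)$ factors as
\[
\chi(w) \;=\; \Bigl(\textstyle\prod_{t=1}^{k-1} y_{w_t w_{t+1}}\Bigr)\,\cdot\, \chi(w_1)\wedge \chi(w_2)\wedge \cdots \wedge \chi(w_k),
\]
i.e.\ a monomial in the $y$ variables times a product of Vandermonde vectors.

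Next I would handle the walks that are \emph{not} paths. If $w$ is not a path, then $w_i=w_j$ for some $i<j$, so $\chi(w_i)=\chi(w_j)$. Since any two Vandermonde vectors are degree-$1$ extensors, they anticommute, and thus $\chi(w_1)\wedge\cdots\wedge\chi(w_k)$ equals, up to sign, a reordering in which the two equal factors $\chi(w_i)=\chi(w_j)$ are adjacent; then $\chi(w_i)\wedge\chi(w_i)=0$ by the vector property $v\wedge v=0$, killing the whole term. Hence only $k$-paths contribute to $\Z$; in particular, if $G$ has no $k$-path then $\Z=0$, giving one direction.

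For the converse, let $w=(w_1,\ldots,w_k)$ be a $k$-path. Its vertices are distinct, so the scalars $f(w_1),\ldots,f(w_k)$ are distinct by injectivity of $f$, and the matrix with columns $\chi(w_1),\ldots,\chi(w_k)$ is a Vandermonde matrix on distinct nodes, hence has nonzero determinant. By \Cref{eq:determinant} we get $\chi(w_1)\wedge\cdots\wedge\chi(w_k)=\det(\chi(w_1)\,|\,\cdots\,|\,\chi(w_k))\,e_{[k]}$, so $w$ contributes $c_w\bigl(\prod_{t=1}^{k-1} y_{w_t w_{t+1}}\bigr)e_{[k]}$ with $c_w\neq 0$.

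Finally I would rule out cancellation between distinct paths. The monomial $\prod_{t=1}^{k-1} y_{w_t w_{t+1}}$ is multilinear and records exactly the $(k-1)$-element edge set of $w$, and in a \emph{directed} graph a $k$-path is uniquely recoverable from its edge set (identify the unique source and follow out-edges). Thus distinct $k$-paths yield distinct monomials, so $\Z=\bigl(\sum_{k\text{-paths }w} c_w \prod_t y_{w_t w_{t+1}}\bigr)\,e_{[k]}$ with every $c_w\neq 0$, which is a nonzero polynomial precisely when $G$ contains a $k$-path. The only points needing a little care are the sign bookkeeping when permuting the $\chi$-factors and this last injectivity claim (path $\mapsto$ edge set); the rest is a direct application of $v\wedge v=0$ and \Cref{eq:determinant}, so I do not expect a genuine obstacle here.
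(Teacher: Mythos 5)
Your proposal is correct and follows essentially the same route as the paper: non-path walks vanish because a repeated vertex forces a repeated vector factor (via anticommutation and $v\wedge v=0$), each $k$-path contributes a nonzero Vandermonde determinant times $e_{[k]}$ by \Cref{eq:determinant}, and distinct paths carry distinct edge-variable monomials so no cancellation occurs. Your extra details (sign bookkeeping and recovering a directed path from its edge set) just make explicit what the paper states briefly.
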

	
	We now recall the classical DeMillo-Lipton-Schwartz-Zippel Lemma:
	\begin{lemma}[\cite{schwartz-zippel-1,schwartz-zippel-2,schwartz-zippel-3}] \label{lem:schwartz-zippel}
	    Let $f\in \FF[x_1,...,x_n]$ be a nonzero polynomial in $n$ variables with total degree at most $d$ over some field $\FF$, and let $S\subseteq \FF$. For $a_1,a_2,...,a_n\in S$ chosen uniformly at random, we have $\Pr(f(a_1,a_2,...,a_n)=0)\le\frac{d}{|S|}$.
	\end{lemma}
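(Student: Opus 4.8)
The plan is to prove \Cref{lem:schwartz-zippel} by induction on the number of variables $n$, following the classical argument. For the base case $n=1$, a nonzero univariate polynomial $f \in \FF[x_1]$ of degree at most $d$ has at most $d$ roots in $\FF$, hence at most $d$ roots lying in $S$; since $a_1$ is drawn uniformly from $S$, this gives $\Pr[f(a_1)=0] \le d/|S|$ immediately.

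For the inductive step, I would write $f$ as a polynomial in $x_1$ with coefficients in $\FF[x_2,\ldots,x_n]$, say $f = \sum_{i=0}^{k} x_1^{\,i}\, g_i(x_2,\ldots,x_n)$, where $k \le d$ is the largest exponent of $x_1$ that actually appears in $f$. Then $g_k$ is a nonzero polynomial of total degree at most $d-k$. Let $E$ be the event that $g_k(a_2,\ldots,a_n)=0$; by the inductive hypothesis applied to $g_k$ (in $n-1$ variables, with degree bound $d-k$), we get $\Pr[E] \le (d-k)/|S|$. Conditioned on the complement $\bar E$, the restriction $f(x_1,a_2,\ldots,a_n)$ is a nonzero univariate polynomial in $x_1$ of degree exactly $k$, so the base-case argument (using that $a_1$ is independent of $a_2,\ldots,a_n$ and uniform on $S$) yields $\Pr[f(a_1,\ldots,a_n)=0 \mid \bar E] \le k/|S|$. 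A union bound then gives
\[
\Pr[f(a_1,\ldots,a_n)=0] \;\le\; \Pr[E] + \Pr[f=0 \mid \bar E] \;\le\; \frac{d-k}{|S|} + \frac{k}{|S|} \;=\; \frac{d}{|S|}.
\]

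There is no genuine obstacle here; the only points needing a little care are choosing $g_k$ to be the coefficient of the \emph{top} power of $x_1$ occurring in $f$ (so that it is genuinely nonzero and its total degree drops by at least $k$, making the two bounds sum to exactly $d/|S|$), and invoking the independence of the coordinates $a_1,\ldots,a_n$ so that conditioning on the values of $a_2,\ldots,a_n$ leaves $a_1$ uniform on $S$.
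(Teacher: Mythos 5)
Your proof is correct: it is the standard induction-on-$n$ argument for the DeMillo--Lipton--Schwartz--Zippel lemma, handled carefully (nonzero top coefficient $g_k$, degree drop to $d-k$, independence of $a_1$ from $a_2,\ldots,a_n$, and the union bound). The paper states this lemma only by citation to the classical references and gives no proof of its own, so there is nothing to compare beyond noting that your argument is exactly the one found in those sources (with the implicit assumption that $S$ is finite and nonempty, which the uniform sampling already requires).
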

	
	Using the DeMillo-Lipton-Schwartz-Zippel Lemma, we now obtain a randomized algorithm for the \textsc{$k$-Path} detection problem, assuming we are able to efficiently compute $\Z$. Namely, for each $e\in E(G)$ we randomly select $y_e\in Y$ for some fixed $Y\subseteq \FF$ of size $|Y|=100k$, and compute $\Z$ as in \Cref{eq:walk-extensor-sum}. If this value is nonzero, we know there is a $k$-path. Otherwise, there might still be a $k$-path, and we might have been unlucky and had the nonzero polynomial vanish for the specific choices of the $y_e$ variables. We output that there is no $k$-path in this case. The probability of error is at most $\frac{k}{|Y|}= \frac{1}{100}$, and we can repeat this process to lower the probability of error as much as required.
	
	It remains to show how to compute $\Z$ efficiently. We do this using dynamic programming. For any $1\le s \le k$ and $1\le i \le n$ we define $Q_s[i]$ as the sum of walk extensors of length $s$ that end in $v_i$. Then $\Z=\sum_i Q_k[i]$, and we can compute the vector $Q_{s+1}$ from $Q_s$ by $Q_{s+1}[i] = \sum_{j:(v_j,v_i)\in E(G)} Q_s[j] \wedge y_{v_j, v_i} \wedge \chi(v_i)$. This requires $kn^2$ skew multiplications of extensors, each of which can be done in time $2^k \poly(k)$ (see \Cref{par:skew_prod}). Thus, we can compute $\Z$ with $2^k \poly(k) n^2$ field operations, which is also the total running time of the randomized algorithm. We note that this works over any sufficiently large field $\FF$ with $|\FF| \geq 100k$.
	
	Brand, Dell and Husfeldt~\cite{extensor-coding} also give a deterministic variant of the algorithm, at the cost of increasing the time complexity. This is done with the beautiful idea of, for each vertex $v$, \emph{`lifting'} the Vandermonde vector $\chi(v)$ to $\bar{\chi}(v)=\binom{\chi(v)}{0}\wedge \binom{0}{\chi(v)}$. By $\binom{\chi(v)}{0}$, we mean the vector of length $2k$ gotten by concatenating the vector $\chi(v)$ with $k$ zeros. We then do the calculations in $\Lambda(\FF^{2k})$ instead of $\Lambda(\FF^{k})$. Walk extensors of non-path walks still vanish. Now, as observed in \cite{extensor-coding}, for any $k$-path we have
	
	\[ \bar{\chi}(w_1)\wedge \bar{\chi}(w_2) \wedge ... \wedge \bar{\chi}(w_k) = e_{[2k]} \det\left(\binom{\chi(w_1)}{0}\bigg|\binom{0}{\chi(w_1)}\bigg....\bigg|\binom{\chi(w_k)}{0}\bigg|\binom{0}{\chi(w_k)}\right) \]
	
	By proper change of columns, the resulting determinant is equal to the determinant of a $2\times 2$ block diagonal matrix, where the two diagonal blocks are identical. By basic properties of determinants, we then obtain
	
	\begin{equation}\label{eq:lifts_determinant}
	\bar{\chi}(w_1)\wedge \bar{\chi}(w_2) \wedge ... \wedge \bar{\chi}(w_k) = (-1)^{\binom{k}{2}}e_{[2k]} \det(\chi(w_1)| \chi(w_2) | ... |\chi(w_k))^2.
	\end{equation}
	
	Therefore, choosing $\FF=\QQ$, the walk extensors of different $k$-paths all have the same sign, and hence never cancel. The dynamic programming given above still works for this case, but extensor operations require $2^{2k}\poly(k)$ field operations (note we only need skew multiplications). Additionally, all numbers involved are of size at most $n^{O(k)} = 2^{O(k \log n)}$, and hence operations require only $\poly(k) \polylog(n)$ time, or $\poly(k)$ time in the word-RAM model which we assume. This produces a deterministic algorithm with running time $4^k\poly(k) n^2$.
	
	\section{Overview of our techniques} \label{sec:overview}
	The key idea behind our algorithms is to identify appropriate sums of extensors which encode the answer to the problem (e.g., one where candidate solutions correspond to nonzero monomials), and which can be efficiently dynamically maintained. Similar to~\cite{extensor-coding}, we will frequently make use of the properties of exterior algebras to `nullify' repetitions (for example, of vertices in the \textsc{$k$-Path} problem, and of elements in the \textsc{Exact $k$-Partial Cover} problem). In many cases, the extensors or similar algebraic coding expressions used by the fastest known static algorithms seem difficult to efficiently maintain dynamically, and we will need to modify them, and identify useful precomputed values, to speed up our update time.
	While these techniques prove to be very well suited for designing sensitivity oracles and dynamic algorithms, we are unaware of previous work that uses the exterior algebra in such a way.
	
	\subsection{\textsc{\texorpdfstring{$k$}{k}-Path}}
	For this problem, we aim to maintain $\Z$, the sum over walk extensors of length-$k$ walks which we defined in~\Cref{eq:walk-extensor-sum} above.
	Employing properties of the exterior algebra as discussed above, it is seen that any walk that is not a path (that is, a walk that repeats a vertex) does not contribute to the sum, and we thus indirectly compute a sum over only paths.
	
	As we discussed in \Cref{sec:extensor-coding} above, Brand, Dell and Husfeldt~\cite{extensor-coding} showed that computing this sum allows for extracting valuable information about the $k$-paths in a graph. In particular, by carefully selecting vectors or degree-2 extensors $\chi(v)$ for each vertex $v$, it allows for the design of randomized and deterministic algorithms for the $k$-path detection problem, as well as for approximately counting the number of $k$-paths with high probability.
	
	Let us focus, first, on the case when our sensitivity oracle only allows for edge increments. 
	We begin by precomputing, for each pair of vertices in the graph, a sum over the walk extensors between those vertices, so that we can `stitch' them together appropriately when new edges are inserted. Put precisely, let $W_s(u,v)$ denote the set of walks of length $s$ from $u$ to $v$, and let $Q_s[u,v]$ be the precomputed value of the sum of walk extensors for walks in $W_s(u,v)$ in the initial graph. Now suppose a new edge $(v_1,v_2)$ is inserted. Then the additional walk extensors for walks of length $s$ between any pair of vertices $(t_1,t_2)$ can be computed as
	\begin{align*}
	\sum_{s'}\left(\sum_{w\in W_{s'}(t_1,v_1)}\chi(w)\right)\wedge y_{v_1,v_2}\wedge \left(\sum_{w\in W_{s-s'}(v_2,t_2)}\chi(w)\right) \\
	= \sum_{s'} Q_{s'}[t_1,v_1]\wedge y_{v_1,v_2}\wedge Q_{s-s'}[v_2,t_2] .
	\end{align*}
	At first, stitching might seem problematic, since we might need to iterate over the possible lengths of each stitched part. This is not be a problem for a single edge insertion, but leads to an exponential dependency on the number of edges inserted as we partition the total length $k$ between them. While this issue can be efficiently resolved with proper dynamic programming, we can circumvent this efficiently and more cleanly by instead working with sums over the walks of all possible lengths, while finally inspecting only the coefficient of $e_{[k]}$ in the resulting extensor, so that terms corresponding to paths of length other than $k$ will either cancel out or have too low degree. Indeed, we note that only walks of length $k$ contribute to this coefficient.
	
	Another, more substantial problem occurs when one tries to `stitch' paths into a larger path that uses several new edges: there are $\ell!$ different orders to pass through the $\ell$ new edges. We combine properties of the exterior algebra with an algebraic trick (which we describe in more detail shortly) to allow computing the sum of the effects of all of these cases with only a polynomial dependence on $\ell$.
	
	In order to be able to work with only a small subset of the precomputed $Q$ matrix, we additionally precompute the sum of walk extensors ending and beginning at each vertex (i.e., sums of rows and columns of $Q$), and the original sum over walk extensors in the original graph. This allows us to use only $O(\ell^2)$ precomputed values in the updating phase.
	
	Finally, by further employing properties of the exterior algebra, we are able to combine these techniques with the inclusion-exclusion principle, to also allow edge removals, while keeping the same time and space complexities for preprocessing and updates.
	
	We now describe the ideas in some more detail. We refer the reader to \Cref{sec:k-path-so} for the full details. As a preprocessing, we compute an $n\times n$ matrix $Q$ with extensor values, defined by
	\[ Q[i,j] = \sum_{\substack{(w_1,w_2,...,w_s) \in W \\ w_1=v_i,w_s=v_j}} \chi(w_1) \wedge y_{w_1w_2} \wedge \chi(w_2) \wedge y_{w_2w_3} \wedge \chi(w_3) \wedge ... \wedge y_{w_{s-1}w_s}\wedge \chi(w_s) \]
where $W$ is the set of \emph{all walks} in the graph, which can have any length greater than 0. This is the sum of all walk extensors for walks from $v_i$ to $v_j$. We then compute vectors $S$ (resp. $F$) of length $n$, similarly computing in their $i$th entry the sum of walk extensors starting (resp. finishing) at each vertex $v_i$. That is, $S[i]=\sum_j Q[i,j]$ and $F[j] = \sum_i Q[i,j]$.
	
	Finally, we compute $\Z=\sum_{i,j} Q[i,j]$, the sum of all walk extensors over all walks. $e_{[k]}\Z$ is exactly the value we aim to maintain after a query, as this is exactly the one used in the extensor coding technique (\Cref{sec:extensor-coding}).
    We explain how with proper dynamic programming we can compute these values in preprocessing with $2^k \poly(k) n^2$ operations over $\FF$, using only additions and skew multiplications.

    Then, when prompted with a query, we aim to compute the sum over walk extensors in the updated graph, denoted $\Znew$. Inspecting whether $e_{[k]}\Znew \neq 0$ will let us test whether the updated graph contains a $k$-path.

    We now begin handling a query. Given a list of $\ell$ updates that are each either an edge insertion or edge deletion, there are $n' \leq 2 \ell$ vertices at the endpoints of the updated edges. We begin by extracting the $n' \times n'$ sub-matrix $Q'$ of $Q$, and length-$n'$ sub-vectors $S', F'$ of $S,F$, gotten by taking entries corresponding to those endpoint vertices.

    We next define an $n' \times n'$ matrix $E^+_r$ corresponding to the $r$-th edge insertion by setting each of its entries to $0$ except for $E^+_r[i,j]=y_{i,j}$, where $(v_i,v_j)$ is the $r$-th inserted edge. We similarly define $E^-_r$ in the same way, to be the 0 matrix except for $E^-_r[i,j]=y_{i,j}$ where $(v_i,v_j)$ is the $r$-th deleted edge. We then define $\Delta^+=\sum_r E^+_r$, $\Delta^-=\sum_r E^-_r$, and $\Delta=\Delta^+ - \Delta^-$. We then compute that
    \[ \Znew = \Z + \sum_{i=1}^k F'^T\Delta(Q'\Delta)^{i-1} S'.\]
    This formula is crucial to our query algorithm, so we explain it in some detail here.

    We first explain the correctness of this formula for the case of only increments. In this case, $\Znew - \Z$ is exactly the sum of walk-extensors for walks that use the new edges.
	Here, upon expanding the expression when $\sum_r E^+_r$ is substituted for $\Delta^+$, we note that $F'^T\Delta^+S'$ counts the walks that use exactly one of the new edges. Indeed, for each $r$ the term $F'^T E_r^+S'$ accounts for walks passing through the $r$-th new edge exactly once. For counting walks that use exactly two of the newly inserted edges, we now need to compute $\sum_{r_1\neq r_2}F'^T E_{r_1}^+Q'E_{r_2}^+S'$. Indeed, this expression correctly accounts for walks with nonzero walk extensors starting at any vertex, continuing through one new edge, travelling to a new edge then through it, then continuing to end at any vertex. We further note that any specific path is counted in this way exactly once.
	
	We note that by linearity and the fact that only paths produce nonzero walk extensors, this is equal to $F'^T\Delta^+Q'\Delta ^+ S'$. That is, upon expanding the expression when $\sum_r E^+_r$ is substituted for $\Delta^+$, we get exactly the terms we intended, plus terms accounting for walks passing through the new edges twice, which evaluate to zero in the exterior algebra. This considerably simplifies the calculations.
	Continuing in the same fashion, we argue we obtain
	\[ \Znew - \Z = \sum_{i=1}^k F'^T\Delta^+(Q'\Delta ^+)^{i-1} S'\]
	in total for the incremental case.
	
	Consider now the general case, with both increments and decrements. Suppose $w$ is any walk on the vertices which uses edges from the union of the original graph and the updated graph. We assume $w$ is a path, since otherwise its walk extensor vanishes, and we need not worry about whether it appears in the sum $\Znew$. Now suppose $w$ has length at most $k$ and uses $a$ inserted edges and $b$ removed edges. 
	
	 Consider first when $a\ge 1$. In this case, $w$ is not counted in the original sum $\Z$. When substituting $\Delta=\sum_{j} E_j^+ - \sum_j E_j^-$ into $\sum_{i=1}^k F'^T\Delta(Q'\Delta)^{i-1} S'$ and expanding, we see that $w$ is counted only when all the chosen $E_j^+$ factors exactly correspond to the $a$ new edges used by $w$, in the order they are used. It is also counted when choosing any $b'\le b$ factors of type $E_j^-$ that appear in $w$, but they must also appear in the right order, and they contribute the walk extensor of $w$ with weight exactly $(-1)^{b'}$. In total, $w$ is accounted for exactly $\sum_{b'=0}^b \binom{b}{b'}(-1)^{b'} = (1-1)^b = [b=0]$ times\footnote{Here we use the notation $[b=0] := \begin{cases}
  1  & \text{if } b=0, \\
  0 &\text{otherwise.}
\end{cases}$} in $\Znew$, which is what we want.
	
	Now suppose $a=0$. Then a similar argument shows that the number of times it is counted in $\sum_{i=1}^k F'^T\Delta(Q'\Delta)^{i-1} S'$ is exactly $\sum_{b'=1}^b \binom{b}{b'}(-1)^{b'} = (1-1)^b - 1 = [b=0] - 1$. However, it is also counted in $\Z$ exactly once, so in total is counted $[b=0]$ times, which is also exactly what we want.
    
    Using this formula in \Cref{sec:k-path-so} below, we explain how we are generally able to compute $\Znew$ in $O(\ell^2 2^{\omega k/2})$ field operations. However, we come back to this running time once we describe the specifics of the randomized and deterministic sensitivity oracles, each with its own details.
	
	\subsection{\textsc{\texorpdfstring{$k$}{k}-Partial Cover}}
	We also make use of extensors to design fully dynamic algorithms for other parameterized problems. We focus here on one example: designing a deterministic dynamic algorithm for the \textsc{$k$-Partial Cover} problem. In this problem, we are given subsets $S_1,...,S_n\subseteq [N]$, and wish to find the minimum number of such subsets whose union has size at least $k$ (or report that no such collection exists).
	
	One could use a polynomial constructed by Koutis and Williams~\cite{algebraic-problems} for this problem, combined with similar techniques to those we used above for \textsc{$k$-Path}, to devise a deterministic dynamic algorithm with update time $O(2^{\omega k})$. However, we instead give a different polynomial which is easier to dynamically maintain, achieving a faster $O^*(4^k)$ update time.
	
	As in the \textsc{$k$-Path} problem, we give each element in the universe $a\in [N]$ a Vandermonde vector $\chi(a)\in\QQ^k$, then lift it to $\bar{\chi}(a)=\binom{\chi(a)}{0}\wedge\binom{0}{\chi(a)}\in\Lambda(\QQ^{2k})$.
	
	We introduce a single new variable $z$, and consider the polynomial
	\[ P(z) = \prod_S \left(1 + z\left[\prod_{a\in S} (1+\bar{\chi}(a)) - 1\right] \right).\]
	We argue that the solution we seek is the minimum $t$ for which $[e_{[2k]}z^t]P\neq 0$ (that is, the coefficient of $e_{[2k]}$ in the coefficient of $z^t$ in $P$). Our goal, then, will be to maintain the value of $P(z)$, and calculations will take place in polynomials over extensors, $\Lambda(\QQ^{2k})[z]$. We also argue that $\deg P(z) \le k$, and so is not too large to handle in update steps.
	
	We first argue that the product in $P$ should only be computed over sets $S$ of cardinality less than $k$, as larger sets can be treated separately, knowing that the optimal solution is 1 if any such set exists in the collection. Since $P(z)$ is defined as a product over sets, to update it with a new set $S$ of cardinality $|S|<k$, we can multiply $P(z)$ by the corresponding factor $1 + z\left[\prod_{a\in S} (1+\bar{\chi}(a)) - 1\right]$. This seems too slow at first, since general extensor multiplication requires $O(2^{\omega k})$ time, which is more than our target $O^*(4^k)$. However, rather than computing the factor and then using general extensor multiplication, we argue that we can indirectly multiply by this factor using only additions and skew multiplications by rearranging the contributing terms, thus requiring only $O^*(4^k)$ field operations.
	Indeed, the product of $P(z)$ with the new factor is $(1-z)P(z) + P(z)\prod_{a\in S} (1+\bar{\chi}(a))$, which can be performed by separately computing $(1-z)P(z)$ and $P(z)\prod_{a\in S} (1+\bar{\chi}(a))$, where the latter can be computed by repeated skew-multiplications.
	
	A new problem arises when a set $S$ is removed. For this, we need to somehow cancel the factor in $P(z)$ corresponding to $S$. We show that in this case the corresponding factor has an inverse in $\Lambda(\QQ^{2k})[z]$: we first note that it can be written as $1+X$ for an element $X$ that contains only extensors of degree $\ge 2$, which implies that $X^{k+1}=0$, and using this, we observe that
	\[ (1+X)(1-X+X^2-\ldots+(-X)^k)=1-(-X)^{k+1}=1,\]
	so $(1+X)^{-1} = 1-X+X^2-\ldots+(-X)^k$. Similar to the previous case, we argue that multiplying by this factor can also be done with $\poly(k)$ additions and skew multiplications, and so can be done in $O^*(4^k)$ time.
	
	This and all our other dynamic algorithms are described in detail in \Cref{sec:dynamicalgs}.

\section{Sensitivity Oracle for the \textsc{\texorpdfstring{$k$}{k}-Path} problem (allowing both increments and decrements)} \label{sec:k-path-so}

We give two versions of the algorithm - one randomized and one deterministic, with different time bounds.

\begin{theorem} \label{thm:kpath-so-rand}
	For the \textsc{$k$-Path} detection problem on a directed graph $G$ on $n$ vertices and with $m$ edges, there is a randomized sensitivity oracle with preprocessing time $2^k \poly(k) \min\{nm, n^\omega\}$ and query time $\ell^2 2^k \poly(k)$.
\end{theorem}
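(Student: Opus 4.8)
The plan is to instantiate the general framework sketched in the overview with the randomized choice of vectors, i.e., taking $\FF = \FF_{2^d}$ (or $\QQ$ with random $y_e$) and using degree-$1$ Vandermonde vectors $\chi(v) \in \FF^k$, so that all extensor products needed are skew products. First I would handle the \textbf{preprocessing}: define the matrix $Q$ whose $(i,j)$ entry is the sum of walk extensors over \emph{all} walks from $v_i$ to $v_j$ in the initial graph, together with the row/column sums $S[i] = \sum_j Q[i,j]$, $F[j] = \sum_i Q[i,j]$, and the total $\Z = \sum_{i,j} Q[i,j]$. The point is that only walks of length exactly $k$ contribute to the coefficient of $e_{[k]}$, so working with all-length walks loses nothing. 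To compute $Q$, run the dynamic program $Q_{s+1}[\,\cdot\,,j] = \sum_{i : (v_i,v_j)\in E} Q_s[\,\cdot\,,i] \wedge y_{v_i v_j} \wedge \chi(v_j)$ for $s = 1, \ldots, k$, which is $O(kn^2)$ skew products, each costing $2^k\poly(k)$ field operations by \Cref{par:skew_prod}; since extensors of degree $> k$ never arise (degree increases by one per step), truncating at degree $k$ keeps everything bounded. This gives the $n^\omega$ form of the bound. For the $nm$ form, observe the dynamic program only iterates over actual edges, giving $O(km)$ skew products; taking the minimum yields $2^k\poly(k)\min\{nm, n^\omega\}$. (In characteristic $2$ one samples $y_e$ from $\FF_{2^d}$ with $2^d \geq 100k$; correctness of detection follows from \Cref{lem:k-path-existence} and \Cref{lem:schwartz-zippel}.)

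Next I would handle the \textbf{query}. Given $\ell$ updates touching $n' \leq 2\ell$ endpoint vertices, extract the $n'\times n'$ submatrix $Q'$ and subvectors $S', F'$, build the insertion/deletion matrices $E^\pm_r$ and $\Delta = \Delta^+ - \Delta^-$ as in the overview, and compute
\[
\Znew = \Z + \sum_{i=1}^{k} F'^{\,T}\Delta (Q'\Delta)^{i-1} S'.
\]
The combinatorial correctness of this formula is exactly the inclusion–exclusion argument given in the overview: a path $w$ of length $\leq k$ using $a$ inserted and $b$ removed edges is counted $\sum_{b'=0}^{b}\binom{b}{b'}(-1)^{b'} = [b=0]$ times when $a \geq 1$ (and it is absent from $\Z$), and $1 + \sum_{b'=1}^{b}\binom{b}{b'}(-1)^{b'} = [b=0]$ times when $a = 0$ (where the $+1$ is its contribution to $\Z$); paths of length $> k$ and non-paths contribute nothing to $[e_{[k]}]$, the former by degree, the latter by a repeated vector vanishing. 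Crucially the substitution $\Delta \mapsto \sum_r E_r^\pm$ is valid because the "extra" terms it introduces correspond to walks traversing some update edge twice, hence repeating a vertex, hence vanishing — this is what lets us write the compact matrix product instead of summing over orderings of the $\ell$ edges. Then $e_{[k]}\Znew \neq 0$ certifies a $k$-path, with the usual one-sided error; amplify by repetition.

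For the \textbf{query running time}: the sum is a telescoping product of $n'\times n'$ matrices with entries in $\Lambda(\FF^k)$ (and scalar $y$-values on $\Delta$, which act by skew multiplication). Computing $Q'\Delta$ costs $O(n'^2)$ skew products since $\Delta$ has one nonzero per row/column associated with an update; more carefully, each of the $i = 1, \ldots, k$ iterations maintains a length-$n'$ vector of extensors and multiplies by $\Delta$ (a skew/scalar step, $O(n')$ skew products) then by $Q'$ ($O(n'^2)$ additions of extensors). With $n' = O(\ell)$ this is $O(k\ell^2)$ extensor operations, each $2^k\poly(k)$, for $\ell^2 2^k\poly(k)$ total — independent of $n$ since only $O(\ell)$ words specify the query. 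The main obstacle, and the step I would be most careful about, is making the query genuinely avoid general extensor multiplication: a naive reading of $(Q'\Delta)^{i-1}$ suggests multiplying arbitrary extensors together (which would cost $2^{\omega k/2}$ per product by \Cref{par:general-prod}), but because $\Delta$ carries only scalar $y$-weights, every product against $\Delta$ is a skew product, and the $Q'$-multiplications are just $\FF$-linear combinations of already-computed extensors — so I must organize the evaluation as (vector of extensors) $\xrightarrow{\Delta}$ (vector) $\xrightarrow{Q'}$ (vector), never forming a product of two "fat" extensors. Getting this bookkeeping right, together with confirming the degree-truncation at $k$ is consistent throughout both phases, is where the real work lies.
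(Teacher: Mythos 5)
Your overall architecture is the same as the paper's: precompute the all-pairs walk-extensor matrix $Q$ together with $S$, $F$, $\Z$, and answer a query via $\Znew = \Z + \sum_{i=1}^{k} F'^{T}\Delta(Q'\Delta)^{i-1}S'$ with the same inclusion--exclusion/vanishing argument (this is exactly \Cref{lem:Z-new}). However, your query-time analysis rests on a false claim. You assert that ``the $Q'$-multiplications are just $\FF$-linear combinations of already-computed extensors'' and that one ``never form[s] a product of two fat extensors.'' This is not true: the entries of $Q'$, $S'$ and $F'$ are themselves general elements of $\Lambda(\FF^k)$ (sums of walk extensors of many degrees), so evaluating, say, $Q'(\Delta S')$ requires entrywise wedge products of two general extensors --- only the multiplications \emph{by $\Delta$} are scalar. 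No re-parenthesization of $F'^{T}\Delta(Q'\Delta)^{i-1}S'$ avoids fat-times-fat products, since already the $i=2$ term contains $F'[a]\wedge Q'[b,c]\wedge S'[d]$. The paper's query bound does not come from avoiding these products but from performing them cheaply: it counts $O(\ell^2)$ general extensor multiplications (\Cref{lem:Z-new-complexity}) and then, in the randomized case, chooses $\Char\FF=2$ so that $\Lambda(\FF^k)$ is commutative and a general product costs $2^k\poly(k)$ field operations via fast subset convolution (\Cref{par:general-prod-char-2}). With that ingredient your operation count ($O(k\ell^2)$ products) does give $\ell^2 2^k\poly(k)$; without it, your argument for the query time fails. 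Note also that your parenthetical alternative ``$\QQ$ with random $y_e$'' would not yield the stated query time, since over $\QQ$ a general product costs $O(2^{\omega k/2}) \gg 2^k\poly(k)$ (\Cref{par:general-prod}).

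Two smaller accounting issues in the preprocessing: computing the full $n\times n$ matrix $Q$ cannot be done with ``$O(km)$ skew products'' (that would not even touch all entries), and ``$O(kn^2)$ skew products'' does not by itself explain the $n^\omega$ term. The correct bookkeeping, as in the paper, is to write $Q_{s+1}=(Q_sY)D$-style steps: the multiplication by the scalar matrix $Y$ costs $\min\{nm, n^\omega\}$ scalar-matrix work per extensor coordinate (hence the $\min\{nm,n^\omega\}\cdot 2^k$ term), and the multiplication by the diagonal of $\chi$'s costs $O(n^2)$ skew products per step by \Cref{par:skew_prod}. Your DP is the same algorithm and its true cost does fit the claimed $2^k\poly(k)\min\{nm,n^\omega\}$, but the justification as written is off.
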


\begin{theorem} \label{thm:kpath-so-det}
	For the \textsc{$k$-Path} detection on a directed graph $G$ on $n$ vertices and with $m$ edges, there is a deterministic sensitivity oracle with $4^k\poly(k) \min\{nm, n^\omega\}$ preprocessing time and $O(\ell^2 2^{\omega k})$ query time.
\end{theorem}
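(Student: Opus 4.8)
Both theorems come from a single construction specialized to two choices of field: a large field $\FF=\FF_{2^d}$ of characteristic $2$ with $2^d=\Theta(n+k)$ and degree-$1$ Vandermonde vectors $\chi$ for the Monte Carlo version (so that $\Lambda(\FF^k)$ is commutative and general extensor products cost only $2^k\poly(k)$ by fast subset convolution, \Cref{par:general-prod-char-2}), and $\FF=\QQ$ with the lifted degree-$2$ extensors $\bar\chi$ and all edge variables set to $1$ for the deterministic version (so that distinct $k$-paths contribute with the same sign and never cancel, as in \Cref{sec:extensor-coding}, at the price of working in $\Lambda(\QQ^{2k})$ where a general product costs $2^{\omega\cdot 2k/2}=2^{\omega k}$ by \Cref{par:general-prod}).

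\emph{Preprocessing.} The plan is to build the walk table $Q[i,j]$ (sum of walk extensors of all walks $v_i\to v_j$), the row/column sums $S,F$, and $\Z$, exactly as in \Cref{sec:k-path-so}. Since a walk of length exceeding $k$ has a walk extensor of degree exceeding the top degree of the algebra ($k$, resp.\ $2k$), it suffices to take $Q=\sum_{s=1}^{k}Q_s$ with $Q_s$ the length-$s$ contribution, computed by the recurrence $Q_{s+1}[i,j]=\sum_{\ell:(v_\ell,v_j)\in E}Q_s[i,\ell]\wedge y_{\ell j}\wedge\chi(v_j)$. Each of the $k$ rounds is a matrix product of an extensor matrix against a scalar-weighted incidence structure: run sparsely it costs $O(nm)$ skew products (\Cref{par:skew_prod}), and run via fast matrix multiplication it costs $O(n^\omega)$ skew products — legitimate because every $\FF$-linear combination arising on the right factor stays a degree-$1$ extensor (resp., in the lifted case, one first multiplies by the scalar adjacency matrix in $O(n^\omega)$ scalings and then appends the two degree-$1$ halves of $\bar\chi$ columnwise by $O(n^2)$ skew products, keeping only skew products as in \Cref{sec:extensor-coding}). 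With field operations costing $O(1)$ word-RAM steps (char $2$, field size $\poly(n)$), resp.\ $\poly(k)$ (numbers of magnitude $n^{O(k)}$ over $\QQ$), this is $2^k\poly(k)\min\{nm,n^\omega\}$, resp.\ $4^k\poly(k)\min\{nm,n^\omega\}$; the sums $S,F,\Z$ are $O(n^2)$ further additions. Correctness of this base case is \Cref{lem:k-path-existence} together with the degree/sign arguments of \Cref{sec:extensor-coding}.

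\emph{Query and its correctness.} Given $\ell$ updates, collect the $n'\le 2\ell$ special vertices (endpoints of inserted/deleted edges together with any failed vertex), extract the $n'\times n'$ submatrix $Q'$ and subvectors $S',F'$, form the scalar matrices $\Delta^+,\Delta^-$ carrying the edge variables of the inserted/deleted edges, set $\Delta=\Delta^+-\Delta^-$, and evaluate $\Znew=\Z+\sum_{i=1}^{k}F'^T\Delta(Q'\Delta)^{i-1}S'$ left to right: maintain a length-$n'$ row vector of extensors with $a_1=F'^T\Delta$ and $a_{i+1}=(a_iQ')\Delta$, accumulating $\sum_i a_iS'$. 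The only costly steps are the $O(n'^2)=O(\ell^2)$ \emph{general} extensor products per power (the entries of $Q'$, $S'$ and the $a_i$ are arbitrary extensors), so over $\le k$ powers the query uses $O(k\ell^2)$ general extensor products: $\ell^2 2^k\poly(k)$ in the char-$2$ case and $O(\ell^2 2^{\omega k})$ over $\QQ$, independent of $n$. That $[e_{[k]}]\Znew$ (resp.\ $[e_{[2k]}]\Znew$) equals the extensor sum over $k$-paths of the updated graph is the inclusion--exclusion argument of \Cref{sec:overview}: a path using $a$ inserted and $b$ deleted edges is counted $\sum_{b'\le b}\binom{b}{b'}(-1)^{b'}$ times if $a\ge1$ and $1+\sum_{b'\ge1}\binom{b}{b'}(-1)^{b'}$ times if $a=0$, both equal to $(1-1)^b=[b=0]$. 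Vertex failures are folded into the same framework by an analogous inclusion--exclusion over which failed vertices a path visits, using auxiliary precomputed row/column sums; the one technical point is that a path must be split at an \emph{edge incident} to the failed vertex rather than at the vertex, since splitting at the vertex repeats $\chi(v)$ and the extensor vanishes. For the randomized version, $[e_{[k]}]\Znew$ is a nonzero polynomial in the $y_e$ whenever a $k$-path exists (its $k$-path monomials are distinct and carry nonzero Vandermonde determinants over $\FF_{2^d}$, arguing as in \Cref{lem:k-path-existence}), so \Cref{lem:schwartz-zippel} with $|\FF|\gg k$ gives constant error, amplifiable by repetition.

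\emph{Main obstacle.} The crux is the query time. The formula as written forces $\Theta(\ell^2)$ products of two \emph{general} extensors, and over a field of zero or odd characteristic such a product costs $2^{\omega k/2}$ (\Cref{par:general-prod}), already above the randomized target $2^k$. The way down to $\ell^2 2^k\poly(k)$ is precisely to run the Monte Carlo oracle over $\FF_{2^d}$, where $\Lambda(\FF^k)$ is commutative and these products are subset convolutions computable in $2^k\poly(k)$ (\Cref{par:general-prod-char-2}); one must then re-check that moving to characteristic $2$ does not spoil the no-cancellation argument, which it does not since the $y$-monomials of distinct $k$-paths are still distinct and the Vandermonde determinants are still nonzero over $\FF_{2^d}$. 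The deterministic bound then uses the identical query and preprocessing schemes verbatim over $\Lambda(\QQ^{2k})$, where \Cref{par:general-prod} yields $2^{\omega k}$ per product and hence the stated $O(\ell^2 2^{\omega k})$, with the $\poly(k)$ and $\polylog(n)$ overheads from bounded-magnitude rational arithmetic absorbed into the stated bounds.
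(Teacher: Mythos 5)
Your proposal is correct and follows essentially the same route as the paper: the same precomputed walk-extensor table $Q$ (built by alternating scalar-adjacency multiplications with skew products), the same query formula $\Znew=\Z+\sum_{i=1}^{k}F'^T\Delta(Q'\Delta)^{i-1}S'$ with the identical inclusion--exclusion counting argument, and the same deterministic instantiation via lifted extensors in $\Lambda(\QQ^{2k})$ giving $4^k\poly(k)\min\{nm,n^\omega\}$ preprocessing and $O(\ell^2 2^{\omega k})$ per query. Your side remark on vertex failures differs from the paper (which splits each vertex into $v_{\text{in}},v_{\text{out}}$ with $\chi(v_{\text{in}})=1$ and reduces a vertex failure to deleting the edge $(v_{\text{in}},v_{\text{out}})$), but that feature is not part of the stated theorem, so this does not affect correctness.
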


\begin{corollary} \label{cor:kpath-so-undirected}
	For the \textsc{$k$-Path} detection on an \emph{undirected} graph, there is a randomized sensitivity oracle with $2^k \poly(k) n^2$ preprocessing time and $\ell^2 2^k \poly(k)$ query time, and a deterministic sensitivity oracle with $4^k \poly(k) n^2$ preprocessing time and $O(\ell^2 2^{\omega k})$ query time.
\end{corollary}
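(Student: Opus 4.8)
The plan is to derive \Cref{cor:kpath-so-undirected} from the directed oracles of \Cref{thm:kpath-so-rand} and \Cref{thm:kpath-so-det} by a black-box reduction, together with one combinatorial observation needed to tighten the preprocessing cost. Given an undirected graph $G$ on $n$ vertices, let $\vec G$ be its \emph{bidirected} version, obtained by replacing every undirected edge $\{u,v\}$ by the two arcs $(u,v)$ and $(v,u)$, and run the directed oracle on $\vec G$. A sequence $v_0,v_1,\ldots,v_k$ of pairwise distinct vertices is a $k$-path in $G$ if and only if $v_0\to v_1\to\cdots\to v_k$ is a directed $k$-path in $\vec G$, so the two instances always have the same answer. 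Moreover each query update on $G$ maps to query updates on $\vec G$: inserting or deleting an undirected edge $\{u,v\}$ corresponds to inserting or deleting the arc pair $(u,v),(v,u)$ --- two arc updates --- and failing a vertex corresponds to failing that same vertex in $\vec G$. Hence $\ell$ undirected updates become at most $2\ell$ updates to $\vec G$, and since the query times in \Cref{thm:kpath-so-rand} and \Cref{thm:kpath-so-det} are polynomial in the number of updates, the query time remains $\ell^2 2^k\poly(k)$ in the randomized case and $O(\ell^2 2^{\omega k})$ in the deterministic case.

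The only point that does not follow immediately from the reduction is the preprocessing bound. Applying \Cref{thm:kpath-so-rand} and \Cref{thm:kpath-so-det} to $\vec G$, which has $2m\le n(n-1)$ arcs, gives preprocessing $2^k\poly(k)\min\{2nm,n^\omega\}$ and $4^k\poly(k)\min\{2nm,n^\omega\}$, which in the worst (dense) case is $2^k\poly(k)n^\omega$ rather than the claimed $2^k\poly(k)n^2$ (and similarly for the deterministic variant). To shave off the extra $n^{\omega-2}$ factor I would invoke the Erd\H{o}s--Gallai theorem: an $n$-vertex undirected graph with more than $(k-1)n/2$ edges contains a path with $k$ edges. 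So when $m$ lies well above this threshold the answer is always \textsc{Yes}; the oracle records this in $O(1)$ preprocessing time and, on a query, checks in $O(\ell)$ time that the updated graph still exceeds the threshold before answering \textsc{Yes}. Otherwise $m=O(kn)$, so $\min\{2nm,n^\omega\}=O(kn^2)$ and the preprocessing supplied by \Cref{thm:kpath-so-rand} and \Cref{thm:kpath-so-det} is $2^k\poly(k)n^2$ (resp.\ $4^k\poly(k)n^2$), as desired.

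I expect the main obstacle to be exactly this last piece of bookkeeping: one must fix the dense/sparse threshold and the always-\textsc{Yes} short-circuit so that the preprocessing stays \emph{independent of $\ell$} --- which, unlike the preprocessing of~\cite{so-decremental}, it is required to be here --- while still correctly handling every admissible query size, including queries that delete enough edges to cross the threshold (e.g.\ by placing the cutoff at roughly $(k-1)n/2+\ell$, or by bucketing the always-\textsc{Yes} case separately). The combinatorial ingredient (Erd\H{o}s--Gallai) and the reduction are otherwise routine, and the whole argument is uniform over the randomized and deterministic oracles, so once the threshold is set up carefully everything else is immediate from the two cited theorems.
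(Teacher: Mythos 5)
Your proposal is correct and matches the paper's own argument: the paper likewise invokes the directed oracles of \Cref{thm:kpath-so-rand} and \Cref{thm:kpath-so-det} and handles the dense case by noting that an undirected graph with $m>nk$ edges always contains a $k$-path, so with threshold $m>nk+\ell$ (assuming $\ell<n$) the oracle answers Yes with no preprocessing, while otherwise $m=O(kn)$ makes $\min\{nm,n^\omega\}=O(kn^2)$. Your Erd\H{o}s--Gallai bound and $\ell$-shifted cutoff are exactly this bookkeeping, so the two proofs are essentially identical.
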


We mention that it is also possible to allow vertex decrements with the same asymptotic times (i.e., with only a constant factor overhead). This is contrasted with the solution in \cite{so-decremental} which reduces to the $2k$-path problem, hence doubling the value of $k$ in the time bounds formulas. This is described in \Cref{sec:vertex-failures}.

We begin by providing the common framework for the randomized and deterministic versions, then describe the necessary adjustments for each of them.

\Cref{cor:kpath-so-undirected} then follows by noting that in an undirected graph with $m>nk$, there is always a $k$-path \cite{edge-bound}, hence if $m>nk+\ell$ there should be no preprocessing, and at update time the result is always `Yes'. We implicitly assume $\ell<n$, as otherwise it would be better to just compute the result from scratch.

\subsection{Preprocessing}
Given a directed graph $G$, we denote the set of vertices by $V=\{v_1,...,v_n\}$, and we denote by $W$ the set of \emph{all walks} in the graph, which can have any length greater than 0. Recall that walks may repeat vertices, whereas paths may not. For each pair $\{i,j\} \in [n]^2$ we have an \emph{edge variable} $y_{v_i,v_j}$, whose possible values will be in $\FF$. Note we also define variables for edges not presented in the initial input graph, because new edges are allowed to be inserted.

Furthermore, we define vectors $\chi:V\to \FF^k$ where $\FF$ is a field to be chosen later, by $\chi(v_i)=(1, j, j^2, \ldots, j^{k-1})$ where $j = f(i)$ for some injective $f : [n] \to \FF$. We call these \emph{Vandermonde vectors}.

In the preprocessing phase, we compute the following values.

\begin{itemize}
	\item An $n\times n$ matrix $Q$ with extensor values, such that
	\begin{equation}
	    Q[i,j] = \sum_{\substack{(w_1,w_2,...,w_s) \in W \\ w_1=v_i,w_s=v_j}} \chi(w_1) \wedge y_{w_1w_2} \wedge \chi(w_2) \wedge y_{w_2w_3} \wedge \chi(w_3) \wedge ... \wedge y_{w_{s-1}w_s} \wedge \chi(w_s)
	\end{equation}
	that is, the sum of extensors along all walks that start with $v_i$ and finish with $v_j$.
	It is crucial to note that the sum is in fact finite, because any walk of length more than $k$ will contribute 0 to the sum.
	\item Vectors $S$ and $F$ of length $n$ with extensors values, corresponding to all walks that start (resp. finish) with each specific vertex. So
	\begin{align}
	    S[i] = \sum_{j=1}^n Q[i,j] \\
	    F[j] = \sum_{i=1}^n Q[i,j]
	\end{align}
	\item An extensor $\Z$ that is the sum of all walk extensors over all walks. Thus, $\Z = \sum_{i,j} Q[i,j]$.
\end{itemize}

In order to compute $Q$, we iteratively compute $Q_s$ where $Q_s$ is defined in the same way as $Q$, but for walks of length exactly $s$. We will finally compute $Q=\sum_{s=1}^k Q_s$. This is done by noting that $Q_1$ is diagonal with $Q_1[i,i]=\chi(v_i)$, and that a walk from $v_i$ to $v_j$ of length $s+1$ can be described as a walk of length $s$ from $v_i$ to some other $v_r$ followed by an edge from $v_r$ to $v_j$, giving $Q_{s+1} = Q_s YQ_1$, where $Y[i,j]=y_{i,j}$ if $(v_i,v_j)\in E(G)$ and 0 otherwise. Thus, denoting $D=Q_1$, we have
\begin{equation}
    \forall s\ge 1: Q_s = D(YD)^{s-1} .
\end{equation}

Since a general product of two extensors in $\Lambda(\FF^k)$ can be done in $O(2^{\omega k / 2})$ field operations (\Cref{par:general-prod}), and since matrix multiplication can be done with $O(n^\omega)$ ring operations over any (possibly noncommutative) ring, we can compute $Q$ with $O(n^\omega 2^{\omega k/2})$ operations over $\FF$.

However, we can do this even faster, by iteratively multiplying by $D$ and $Y$ separately. Multiplication by $D$ requires only $n^2$ multiplications of extensors by mere vectors (that is, skew multiplications), and hence can be done in time $2^k \poly(k) n^2$ (\Cref{par:skew_prod}). Multiplication by $Y$ requires multiplication of extensors only by scalars, and hence can be done in time $O(2^k)$ per operation as well. Here, we can either use general matrix multiplication algorithm in $n^{\omega+o(1)}$ field operations, or sparse matrix multiplication in $O(nm)$ operations.

\subsection{Computing result after a query} \label{sec:k-path-query}
By following the techniques explained in \Cref{sec:extensor-coding}, we note that it is enough to compute the sum of walk extensors in the updated graph. However, compared to the static algorithm, we maintain the sum over all walks, and not just walks of length $k$. This is remedied, when answering queries, by extracting the coefficient of $e_{[k]}$ in $\Z$, to which only the length-$k$ walks contribute.

As a warm up, consider adding a single edge first, $(v_1, v_2)$. Then all the walks that contributed to $\Z$ still exist, but we added some new walks. Specifically, these are the walks that include the new edge at least once. However, walk extensors for walks that include the edge more than once are not contributing to the sum, and hence the only new walks that have to be considered are those that can be decomposed by a walk ending in $v_1$, then traversing the edge $(v_1,v_2)$, then a walk starting at $v_2$. Thus, the new walk sum is $\Z_{\text{new}}=\Z+F[1]\wedge y_{1,2} \wedge S[2]$.

For the general case, we are given a list of $\ell$ updates that are each either an edge insertion or edge deletion. We assume no edge appears twice in this list, as we can just cancel an edge insertion + deletion. Let $I$ be the list of indices $i$ for which $v_i$ is an endpoint of one of the updated edges. We see that $|I|\le 2\ell$. Further, let $Q'=Q[I,I]$ be the sub-matrix $Q$ where we keep only walks between two vertices in $I$. Similarly define $F'=F[I]$ and $S'=S[I]$. Without loss of generality, we assume $I=\{1,...,n'\}$ where $n'=|I|\le 2\ell$.

We next define an $n' \times n'$ matrix $E^+_r$ corresponding to the $r$-th edge insertion by setting each of its entries to $0$ except for $E^+_r[i,j]=y_{i,j}$, where $(v_i,v_j)$ is the $r$-th inserted edge. We similarly define $E^-_r$ to be the 0 matrix except for $E^-_r[i,j]=y_{i,j}$ where $(v_i,v_j)$ is the $r$-th deleted edge. We then define $\Delta^+=\sum_r E^+_r$, $\Delta^-=\sum_r E^-_r$, and $\Delta=\Delta^+ - \Delta^-$. We claim that

\begin{lemma}\label{lem:Z-new}
	The value of the updated $\Z$, following the edge insertions and deletions, is
	\begin{equation}
	    \Znew = \Z + \sum_{i=1}^k F'^T\Delta(Q'\Delta)^{i-1} S'
	\end{equation}
\end{lemma}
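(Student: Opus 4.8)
### Proof Proposal

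The plan is to verify the formula by a careful bookkeeping argument that tracks how each walk on the vertex set $V$ contributes to both sides. The key observation is that $\Znew$ is, by definition, the sum of walk extensors over all walks in the \emph{updated} graph $G'$, whose edge set is $\big(E(G) \setminus \{\text{deleted edges}\}\big) \cup \{\text{inserted edges}\}$. Since only walks that are paths (of length $\le k$) contribute nonzero extensors, we may restrict attention throughout to simple paths $w$ of length at most $k$; all other walks contribute $0$ to both $\Z$, $\Znew$, and every term on the right-hand side, because they repeat a vertex and hence repeat a Vandermonde vector in their walk extensor.

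First I would fix such a path $w$ and let $a$ be the number of inserted edges it uses and $b$ the number of deleted edges it uses; these are exactly the edges of $w$ incident only to vertices in $I$ that change status, and in particular all of them connect two vertices in $I = \{1,\dots,n'\}$, which is what makes the restriction to $Q', S', F'$ legitimate. I would next expand $\sum_{i=1}^k F'^T \Delta (Q'\Delta)^{i-1} S'$ by substituting $\Delta = \sum_r E^+_r - \sum_r E^-_r$ and distributing. Each resulting monomial $F'^T E^{\pm}_{r_1} Q' E^{\pm}_{r_2} Q' \cdots E^{\pm}_{r_c} S'$ picks out walks that: start anywhere, traverse a $Q'$-walk to the tail of updated edge $r_1$, cross that edge, traverse a $Q'$-walk to updated edge $r_2$, and so on, ending anywhere after crossing edge $r_c$. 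Concatenating these pieces reconstitutes some walk on $V$; by the same exterior-algebra cancellation, the only such concatenations surviving with nonzero extensor are the ones where the whole concatenation is a simple path, i.e. the updated edges $r_1,\dots,r_c$ are a subsequence of the edges of a single path $w$, appearing in the order they occur along $w$, and each $Q'$-block fills in the (possibly empty) sub-walk of $w$ between consecutive chosen updated edges — here I would invoke the fact, already established for $\Z$ in \Cref{sec:extensor-coding}, that $Q[i,j]$ sums walk extensors over \emph{all} walks from $v_i$ to $v_j$ and that $Z = \sum F'^T (\ldots) S'$ telescopes correctly, together with the observation that only length-$k$ contributions matter after extracting $[e_{[k]}]$, so summing over all lengths of each block is harmless.

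The heart of the argument is then the counting identity: for a fixed path $w$ using $a$ inserted and $b$ deleted edges, the chosen $E^+$ factors must be \emph{exactly} the $a$ inserted edges of $w$ in order (omitting any would break the concatenation into disconnected pieces, and including an extra one not on $w$ would force a vertex repetition), while the chosen $E^-$ factors may be any subset of the $b$ deleted edges of $w$, again in order, each contributing a sign $-1$. Hence the net multiplicity of $w$'s extensor in $\sum_{i=1}^k F'^T \Delta (Q'\Delta)^{i-1} S'$ is $\sum_{b'=0}^{b} \binom{b}{b'}(-1)^{b'} = [a \ge 1]\cdot\ldots$ — more precisely, when $a \ge 1$ the sum ranges over $b' \in \{0,\dots,b\}$ giving $(1-1)^b = [b=0]$, and when $a = 0$ the empty choice of $E^+$ factors also contributes but then we additionally require at least one $E^-$ factor (otherwise $c=0$, which is not among the terms $i \ge 1$), giving $\sum_{b'=1}^b \binom{b}{b'}(-1)^{b'} = [b=0] - 1$. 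Adding back the contribution of $w$ to $\Z$, which is $[a=0]$ (it lies in the original graph iff it uses no inserted edges — but also no deleted edges are forbidden in $\Z$... rather, $w$ contributes to $\Z$ iff all its edges lie in $E(G)$, i.e. $a = 0$), one checks that in all four cases $(a,b \gtrless 0)$ the total multiplicity equals $[a = 0 \text{ or } a\ge1]\cdot[b=0]$, which is precisely $[b=0]$ when combined correctly — i.e. $w$ survives in $\Znew$ iff it uses no deleted edges, which is exactly the condition for $w$ to be a path in $G'$. Matching the degree-$k$ part of both sides then yields the claimed equality of $\Z$-values (understood, per the paper's convention, as the $e_{[k]}$-coefficients, though the stated identity in fact holds at the level of full extensors).

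The main obstacle I anticipate is making the "concatenation reconstitutes a path in the right order" step fully rigorous: one must argue that a product $F'^T E_{r_1} Q' \cdots E_{r_c} S'$ receives a nonzero contribution \emph{only} from choices where $r_1,\dots,r_c$ occur in that cyclic/linear order along some common simple path, and that each such path is hit \emph{exactly once} for each valid subsequence choice. This requires combining (i) the multiplicativity of walk extensors under concatenation, (ii) the vanishing of any extensor product containing a repeated Vandermonde vector, and (iii) the uniqueness of the decomposition of a path into "pre-$r_1$ segment / edge $r_1$ / segment / edge $r_2$ / $\dots$". Everything else — the binomial cancellation and the four-case check — is routine once this structural claim is in hand.
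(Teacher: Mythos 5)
Your proposal is correct and follows essentially the same route as the paper's proof: expand $\Delta=\sum_r E^+_r-\sum_r E^-_r$, fix a path $w$ using $a$ inserted and $b$ deleted edges (non-path walks vanish in the exterior algebra), observe that the $E^+$ factors must be exactly the inserted edges of $w$ in order while any ordered subset of its deleted edges may be chosen with sign $(-1)^{b'}$, and conclude via $\sum_{b'}\binom{b}{b'}(-1)^{b'}$ that $w$ is counted $[b=0]$ times in total, splitting into the cases $a\ge 1$ and $a=0$ exactly as the paper does. The only difference is cosmetic: the paper precedes this with increment-only and decrement-only warm-ups, and your justification that omitted inserted edges cannot be supplied by the $Q'$ blocks should rest on $Q'$, $S'$, $F'$ summing only over original-graph walks rather than on the concatenation "breaking into pieces."
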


The proof is given below. Using this, we can show:

\begin{lemma}\label{lem:Z-new-complexity}
	$\Znew$ for a given set of updates can be computed from $Q$, $F$, $S$ and $\Z$ in $O(\ell^2 2^{\omega k/2})$ field operations.
\end{lemma}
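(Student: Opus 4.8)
The plan is to evaluate the right-hand side of the formula in \Cref{lem:Z-new} directly, exploiting the facts that $Q',S',F'$ have size $O(\ell)$ and that $\Delta = \Delta^+ - \Delta^-$ is an extremely sparse $n' \times n'$ matrix ($n' := |I| \le 2\ell$) whose entries are \emph{scalars}: it has at most $\ell$ nonzero entries, one per updated edge, each equal to a single edge variable $y_{i,j} \in \FF$. Extracting $Q', S', F'$ out of $Q, S, F$ only copies $O(\ell^2)$ extensors, costing $O(\ell^2 2^k)$ field operations, which is dominated by what follows.

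The first step is to rewrite $\sum_{i=1}^k F'^T \Delta (Q'\Delta)^{i-1} S'$ as a sequence of matrix--\emph{vector} products rather than matrix powers (which would cost $\Theta(\ell^\omega)$ extensor multiplications per power). Define vectors of extensors $u_1 := \Delta S'$ and $u_i := \Delta(Q' u_{i-1})$ for $i \ge 2$; unrolling the recursion gives $u_i = \Delta (Q'\Delta)^{i-1} S'$, so the $i$-th summand equals $F'^T u_i$ and $\Znew = \Z + \sum_{i=1}^{k} F'^T u_i$. The crucial structural point is that left-multiplication by $\Delta$ keeps vectors sparse: $(\Delta w)[j]$ can be nonzero only when row $j$ of $\Delta$ is nonzero, i.e. only for the $\le \ell$ indices $j$ such that $v_j$ is the tail of an updated edge, so every $u_i$ has at most $\ell$ nonzero coordinates. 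I would also note, to justify stopping at $i = k$, that each entry of $(Q'\Delta)^i$ is a sum of products of $i$ walk-extensor factors each of exterior degree $\ge 1$, hence lies in $\bigoplus_{d \ge i} \Lambda^d(\FF^k)$; since $\Lambda^d(\FF^k) = 0$ for $d > k$, we get $u_i = 0$ for $i > k$, so the finite sum over $i \le k$ really does equal $\Znew$.

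The second step is the cost accounting for one iteration $u_{i-1} \mapsto u_i$. Computing $Q' u_{i-1}$ takes $O(n' \cdot \ell) = O(\ell^2)$ general extensor multiplications — $Q'$ is a dense $O(\ell)\times O(\ell)$ matrix of extensors and $u_{i-1}$ has $\le \ell$ nonzero coordinates — each costing $O(2^{\omega k/2})$ field operations by \Cref{par:general-prod}, plus $O(\ell^2)$ extensor additions at $O(2^k)$ each. Left-multiplying the resulting vector by $\Delta$ then costs only $O(\ell)$ scalar-by-extensor multiplications, i.e. $O(\ell 2^k)$ field operations; forming $u_1 = \Delta S'$ is the same, and each inner product $F'^T u_i$ (with $u_i$ sparse) costs $O(\ell)$ general extensor multiplications. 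Since $\omega > 2$, the general extensor multiplications dominate, so one iteration costs $O(\ell^2 2^{\omega k/2})$ field operations, and over the $\le k$ iterations plus the final additions the total is $O(k \cdot \ell^2 \cdot 2^{\omega k/2})$; the extra $\poly(k)$ factor is absorbed into the notation exactly as the $o(1)$ in $\omega$ is (equivalently $k \cdot 2^{\omega k/2} = 2^{\omega k/2 + o(k)}$), yielding the stated $O(\ell^2 2^{\omega k/2})$.

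The main thing to get right is the complexity bookkeeping: one must propagate a single vector instead of ever forming a matrix power, and must observe that left-multiplication by $\Delta$ preserves the $\le \ell$ sparsity of that vector — it is exactly this that keeps each iteration at $O(\ell^2)$ extensor multiplications and prevents an $\ell^{\omega}$-type blow-up. Everything else (the recursion being well defined, the telescoping identity $u_i = \Delta(Q'\Delta)^{i-1}S'$, and the degree argument cutting the sum off at $k$) is routine.
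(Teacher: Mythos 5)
Your proposal is correct and follows essentially the same route as the paper: both evaluate the formula of \Cref{lem:Z-new} by iterated matrix--vector products rather than matrix powers, exploit the $O(\ell)$-sparsity of the scalar matrix $\Delta$ so that each of the $\le k$ iterations costs $O(\ell^2)$ general extensor multiplications at $O(2^{\omega k/2})$ field operations each (the paper just forms $Q'\Delta$ once by sparse multiplication and then applies it to vectors, whereas you interleave $Q'$ and $\Delta$ to keep the vectors sparse — an immaterial difference). Your explicit remark that the leftover factor of $k$ is absorbed into the stated bound matches the paper's own (implicit) bookkeeping.
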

\begin{proof}[Proof of \Cref{lem:Z-new-complexity}]
	Using \Cref{lem:Z-new}, it follows that we can compute $\Znew$ as follows. We first compute $Q' \Delta$ with simple sparse matrix multiplication, noting that $\Delta$ has at most $2\ell$ nonzero entries. Only $O(\ell^2)$ multiplications of general extensor expressions are needed for this, which therefore can be done with $O(\ell^2 2^{\omega k/2})$ field operations (employing \Cref{par:general-prod}). We then iteratively multiply this product by the vector $S'$, which again requires $O(\ell^2 2^{\omega k/2})$ operations per multiplication (employing \Cref{par:general-prod} again). Adding all the resulting vectors, then multiplying on the left by $\Delta$ then by $F'^T$ requires the same amount of time, totalling $O(\ell^2 2^{\omega k/2})$ operations for the whole update.
\end{proof}

\begin{proof}[Proof of \Cref{lem:Z-new}]
	
	As a warm-up, we consider first two easier cases: the only-increments case, and the only-decrements case.
	
	In the only-increments case, $\Znew - \Z$ is exactly the sum of walk-extensors for walks that use the new edges. In the only-decrements case, this is the walks that used the removed edges (now with the opposite sign).
	
	\begin{itemize}
		\item Only increments: similar to the warm up example, we note that $F'^T\Delta^+S'$ counts the walks that use exactly one of the new edges. For counting walks that use exactly two of the newly inserted edges, we now need to compute $\sum_{r_1\neq r_2}F'^T E_{r_1}^+Q'E_{r_2}^+S'$. Indeed, this expression correctly accounts for walks with nonzero walk extensors starting at any vertex, continuing through one new edge, travelling to a new edge then through it, then continuing to end at any vertex. We further note that any specific path is counted in this way exactly once.
	
    	We note that by linearity and the fact that only paths produce nonzero walk extensors, this is equal to $F'^T\Delta^+Q'\Delta ^+ S'$. That is, upon expanding the expression when $\sum_r E^+_r$ is substituted for $\Delta^+$, we get exactly the terms we intended, plus terms accounting for walks passing through the new edges twice, which evaluate to zero in the exterior algebra.
		
		Continuing in the same fashion, we get
		\begin{equation}
		    \Znew - \Z = \sum_{i=1}^k F'^T\Delta^+(Q'\Delta ^+)^{i-1} S'
		\end{equation}
		where the sum is up to $k$ because, if there are less than $k$ new edges then the sum is unchanged by adding these non-path walks, and if there are more than $k$ new edges then clipping the upper limit at $k$ only removes walks of length more than $k$, which evaluate to zero because we're computing in $\Lambda\left(\QQ^k\right)$.
		
		\item Only decrements: we use inclusion-exclusion. Let $W_i$ be the set of original walks using the $i$th removed edge, and for a set $U$ of walks we let $|U|$ denote the sum of walk extensors over the walks in $U$. Then
		\begin{equation}\Z^- = \Big|\bigcup_i W_i\Big| = \sum_i |W_i| - \sum_{i<j} |W_i \cap W_j| + \sum_{i<j<r} |W_i \cap W_j \cap W_r| - ...\end{equation}
		Similar to the computation of $\Z^+$, we see that $|W_i| = F'^TE_i^-S'$, and hence $\sum_i |W_i| = F'^T \Delta^- S'$.
		
		For $i\neq j$ computing $|W_i \cap W_j|$ is more subtle. We first claim that
		\begin{equation}|W_i \cap W_j| = F'^T E_i^- Q' E_j^- S' + F'^T E_j^- Q' E_i^- S' \end{equation}
		
		which comes from the fact that the first term on the right hand side counts the walks that pass through the $i$-th edge before the $j$-th edge, and the second term counts the other case. To see this, note that since we're computing with extensors, each walk that evaluates to a nonzero extensor will be counted exactly once.
		
		As before, we can now see that the sum over these is simply
		\begin{equation}\sum_{i<j}|W_i \cap W_j| = F'^T \Delta^- Q' \Delta^- S' \end{equation}
		
		because by substituting $\Delta^-=\sum_i E_i^-$ into the right hand side and expanding, we recover each of the terms exactly once, and the additional terms of repeated edges evaluate to zero.
		
		In a similar fashion, we get $\sum_{i<j<r} |W_i \cap W_j \cap W_r| = F'^T \Delta^- Q' \Delta^-Q' \Delta^- S' $ and so on, giving altogether
		\begin{equation}\Znew - \Z = \sum_{i=1}^k  F'^T (-\Delta^-) (-Q' \Delta^-)^{i-1} S'. \end{equation}
	\end{itemize}
	
	Putting it all together, we will show that $\Z + \sum_{i=1}^k F'^T\Delta(Q'\Delta)^{i-1} S'$ counts all the walk extensors in the updated graph exactly once, and does not count any other walks.
	
	Indeed, suppose $w$ is any walk on the vertices (not constrained to be in the original or updated graphs). If $w$ uses any edge that is neither in the original graph nor the updated graph, then it is not counted, so we disregard these cases. We assume $w$ is a path, since otherwise its walk extensor vanishes anyway, and we can regard it as accounted for. Now suppose $w$ has length at most $k$ and uses $a$ new edges and $b$ removed edges. 
	
	We first assume $a\ge 1$. In this case, $w$ is not counted in $\Z$. When substituting $\Delta=\sum_{j} E_j^+ - \sum_j E_j^-$ into $\sum_{i=1}^k F'^T\Delta(Q'\Delta)^{i-1} S'$ and expanding, we see that $w$ is counted only when all the chosen $E_j^+$ factors exactly correspond to the $a$ new edges used by $w$, in their order. It is also counted when choosing any $b'\le b$ factors of type $E_j^-$ that appear in $w$, but they must also appear in the right order, and they contribute the walk extensor of $w$ with weight exactly $(-1)^{b'}$. In total, $w$ is accounted for exactly $\sum_{b'=0}^b \binom{b}{b'}(-1)^{b'} = (1-1)^b = [b=0]$ times, which is what we want.
	
	Now suppose $a=0$. Then a similar argument shows that the number of times it is counted in $\sum_{i=1}^k F'^T\Delta(Q'\Delta)^{i-1} S'$ is exactly $\sum_{b'=1}^b \binom{b}{b'}(-1)^{b'} = (1-1)^b - 1 = [b=0] - 1$. However, it is counted in $\Z$ exactly once, so in total is counted $[b=0]$ times, which is also exactly what we want.
\end{proof}

Although not useful for our purposes, we note that we are also able to compute $Q'_\text{new}$ after the updates, i.e., the matrix where $Q'_\text{new}[i,j]$ is the sum of walk extensors of walks from $i$ to $j$ in the whole updated graph, where $i,j$ are restricted to being in $I$. This can conceivably be useful for other problems, and for example can be used to answer queries of the type ``is the distance from $i$ to $j$ at most $k$ in the updated graph''. The calculation is similar to the one in the proof of \Cref{lem:Z-new-complexity}, and is given by
\begin{equation}Q'_\text{new} = \sum_{i=0}^k Q' (\Delta Q')^i. \end{equation}

\subsection{Adjustments for the randomized case}
In the randomized setting, we note that it is enough to choose $\FF$ to be a large-enough field of characteristic 2. Specifically, $|\FF|\ge 100k$ is enough, then the $y$ variables have the required domain size to use the DeMillo-Lipton-Schwartz-Zippel Lemma (\Cref{lem:schwartz-zippel}).

Here $\chi:V\to \FF^k$ can be chosen uniformly at random instead of Vandermonde vectors, and for all edges $(u,v)$ we have $y_{u,v}\in \FF$ uniformly at random.

The crucial point is that over characteristic 2, the exterior algebra becomes commutative. This allows computing the product of two extensors in $\Lambda(\FF^k)$ in time $2^k \poly(k)$, using fast subset convolution (see \Cref{par:general-prod-char-2}). As originally observed in \cite{extensor-coding}, this algebra is then isomorphic to the one used by Williams~\cite{williams-k-path} to solve the \textsc{$k$-Path} problem in $O^*(2^k)$ time for the first time (though here this is over a slightly different field).
This proves \Cref{thm:kpath-so-rand}.

\subsection{Adjustments for the deterministic case}
In this case we choose $\FF=\QQ$.
In the same way as \cite{extensor-coding} describe how to transform their algorithm into a deterministic one, we can transform the algorithm presented here as well. Specifically, all $y_{i,j}$ are set to 1, and all $\chi(v_i)$ are replaced with $\bar{\chi}(v_i) = \binom{\chi(v_i)}{0}\wedge \binom{0}{\chi(v_i)}\in\Lambda(\QQ^{2k})$. As explained in \Cref{sec:extensor-coding}, this ensures that extensor walks of different paths never cancel, and hence the coefficient $[e_{[2k]}]\Znew$ will be nonzero exactly if there is a $k$-path.

Since the exterior algebra is $\Lambda(\QQ^{2k})$, general multiplication requires $O(2^{\omega k})$ operations over $\QQ$ (see \Cref{par:general-prod}).
It can be seen that the numbers we're dealing with are integers of size at most $n^{O(k)}=2^{O(k \log n)}$, and thus each $\QQ$ operation requires $\poly(k) \polylog(n)$ time, or $\poly(k)$ time in the word-RAM model.
The rest of the details is essentially the same. This proves \Cref{thm:kpath-so-det}.

\subsection{Supporting vertex failures}
\label{sec:vertex-failures}
As mentioned, we can allow vertex decrements with no asymptotic overhead (i.e., the running times are at worst multiplied by a constant), in both the randomized and deterministic cases. This is done by first transforming the original graph, splitting each vertex into $v_\text{in}$ and $v_\text{out}$, as suggested in \cite{so-decremental}, where all inbound edges of $v$ are redirected to $v_\text{in}$ and outbound edges redirected to originate at $v_\text{out}$, and an additional edge $(v_\text{in}, v_\text{out})$ is inserted.

We then put $\chi(v_\text{in})=1$ for all $v$, while $\chi(v_\text{out})$ remains the original $\chi(v)\in\FF^k$ (in the randomized case) or $\bar{\chi}(v)\in\QQ^{2k}$ (in the deterministic case). Now, there is a $k$-path in the original graph if and only if there is a $2k$-path in the transformed graph, if and only if there is a path through $k$ out-vertices. So all calculations remain in the same exterior algebra of the same dimension, and the rest is largely unchanged. Removing a vertex $v$ is then done by removing the edge $(v_\text{in}, v_\text{out})$ in the transformed graph.

\subsection{Space lower bounds}
In \cite{so-decremental} the authors present a fault-tolerant algorithm for the \textsc{$k$-Path} problem with oracle size (space usage) $o(n)$ when keeping the parameters $k$ and $\ell$ constant (in fact, at most $O(\log n)$ in the two alternatives they present).

In contrast, the solution presented here requires $n^2 2^k \poly(k)$ bits. We show that for our case of allowing both increments and decrements, we must use $\Omega(n^2)$ space. Although storing the original graph clearly requires $\Omega(n^2)$ bits, we note that it is not necessary to store the original graph, so long as we assume that all provided updates are valid (i.e., we do not need to check that only non-existing edges were added and that removed edges were previously present).

Formally:

\begin{theorem}\label{thm:so-space-directed} Any deterministic sensitivity oracle for the \textsc{$k$-Path} problem on a directed graph, supporting at least 2 increments and $k\ge 4$, must store at least $\Omega(n^2)$ bits.
This bound also applies to any randomized sensitivity oracle with success probability bounded away from 0.
\end{theorem}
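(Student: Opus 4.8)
The plan is to prove an $\Omega(n^2)$ space lower bound via a communication/incompressibility argument: I will show that the preprocessing function, viewed as a map from input graphs to oracle states, must be injective on a family of $2^{\Omega(n^2)}$ graphs, so that the state requires $\Omega(n^2)$ bits. Concretely, I would partition the $n$ vertices into three groups: a large "left" set $L$ and "right" set $R$ each of size roughly $n/2 - O(1)$, plus a constant number $O(k)$ of auxiliary "gadget" vertices. The information to be encoded is an arbitrary bipartite adjacency relation between $L$ and $R$; there are $2^{|L|\cdot|R|} = 2^{\Omega(n^2)}$ such relations, and I claim the oracle state after preprocessing determines this relation.

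The heart of the argument is a \emph{decoding procedure} that, for any fixed pair $(u,v) \in L \times R$, issues a constant number of edge-insertion queries (at most $2$, matching the hypothesis of the theorem, using $k \ge 4$) to the oracle and infers from the yes/no answers whether the directed edge $(u,v)$ was present in the original graph. The idea is to build the initial graph so that it contains \emph{no} $k$-path on its own (e.g., make the $L$--$R$ bipartite part the only "long-range" structure and ensure every path confined to it has length $< k$, or route everything through the gadget so that without the inserted edges nothing of length $k$ survives), but so that for each target pair $(u,v)$ there is a short gadget path $P_{u,v}$ of length $k-1$ that enters at $u$ and exits at $v$ and can be "closed" into a $k$-path exactly by the single inserted edge $(v, u)$ (or a length-$(k-2)$ path closable by two inserted edges, one on each side, to fit the "at least $2$ increments" regime and give flexibility when $u=v$-type degeneracies or direction constraints arise). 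Since queries reset the graph, I can probe each of the $|L|\cdot|R|$ pairs independently, one at a time. If the oracle answers "yes" precisely when $(u,v)$ was an original edge (for every such pair over all inputs in the family), then two distinct bipartite relations yield different query-answer behaviors, hence the preprocessing map is injective on the family, hence the state has $\Omega(n^2)$ bits. For the randomized case with success probability bounded away from $1/2$, I would instead count: by an averaging/fooling-set argument, a state of $s$ bits can be "consistent" (answer all $|L||R|$ decoding queries correctly with probability $> 1/2$ each, say after boosting, or via a direct Yao-style counting over the shared random string which is fixed at preprocessing and independent of queries) with at most $2^{s}\cdot(\text{few})$ relations, forcing $s = \Omega(n^2)$; since the theorem states the queries are independent of the oracle's randomness, a standard fixing of the random string reduces this to the deterministic case with a loss of only a constant in the success probability.

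The main obstacle is the \emph{gadget design}: I need, simultaneously for all $\Theta(n^2)$ pairs $(u,v)$, a constant-size set of shared gadget vertices and a fixed initial edge set on $L \cup R \cup (\text{gadget})$ such that (i) the initial graph has no $k$-path regardless of which $L$--$R$ edges are present, yet (ii) for every pair, inserting the designated $\le 2$ edges creates a $k$-path \emph{if and only if} $(u,v)$ is an original $L$--$R$ edge, with no "accidental" $k$-paths arising from other present $L$--$R$ edges interfering. Getting (ii) robust against the adversarially-chosen other edges is the delicate part; I expect to handle it by making the gadget enforce that any $k$-path through the inserted edges must traverse exactly one $L$--$R$ edge and must be $u$ and $v$ at its endpoints (using the Vandermonde-free, purely combinatorial structure of short directed paths, e.g., long "threads" of degree-$1$ vertices that pin down the path's shape, à la standard sensitivity/fault-tolerant lower bound constructions), so that the only freedom left is the single bit "$(u,v) \in E$?". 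Once the gadget is pinned down the rest is bookkeeping; I would also double-check the small-$k$ corner cases ($k = 4$ exactly, and whether one or two insertions are needed depending on whether the construction can place $u$ at the start and $v$ at the end of a directed gadget path of the right length) to match the precise hypotheses "at least $2$ increments and $k \ge 4$."
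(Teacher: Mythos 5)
Your high-level plan is the same as the paper's: a fooling-set/incompressibility argument over a family of $2^{\Omega(n^2)}$ directed bipartite relations between two linear-size vertex sets, where each individual edge $(u,v)$ is decoded by inserting two edges that complete a $k$-path through $(u,v)$, so that the preprocessed state must (essentially) determine the relation; the randomized case is handled by fixing the randomness. However, as written the proposal has two genuine gaps.

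First, the construction you defer as ``the delicate part'' is exactly the content of the proof, and you never supply it. In the directed setting no gadget of $\Theta(k)$ vertices, no per-pair gadget paths, and no degree-one ``threads'' are needed: orient \emph{every} stored edge from $L$ to $R$ and add just two isolated vertices $s,t$. Then the initial graph never contains a path on more than two vertices (after reaching $R$ there is no outgoing edge), so condition (i) holds for free; and after inserting $(s,u)$ and $(v,t)$, any path on four vertices must start at $s$ (no in-edges elsewhere into $s$, $t$ has no out-edges), go to $u$, then to some $R$-vertex, and the only $R$-vertex with an out-edge is $v$ — so a $4$-path exists if and only if $(u,v)$ was stored, and the ``accidental path'' interference you worry about cannot occur. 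For $k>4$ one prepends a directed path of $k-4$ fresh vertices feeding into $s$. Without some such explicit construction your injectivity claim is unsubstantiated; with it, the deterministic bound is immediate ($2^{\lfloor (n-2)^2/4\rfloor}$ graphs, pairwise distinguishable by a single $2$-insertion query).

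Second, the randomized reduction is too quick. Note that per-query ``boosting'' by repeating a query is unavailable: the answer is a deterministic function of the stored state and the query, so repetitions return the same bit; amplification must be done by running $O(1)$ independent copies at preprocessing time (which only costs a constant factor in space, so it is fine, but should be said). More importantly, after fixing the random string by averaging you only get a deterministic oracle that is correct on \emph{most} (graph, query) pairs, so the relation decoded from a state may disagree with the true relation in a small fraction of entries, and injectivity on the full family of $2^{\Omega(n^2)}$ relations fails. One must either restrict to a well-separated subfamily — the paper picks $2^{n^2/100}$ relations pairwise at Hamming distance at least $n^2/20$ by a random-coding argument, so that an oracle erring on fewer than $n^2/40$ of the per-edge queries still distinguishes them — or replace injectivity by a Hamming-ball counting bound. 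Your phrase ``at most $2^{s}\cdot(\text{few})$ relations'' gestures at this but does not identify the needed error-correction step, and ``a loss of only a constant in the success probability'' does not by itself repair it.
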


This bound is optimal from an information-theoretic point of view, because we can store the adjacency matrix with $O(n^2)$ bits.

\begin{proof}[Proof of \Cref{thm:so-space-directed}]
	We note it is enough to prove the claim for the case of $\ell=2$ and $k=4$.
	
	Consider the set of all directed bipartite graphs, with sides $S, T$, such that $|S|=|T|=\frac{n-2}{2}$ and all edges are from a vertex in $S$ to a vertex in $T$. There are $2^\frac{(n-2)^2}{4}$ such graphs. Add two special vertices $s$ and $t$, not connected to any vertex.
	
	Now suppose there is a deterministic sensitivity oracle storing less than $\frac{(n-2)^2}{4}$ bits. Then there will be two such bipartite graphs $G_1$, $G_2$ with the same oracle. Without loss of generality, suppose $G_1$ has an edge $(u,v)$ that is not present in $G_2$. Consider the update of adding the edges $(s,u)$ and $(v, t)$. It is seen that in $G_1$ there is a path of length 4 ($s\to u\to v\to t$), but there is no such path in $G_2$. However, the oracle must give the same output on both of these graphs, a contradiction. This shows the lower bound for the deterministic case.
	
	We move to the case of a randomized sensitivity oracle. While a randomized sensitivity oracle does not need to be able to distinguish between all the above $2^{(n-2)^2/4}$ graphs, we will show that the existence of one implies the existence of a deterministic sensitivity oracle distinguishing a subset of size $2^{\Omega(n^2)}$ of the above graphs.
	
	By a simple amplification argument, we can assume each query has success probability of at least $99\%$ (since by assumption it is bounded away from 0). We use a simple claim to finish the proof.
	
	\begin{claim*}
    For large $n$, there is a set of $2^{n^2/100}$ different graphs of the above type with more than $\frac{1}{10}n^2$ differing edges.
    \end{claim*}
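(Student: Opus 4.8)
The plan is to recast the claim as the existence of a good binary code and then prove it by a greedy (Gilbert--Varshamov style) packing argument. Each graph of the type under consideration is determined by its set of edges, an arbitrary subset of the $N := (n-2)^2/4$ potential edges from $S$ to $T$; hence these graphs are in bijection with the points of the Hamming cube $\{0,1\}^N$, and under this bijection the number of edges in which two graphs differ is exactly the Hamming distance between the corresponding points. So it suffices to produce $2^{n^2/100}$ points of $\{0,1\}^N$ that are pairwise at Hamming distance more than $d := n^2/10$.

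I would construct such a set greedily: start with $C = \emptyset$; repeatedly add to $C$ any point of $\{0,1\}^N$ whose Hamming distance to every point currently in $C$ exceeds $d$; stop when no such point remains. By construction $C$ has pairwise distance $> d$. Adding a point $x$ to $C$ forbids only the points of the Hamming ball of radius $d$ centered at $x$, and such a ball contains exactly $V_d := \sum_{i=0}^{d} \binom{N}{i}$ points; therefore the process cannot terminate until $|C| \ge 2^N / V_d$.

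The only quantitative point is to check that $2^N/V_d \ge 2^{n^2/100}$ for all large $n$, i.e. that $\log_2 V_d \le N - n^2/100$. Since $d/N = (n^2/10)/((n-2)^2/4) \to 2/5 < 1/2$, the standard estimate $V_d \le 2^{\,N \cdot \bar H(d/N)}$ applies, where $\bar H(p) = -p\log_2 p - (1-p)\log_2(1-p)$ is the binary entropy function. As $\bar H$ is continuous and $\bar H(2/5) < 1$, this gives $\log_2 V_d \le (\bar H(2/5) + o(1))\,N$, so it remains to compare $(1 - \bar H(2/5) - o(1))\,N$ with $n^2/100$; since $N = (n-2)^2/4$ and both quantities are $\Theta(n^2)$, this is just a comparison of explicit constants. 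This comparison is what pins down the admissible constant in the exponent (a sufficiently small absolute constant in place of $1/100$ certainly works), and a bound of the form $2^{\Omega(n^2)}$ — which is all that is needed afterwards — follows immediately.

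The part to watch is precisely this volume estimate and the resulting constant comparison; the rest is a routine greedy/counting argument, with no real obstacle. An essentially equivalent alternative would be probabilistic: draw the $2^{n^2/100}$ graphs by flipping independent fair coins for the $N$ edges, bound the probability that a fixed pair differs in at most $d$ edges by a Chernoff estimate on a $\mathrm{Bin}(N,1/2)$ variable (using $d < N/2$), union-bound over the $\binom{2^{n^2/100}}{2}$ pairs, and delete one graph from each offending pair — here too the crux is that the exponents line up.
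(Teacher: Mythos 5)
Your general strategy (pack the Hamming cube $\{0,1\}^N$, $N=(n-2)^2/4$, either greedily à la Gilbert--Varshamov or by the random-choice-plus-deletion variant you mention) is the same kind of counting argument the paper uses, which draws the $2^{n^2/100}$ vectors uniformly at random and union-bounds over pairs. The problem is precisely the step you defer: the ``comparison of explicit constants'' does not come out in your favor for the distance threshold in the statement. With $d=n^2/10$ the relative distance is $d/N\to 2/5$, and writing $H$ for the binary entropy, $1-H(2/5)\approx 0.029$, so the greedy bound guarantees only about $2^{(1-H(2/5))N}\approx 2^{0.0073\,n^2}$ codewords --- strictly fewer than the claimed $2^{n^2/100}=2^{0.01\,n^2}$. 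The probabilistic alternative fails for the same reason: the probability that a fixed pair lies within distance $0.4N$ is $2^{-(1-H(2/5))N+o(n^2)}\approx 2^{-0.0073\,n^2}$, while the number of pairs is about $2^{0.02\,n^2}$, so the union bound (and even the deletion trick) collapses. In fact, rate $0.04$ at relative distance $0.4$ lies \emph{above} the Gilbert--Varshamov bound, so no packing or random-coding argument of this shape can deliver the constants $2^{n^2/100}$ and $n^2/10$ simultaneously; your fallback ``a sufficiently small absolute constant in place of $1/100$ certainly works'' proves a weaker statement than the one asked.

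In fairness, part of the blame lies with the claim as printed: the paper's own proof only establishes pairwise distance greater than $n^2/20$ (its displayed estimate bounds $\Pr(\text{dist}\le n^2/20)$ by $2^{-0.04n^2}$ and union-bounds over $2^{n^2/50}$ pairs), and it is the $n^2/20$ version that is used later in the proof of \Cref{thm:so-space-directed}; the ``$\frac{1}{10}n^2$'' in the claim statement is inconsistent with this. For $d=n^2/20$ your argument goes through with plenty of room, since $1-H(1/5)\approx 0.28$ gives roughly $2^{0.07\,n^2}$ codewords. So to repair your write-up you should either prove the $n^2/20$ version (and note that it suffices for the space lower bound), or at least carry out the entropy computation explicitly and flag that $n^2/10$ together with $2^{n^2/100}$ is out of reach of this method; as written, the proposal asserts the constants can be checked when they in fact cannot.
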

    \emph{Proof of claim}.
    This is shown by a simple probabilistic argument, noting that after enumerating the possible edges, we can identify any such graph with a bit vector of length $(n-2)^2/4$ whose $i$th entry marks whether the $i$th edge is present. We now randomly pick $2^{n^2/100}$ such bit vectors uniformly at random. The probability that two specific vectors among these have hamming distance at most $\frac{1}{20}n^2$ is at most
	\begin{align*} \Pr(\text{dist}\le n^2/20) &\le 2^{-(n-2)^2/4} \sum_{i=0}^{n^2/20} \binom{n^2/4}{i} \le 2^{-(n-2)^2/4} \frac{n^2}{20} \binom{n^2/4}{n^2/20} \\ & \le 2^{- 0.24 n^2} \left(e\frac{n^2/4}{n^2/20}\right)^{n^2/20} \le  2^{- 0.24 n^2} \left(5e\right)^{n^2/20} < 2^{-0.04 n^2},
	\end{align*}
	where we used the inequality $\binom{n}{k}\le \frac{n^k}{k!}\le \left(\frac{en}{k}\right)^k$. 
	Hence, with positive probability, all pairs among the $2^{n^2/100}$ drawn vectors are farther than $\frac{n^2}{20}$ from each other. In particular, such a set exists.
    $\blacksquare$
	
	We now fix such a set of graphs. Suppose we query the randomized sensitivity oracle on these graphs with queries used in the deterministic proof (that is, adding only $(s,u)$ and $(v,t)$ edges). By assumption, on expectation it will answer correctly at least 99\% of these queries on these graphs. Therefore, it is possible to fix the randomness used by the random oracle in a way that produces a deterministic oracle answering correctly at least 99\% of the above queries, and it follows that it can err on more than 10\% of the queries only on at most 10\% of the graphs. Thus, there are at least $\frac{9}{10}2^{n^2/100}=2^{\Omega(n^2)}$ graphs which are at distance at least $n^2/20$ from each other, for which the oracles answers answers at least 90\% of the queries correctly, that is, it errs on less than $\frac{1}{10}\cdot \frac{n^2}{4} = \frac{n^2}{40}$ queries on each graph. Thus, it can completely distinguish between these graphs, and hence it requires at least $\Omega(n^2)$ bits of memory.
\end{proof}

For undirected graphs, we have different a lower bound:

\begin{theorem}\label{thm:so-space-undirected} Any deterministic sensitivity oracle for the \textsc{$k$-Path} problem on an undirected graph, supporting at least 2 increments and $k\ge 4$, must store at least $\Omega(n \log n)$ bits of memory.
	This bound also applies to any randomized sensitivity oracle with success probability bounded away from 0.
\end{theorem}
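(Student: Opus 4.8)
The plan is to mirror the structure of the proof of \Cref{thm:so-space-directed}, replacing the bipartite-graph construction (which produced $2^{\Omega(n^2)}$ distinguishable instances) with a family of undirected graphs whose edge sets encode only $\Omega(n \log n)$ bits of information, since in an undirected setting the $\Omega(n^2)$ argument cannot work (indeed, any undirected graph with more than $nk$ edges automatically contains a $k$-path, so dense graphs carry no useful information for the oracle). First I would fix $k=4$ and $\ell=2$ as before, since the lower bound for these values implies the general statement. Then I would build a family $\mathcal{G}$ of undirected graphs on $n$ vertices: reserve two special isolated vertices $s,t$, and on the remaining $n-2$ vertices place a structure that has $\Theta(n)$ ``slots'', each of which can be in one of $\poly(n)$ (say $n^{\Theta(1)}$) states, chosen so that (a) distinct choices yield graphs that are distinguishable by some query adding $\le 2$ edges incident to $s$ and $t$, and (b) none of the base graphs already contains a $4$-path on their own (so that the query's outcome is controlled entirely by the encoded state). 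A natural candidate: partition the $n-2$ vertices into $\Theta(n)$ groups of constant size, and within group $i$ select one of $\binom{c}{2}$ possible single edges (or one of a few possible matchings) as the ``label'' $x_i$; with constant-size groups each base graph contains no long path, so no $4$-path exists before the update.

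Next I would argue the distinguishing step: given two graphs $G_1 \ne G_2$ in $\mathcal{G}$, they differ in the label of some group $i$, so $G_1$ has an edge $(u,v)$ inside group $i$ that $G_2$ lacks (or vice versa). Then the query that adds edges $(s,u)$ and $(v,t)$ creates a $4$-path $s \to u \to v \to t$ in the updated $G_1$ but — provided the groups are small enough and laid out so that $s,u,v,t$ cannot be completed to a $4$-path through any other edges in $G_2$ — creates no $4$-path in $G_2$. This requires care: I need to ensure that adding just two pendant-style edges to a constant-size-component base graph cannot create a $4$-path unless the specific ``witness'' edge $(u,v)$ is present. Making the groups have no edges \emph{between} them (only within) handles this cleanly, since then any path in the updated graph is confined to $\{s,t\}$ plus one group, and a $4$-path needs the edge $uv$ between the two middle vertices. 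Counting: with $\Theta(n)$ groups each carrying $\Theta(\log n)$ bits of label, we get $|\mathcal{G}| = 2^{\Omega(n\log n)}$, so any oracle storing $o(n\log n)$ bits must collide on two distinct graphs, contradiction — this establishes the deterministic case.

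For the randomized case I would reuse the amplification-plus-averaging argument verbatim from the proof of \Cref{thm:so-space-directed}: amplify to $99\%$ success per query, then observe that the relevant ``distinguishing experiment'' for a pair of graphs uses only a bounded number of queries (one query suffices to separate any two graphs here), so one can fix the randomness to obtain a deterministic oracle correctly answering the separating query for a $2^{\Omega(n\log n)}$-size subfamily; in fact since a \emph{single} query distinguishes each pair, I may not even need the probabilistic ``spread-out set'' claim used in the directed proof — a union bound over the $\binom{|\mathcal{G}|}{2}$ pairs of separating queries, combined with the fact that an oracle with $o(n\log n)$ bits induces a collision, already forces an error. The main obstacle I anticipate is verifying claim (b) robustly — that no element of $\mathcal{G}$, after an \emph{adversarial} choice of $2$ inserted edges, contains a $4$-path except via the intended witness — because the query is allowed to insert any two edges, not just ones touching $s$ and $t$; I would address this by making the base graphs a disjoint union of constant-size cliques (or small paths) plus the isolated $s,t$, so that the longest path in \emph{any} $2$-edge augmentation of a base graph is short unless the augmentation reconnects two small components through a specific missing edge, which is exactly what the label encodes.
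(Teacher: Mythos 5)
There is a genuine quantitative gap in your construction. Your concrete candidate family --- a disjoint union of constant-size groups, each encoding its label by the choice of a single edge (or one of a few matchings) inside the group --- has only constantly many states per group, so it contains $2^{\Theta(n)}$ graphs, not $2^{\Omega(n\log n)}$. The sentence ``$\Theta(n)$ groups each carrying $\Theta(\log n)$ bits of label'' is inconsistent with constant group size: a group of $c=O(1)$ vertices admits at most $2^{\binom{c}{2}}=O(1)$ configurations. As written, your argument therefore only yields an $\Omega(n)$ space bound, which falls short of the claimed $\Omega(n\log n)$. The paper closes exactly this gap by taking the family to be all perfect matchings between two halves $S,T$ of the $n-2$ non-special vertices (plus isolated $s,t$): there are $\left(\frac{n-2}{2}\right)!=2^{\Omega(n\log n)}$ such graphs, every vertex has degree at most $1$, so no $4$-path exists before the update and none is created by inserting $(s,u),(v,t)$ unless the witness edge $(u,v)$ is present --- i.e., each vertex gets $\Theta(\log n)$ bits by choosing its partner among $\Theta(n)$ candidates, which is what your ``slots with $\poly(n)$ states'' intuition needs but your grouping does not provide.

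A secondary caution concerns your suggestion that, in the randomized case, a union bound over pairs might replace the ``spread-out set'' step. After amplification the per-query error is a constant (say $1\%$), while the number of pairs is exponential, so a direct union bound fails; moreover two distinct matchings may differ in only a couple of edges, so a pair may have only $O(1)$ distinguishing queries, all of which could fall among the errors of the derandomized oracle. You do need the analogue of the paper's claim: restrict to a subfamily of size $2^{\Omega(n\log n)}$ whose members pairwise differ in $\Omega(n)$ edges, then fix the randomness and argue that a $90\%$-per-graph-correct deterministic oracle still separates all of them. Your stated fallback (``reuse the argument verbatim'') is fine; the shortcut is not.
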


We note that for this case, there is an information-theoretic (i.e., with unlimited running time on updates) upper bound of $O(k n \log n)$ bits, assuming $\ell = O(n)$. This follows from the fact that any undirected graph with $m\ge kn$ edges has a $k$-path, as shown in \cite{edge-bound}, and hence for $m < nk+\ell$ there will be a $k$-path after the update, regardless of the update. In this case, then, only $O(1)$ bits are required, and otherwise we store the adjacency matrix sparsely, using $O(\log n)$ bits per edge.

\begin{proof}[Proof sketch of \Cref{thm:so-space-undirected}]
	We use the same family of graphs described in the proof of \Cref{thm:so-space-directed}, but with undirected edges that make a perfect matching between the two sides $S$ and $T$. The rest of the proof is analogous, noting that the number of such graphs is $\left(\frac{n-2}{2}\right)! = 2^{\Omega(n \log n)}$.
\end{proof}

\subsection{Approximately counting \texorpdfstring{$k$}{k}-paths}
We note that our results can be naturally extended to approximately counting the number of $k$-paths, by computing the same extensor value used in Theorem 8 in Brand, Dell and Husfeldt \cite{extensor-coding} with a sensitivity oracle in the same fashion.

\begin{theorem}\label{thm:approx-count}
	For an initial graph $G$ on $n$ vertices and with $m$ edges, there exists a random sensitivity oracle that, given $\epsilon>0$ and $k$, produces an estimate to the number of $k$-paths that, with probability $>99\%$, is within $\epsilon$ relative error, with preprocessing time $\frac{1}{\epsilon^2}4^k\poly(k)\min\{nm, n^\omega\}$ and update time $O\left(\frac{1}{\epsilon^2} \ell^2 2^{\omega k}\right)$.
\end{theorem}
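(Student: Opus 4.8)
The plan is to run almost verbatim the pipeline behind the deterministic \textsc{$k$-Path} sensitivity oracle (\Cref{thm:kpath-so-det}), changing only the vectors attached to the vertices, so that the extracted extensor coefficient becomes an \emph{unbiased estimator} for the number of $k$-paths rather than a $0$/nonzero indicator --- exactly the estimator of Brand, Dell and Husfeldt~\cite[Theorem 8]{extensor-coding}. Concretely, keep $\FF = \QQ$ and set every edge variable $y_{i,j} = 1$, but instead of lifted Vandermonde vectors use lifted \emph{random} vectors: for each vertex $v$ draw $\chi(v) \in \QQ^k$ with i.i.d.\ entries from a fixed finite subset of $\QQ$ (different vertices independent), chosen so that $c_k := \mathbb{E}\big[\det(\chi(w_1)\,|\,\cdots\,|\,\chi(w_k))^2\big]$ is nonzero (and both $c_k$ and the relative variance of this determinant are $\poly(k)$), and lift it to $\bar{\chi}(v) = \binom{\chi(v)}{0} \wedge \binom{0}{\chi(v)} \in \Lambda(\QQ^{2k})$. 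As in the sensitivity-oracle model this randomness is fixed once in preprocessing and is independent of the queries. Exactly as in \Cref{sec:extensor-coding}, only $k$-paths contribute to the degree-$2k$ part of the walk-extensor sum, and by \Cref{eq:lifts_determinant} a $k$-path $p = (w_1, \ldots, w_k)$ contributes $(-1)^{\binom k 2} \det(\chi(w_1)\,|\,\cdots\,|\,\chi(w_k))^2\, e_{[2k]}$; hence for any graph $[e_{[2k]}]\Z = (-1)^{\binom k 2}\sum_p \det(\chi(p))^2$ over the $k$-paths $p$ (each counted once), so $(-1)^{\binom k2} c_k^{-1}[e_{[2k]}]\Z$ is an unbiased estimator of the number of $k$-paths.

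Second, I would instantiate the sensitivity-oracle machinery of \Cref{sec:k-path-so} on this instance over $\Lambda(\QQ^{2k})$. Preprocessing computes $Q$, $S$, $F$, $\Z$ as for the deterministic oracle: iteratively multiply by $D$ using skew multiplications ($4^k \poly(k)$ field operations each) and by $Y$ using scalar-times-extensor multiplications, with fast or sparse matrix multiplication for the outer loop; all numbers stay of size $n^{O(k)}$, so each field operation costs $\poly(k)$ in the word-RAM model, for a total of $4^k \poly(k) \min\{nm, n^\omega\}$. On a query with $\ell$ edge updates I compute $\Znew = \Z + \sum_{i=1}^k F'^T \Delta (Q' \Delta)^{i-1} S'$ via \Cref{lem:Z-new}; since general extensor multiplication in $\Lambda(\QQ^{2k})$ costs $O(2^{\omega k})$ field operations (\Cref{par:general-prod}), the argument of \Cref{lem:Z-new-complexity} gives an $O(\ell^2 2^{\omega k})$-time query, after which I extract $[e_{[2k]}]\Znew$ and output $(-1)^{\binom k 2} c_k^{-1}[e_{[2k]}]\Znew$. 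Vertex failures are handled as in \Cref{sec:vertex-failures}, splitting each vertex and setting $\chi(v_\text{in}) = 1$ so that the walk extensor of a $2k$-walk through $k$ out-vertices equals the degree-$2k$ extensor of the lifted random vectors of those out-vertices.

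Third, for the error guarantee I would run $R = O(\poly(k)/\epsilon^2)$ independent copies of this oracle (independent draws of the $\chi(v)$, each fixed in its own preprocessing phase) and return the average of the $R$ per-query estimators. The needed input is the variance bound of~\cite[Theorem 8]{extensor-coding}: covariances between vertex-disjoint $k$-paths vanish (their estimators depend on disjoint independent vectors), and for intersecting pairs the covariance is controlled by $\mathbb{E}[\det(\chi(p))^4] = \poly(k)\cdot c_k^2$, so the single-copy estimator has relative variance $\poly(k)$; hence by Chebyshev the average over $R = O(\poly(k)/\epsilon^2)$ copies is within relative error $\epsilon$ with probability $> 99\%$. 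Multiplying the per-copy preprocessing and query costs by $R$ (absorbing $\poly(k)$ factors, e.g.\ into the $2^{\omega k}$ term of the query) yields preprocessing time $\tfrac{1}{\epsilon^2} 4^k \poly(k) \min\{nm, n^\omega\}$ and update time $O\big(\tfrac{1}{\epsilon^2} \ell^2 2^{\omega k}\big)$, as claimed.

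The point that needs care is that \Cref{lem:Z-new} must deliver the \emph{exact} extensor $\Znew$ of the updated graph, not merely a correct $k$-path detection verdict: the inclusion--exclusion over deleted edges and the ``a walk reusing an inserted edge contributes $0$'' cancellations were set up for the existence question, but here we rely on them to preserve the value $[e_{[2k]}]\Znew = (-1)^{\binom k2}\sum_p \det(\chi(p))^2$ over the $k$-paths of the updated graph, which is exactly what keeps the estimator unbiased. Fortunately the proof of \Cref{lem:Z-new} already establishes this stronger exact identity (every path counted exactly once, every non-path contributing $0$), so nothing new is required there, and the relative-variance bound is a statement purely about $\sum_p \det(\chi(p))^2$ for random $\chi$ and a fixed graph, hence applies to the updated graph unchanged. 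The one genuinely new element over~\cite{extensor-coding} is the use of the general-product cost $O(2^{\omega k})$ in $\Lambda(\QQ^{2k})$ (rather than only skew products) inside the $O(\ell^2)$ matrix multiplications of the query, which is what produces the $2^{\omega k}$ factor.
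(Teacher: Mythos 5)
Your proposal matches the paper's own (sketched) proof essentially step for step: the paper likewise replaces the lifted Vandermonde vectors by lifted random vectors (specifically $\chi(v)$ uniform in $\{1,-1\}^k$) with all $y_{u,v}=1$, runs the same sensitivity-oracle machinery of \Cref{sec:k-path-so} over $\Lambda(\QQ^{2k})$ to obtain $[e_{[2k]}]\Znew$, and defers to Theorem~8 of Brand--Dell--Husfeldt for the unbiasedness and variance analysis, repeating $O(1/\epsilon^2)$ times. Your added detail on the sign $(-1)^{\binom{k}{2}}$, the vanishing covariances for disjoint paths, and the Chebyshev argument is consistent with that cited analysis, so no gap.
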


\begin{proof}[Proof sketch of \Cref{thm:approx-count}]
    We modify our algorithm in the same way as Brand, Dell and Husfeldt~\cite{extensor-coding} modified their algorithm to design an approximate counting algorithm, so we only sketch the details here.
	We choose a lifted $\bar{\chi}$ for $\chi:V\to \QQ^k$ chosen uniformly at random from $\chi(v)\in \{1,-1\}^k$, with all edges variables $y_{u,v}=1$. Our $[e_{[2k]}]\Znew$ can be seen to be equal to the sum over all walk extensors of length $k$ in the updated graph, which is the same value computed in \cite{extensor-coding}. Hence, as argued there, the expected value (over the randomness of $\chi$) of $[e_{[2k]}]\Znew$ is proportional to the number of $k$-paths, and the rest of their analysis carries over to our case, requiring $\frac{1}{\epsilon^2}$ applications of the same calculation to achieve the desired error bound.
\end{proof}

\subsection{Undirected graphs}
Björklund, Husfeldt, Kaski and Koivisto~\cite{narrow-sieves} give a randomized $1.66^k \poly(n)$-time algorithm for $k$-path detection in undirected graphs. We show how this algorithm can be transformed to an efficient sensitivity oracle.

\undirectedfastso*

The precise running time achieved by the algorithm is obtained by replacing $1.66^k$ in the above by $\alpha^{k+o(k)}$ where
\begin{equation}
    \alpha=2 \sqrt{2}^{\sqrt 2} (2-\sqrt{2})^{2-\sqrt 2} (\sqrt 2 - 1)^{\sqrt 2 - 1} \approx 1.656854
\end{equation}

First, we give a brief overview of the static algorithm. Following the observation of Brand, Dell, and Husfeldt~\cite{extensor-coding} that the algorithm can be rephrased in the language of extensors, we present the algorithm in this language. We note that in doing so, we lose the polynomial space complexity of the algorithm of \cite{extensor-coding}, making it exponential.

\subsubsection{Static randomized \texorpdfstring{$O^*(1.66^k)$}{O*(1.66\^k)} algorithm for \texorpdfstring{$k$}{k}-path detection on undirected graphs} \label{sec:narrow-sieves}

In the extensor-coding technique described so far, we assign a vector $\chi(v)\in \FF^k$ to each $v\in V$. Here, we instead assign vectors to a subset of the vertices and to a subset of the edges. We begin with randomly partitioning $V$ to two disjoint subsets, $V=V_1\cup V_2$, where each vertex is assigned to each part independently and with equal probability. Assuming there exists a $k$-path in the graph, we hope that one exists with exactly $k_1$ of its \emph{vertices} belonging to $V_1$, and $k_2$ of its \emph{edges} are connecting two $V_2$ vertices, where $k_1$ and $k_2$ are two new parameters with values $k_1=0.5k$ and $k_2=\big\lfloor \frac{\sqrt{2}-1}{2} k \big\rfloor \approx 0.207 k$. We will iterate the whole algorithm presented below with different random partitioning of $V$ into $V_1\cup V_2$ until the probability of successfully catching a $k$-path, if one exists, is high. Specifically, we repeat this for $\left(2(2-\sqrt{2})^{2-\sqrt 2} (\sqrt 2 - 1)^{\sqrt 2 - 1}\right)^{k+o(k)} < 1.015^k$ partitions.

We choose $\FF$ with characteristic 2 for our purposes, and work with $\Lambda(\FF^{k_1+k_2})=\Lambda(\FF^{\lfloor k/\sqrt{2}\rfloor })$, which is commutative. We assign a Vandermonde vector $\chi_1(v)\in \FF^{k_1}$ to each vertex in $V_1$ and extend it with $k_2$ zeros to obtain $\chi(v)=\binom{\chi_1(v)}{0}$, and similarly give each edge in $V_2\times V_2$ a Vandermonde vector of length $k_2$ and prepend it with $k_1$ zeros. We require that $\chi(uw)=\chi(wu)$ for any $u,w\in V_2$. Now, given a walk $W=v_1v_2...v_s$ on the graph, we define its walk extensor $\chi(W)$ differently from before, by:

\begin{equation}
    \chi(W) = z^s \prod_{i} y_{v_i,v_{i+1}}\cdot \prod_{i:v_i\in V_1} \chi(v_i) \cdot \prod_{i:v_{i},v_{i+1}\in V_2} \chi(v_{i}v_{i+1})
\end{equation}
where the $y_e$ are edges variables, defined symmetrically: $y_{uv}=y_{vu}$ for all $u,v\in V$. We see that only $k$-walks with exactly $k_1$ vertices from $V_1$ and exactly $k_2$ edges from $V_2\times V_2$ contribute to $[z^k e_{[k_1+k_2]}] \chi(W)$. Equivalently, we assign $\chi(v)=1$ for all $v\in V_2$ and $\chi(uv)=1$ for all $(u,v)\not \in V_2\times V_2$, and then the products can be taken over all vertices and edges of $W$.

To make the notation even more compact, we can now redefine the $\chi(v)$ extensors to be multiplied by a $z$ variable, and the $\chi(uv)$ extensors to be multiplied by $y_{uv}$. Then, the definition is simply:

\begin{equation}
    \chi(W) = \prod_{i=1}^s \chi(v_i) \cdot \prod_{i=1}^{s-1} \chi(v_{i}v_{i+1}).
\end{equation}

Now, instead of computing $\sum_W \chi(W)$ mod $z^{k+1}$ over all the walks, we compute it over what is called \emph{admissible} walks. Admissible walks are walks that do not at any point go from a $V_2$ vertex to a $V_1$ vertex and back to the same $V_2$ vertex. In other words, $W=v_1...v_s$ does not contain any substring of the form $vuv$ where $v\in V_2$ and $u$ in $V_1$. Thus, denoting by $\mathcal{W}$ the set of all walks, our target is the coefficient of $z^k e_{[k_1+k_2]}$ in the expression

\begin{equation}
    \Z := \sum_{\substack{W=(v_1,...,v_s) \in \mathcal{W}\\ W\text{ is admissible}}} y_{v_1} \cdot \chi(W),
\end{equation}
where $y_v$ are vertex variables, and we multiply only by the variable corresponding to the first vertex in the walk. This is done to break the symmetry between the two traversals of the same walk, preventing the whole sum from being 0 mod 2. We note that a walk extensor can be nonzero if it repeats only $V_2$ vertices, but not if it repeats any $V_1$ vertex.
A clever pairing argument described in \cite{narrow-sieves}, using the fact that calculations are over a field of characteristic 2, can be shown to be a nonzero polynomial in the $y$ variables if and only if there exists a $k$-path with exactly $k_1$ vertices from $V_1$ and exactly $k_2$ edges from $V_2\times V_2$. Our goal, then, is to maintain $\Z$ under $\ell$ changes to the graph. In the static case, it is who in \cite{narrow-sieves} that the sum can be computed efficiently with a dynamic programming approach.

Since all $y$ variables are being substituted with scalars, all calculations are done in the ring $\Lambda(\FF^{k_1+k_2}[z] / z^{k+1})$.

\subsubsection{Randomized \texorpdfstring{$O^*(1.66^k)$}{O*(1.66\^k)} sensitivity oracle for \texorpdfstring{$k$}{k}-path detection on undirected graphs}

We now prove \Cref{thm:so-undirected-fast}. We note that, as the current best static algorithm for this problem is upper bounded by $O(1.66^k n^3)$ time, we cannot hope to improve the preprocessing time without improving the upper bound on the static problem too (since then querying with $\ell=0$ updates will output the solution to the static problem). We similarly cannot hope to improve the query time of the sensitivity oracle we present here, as the preprocessing of an empty graph requires at most $O(n^2)$ time, and then querying after insertion of all $|E|\le kn$ edges would result in a better bound.

We will follow a similar idea as we did in \Cref{sec:k-path-query}. One particular detail will present a crucial complication: when stitching together two admissible walks with an edge, the resulting walk does not have to be admissible. This complicates the way in which we compute the stitching, and makes the expressions used more cumbersome.

We describe the operations done in preprocessing time and query time, for any fixed partition $V=V_1\cup V_2$. The preprocessing and query operations are then repeated over the partitions initially chosen at random, that match the number of partitions required.

In preprocessing time, we compute $Q[u,v]$ for any $u,v\in V$ as the sum of all walk extensors over admissible walks from $u$ to $v$, and we similarly let $S[u]$ be the sum of all walk extensors for walks originating at $u$. Here we do not define the dual $F$ vector separately, because $S=F$. We similarly compute a vector $\mathbf{S}$ such that $\mathbf{S}[u]$ is the sum of all walk extensors for walks originating at $u$, multiplied by the vertex variable for the end of the walk. Thus, we have $S[u] = \sum_v Q[u,v]$ while $\mathbf{S}[u] = \sum_v Q[u,v]y_v$. Computing $Q$ (and hence also $S$ and $\mathbf{S}$) can be done in time $2^{k_1+k_2}\poly(k)n^3$ in the same way described in \Cref{sec:narrow-sieves} and similar to the query-time computation below, noting that the number of edges is assumed to be at most $kn+\ell \le (k+1)n$. Therefore, we will not detail it here.

When only adding edges, we compute the sum over the admissible walks in the new graphs with a dynamic-programming approach, as follows. Denote by $A_r[u,v]$ the sum of walk extensors over all admissible walks from vertex $u$ to vertex $v$ in the new graph using exactly $r$ new edges (possibly with repeats), and let $A=\sum_{r=0}^k A_r$. Let $I$ be the set of vertices that are endpoints of one of the new edges. We will consider $A_r[u,v]$ only for $u,v\in I$, thus $A_r$ is a square matrix of shape at most $(2\ell)\times (2\ell)$. Denote by $E^+$ the set of newly-inserted edges, where we insert them \emph{directed}, so $|E^+|=2\ell$. Then, $\Znew - \Z$ can be seen to be composed of
\begin{equation}
    \Znew - \Z = \sum_{r=1}^k Y_r,
\end{equation}
where $Y_r$ is the sum of walk extensors over admissible walks with exactly $r$ new edges (possibly with repeats). Computing $Y_1$ can be done via

\begin{equation}
Y_1 = \sum_{(s,t)\in E^+} \mathbf{S}[s]\chi(st)S[t] .
\end{equation}
(Note that, since $(s,t)$ is a new edge, it was never included in any walk accounted for in $S[s]$ and $S[t]$, and hence the concatenations will always produce an admissible walk.) For $r\ge 2$, $Y_r$ can be computed by singling out the first and last new edges used, resulting in

\begin{equation}
    Y_r = \sum_{(s_1,t_1),(s_2,t_2)\in E^+} \mathbf{S}[s_1]\chi(s_1 t_1)\tilde{A}_{r-2}[t_1,s_2;s_1,t_2]\chi(s_2 t_2)S[t_2] \quad\quad\quad\quad (r\ge 2).
\end{equation}

Here, $\tilde{A}_r[t_1,s_2;s_1,t_2]$ is defined to account for the admissible walks from $t_1$ to $s_2$ that can be used when connecting $t_1$ to $s_1$, and $s_2$ to $t_2$, while using exactly $r$ new edges. That is, walks that would not make the resulting compound walk inadmissible.
Thus,

\begin{equation}
    \sum_{r=2}^k Y_r = \sum_{(s_1,t_1)\in E^+}\sum_{(s_2,t_2)\in E^+} \mathbf{S}[s_1]\chi(s_1 t_1)\tilde{A}[t_1,s_2;s_1,t_2]\chi(s_2 t_2)S[t_2].
\end{equation}

By an inclusion-exclusion argument it is seen that 

\begin{align}
\begin{split}
\tilde{A}[t_1,s_2;s_1,t_2] &= A[t_1,s_2] \\
 &- [t_1\in V_1,s_1\in V_2]\chi(t_1)\chi(s_1t_1)A[s_1,t_1] \\
 &- [s_2\in V_1,t_2\in V_2]\chi(s_2)\chi(s_2t_2)A[s_2,t_2] \\
 &+ [t_1\in V_1,s_1\in V_2,s_2\in V_1,t_2\in V_2]\chi(t_1)\chi(t_1s_1)A[s_1,t_2]\chi(t_2s_2)\chi(s_2).
\end{split}
\end{align}

Since the calculation is done over $\Lambda(\FF^{k_1+k_2}[z, \{y_{uv}\}] / z^{k+1})$ (with scalars substituted for the $y$ variables), each operation requires $2^{k_1+k_2}\poly(k)$ time. Although subtraction is the same as addition in this ring, we keep track of the signs for clarity. Thus, once the values of $A[u,v]$ are computed, only $\ell^2 2^{k_1+k_2}\poly(k)$ time is required to compute $\Znew$.

To compute $A_r[u,v]$ for all $u,v\in V$, we observe that $A_0=Q$, and that by defining $A_{-1}=0$ and conditioning on the first occurrence of a new edge in the walk, we have

\begin{equation}\label{eq:Ar-recursive}
    A_r[u,v] = \sum_{(s,t)\in E^+} Q[u,s] \chi(st) \bigg(A_{r-1}[t, v] - [s\in V_2,t\in V_1] \chi(t)\chi(ts)A_{r-2}[s,v]\bigg),
\end{equation}
where we only compute $A_r[u,v]$ for $u,v\in I$. Note that the right hand side of \Cref{eq:Ar-recursive} only includes $A_r$ and $Q$ in such indices in this case. Thus, we can compute all $A_r$ for $r=0,1,...,k$ in time $\ell^3 2^{k_1+k_2}\poly(k)$. Thus, the whole update is done in time $\ell^3 2^{k_1+k_2}\poly(k)$. Repeating over the random several random partitioning $V=V_1\cup V_2$ results in a $O(\ell^3 1.66^k)$ bound on the query time.

Now suppose we remove one edge from the initial graph.
We first observe the conditions for stitching two walks $W_1$ and $W_2$ such that the resulting walk is admissible. Suppose $W_1$ is a walk ending in $s$, $(s,t)$ is an edge, and $W_2$ is a walk starting at $t$. Then the concatenation, $W$, of these walks, is admissible if and only if $W_1$ and $W_2$ are admissible, with one possible caveat: if $s\in V_2$ and $t\in V_1$, then $W_2$ must not have $s$ as the second element. Likewise, we have the symmetric case of $t\in V_2$ and $s\in V_1$. 

Thus, to compute the sum over admissible walks joined by a removed edge $(s,t)$, we first compute $\mathbf{S}[s]\chi(st)S[t]$ then subtract the resulting forbidden walks, in case they can be formed. This will remove walks that used exactly once the edge $(s,t)$, and in general, a walk that used $(s,t)$ exactly $c$ times will be removed exactly $c$ times.
If $s\in V_2,t\in V_1$ then we must subtract walks that start at some $W_1$ that ends in $s$, then $W_2$ begins with $t$ then $s$. The sum over the walks of this latter type is exactly $S[t]-\chi(t)\chi(st)S[s]$. This is because in this case, walks of the type $t\to s\to W'$ are admissible if and only if $s\to W'$ is admissible. Thus the subtracted walks amount to
\[\bigg(\mathbf{S}[s] - [s\in V_1,t\in V_2]\chi(s)\chi(st)\mathbf{S}[t]\bigg)\chi(st)\bigg(S[t] - [s\in V_2,t\in V_1]\chi(t)\chi(st)S[s]\bigg).\]
Denote $\tilde{S}[s;t] = S[s] - [s\in V_1,t\in V_2]\chi(s)\chi(st)S[t]$, which is, informally, the sum of admissible walks ending at $s$ that should be taken into account when stitching a walk with an edge $(s,t)$. We similarly define $\tilde{\mathbf{S}}[s;t]$. Thus the newly added admissible walks can be more succinctly written as $\tilde{\mathbf{S}}[s;t]\chi(st)\tilde{S}[t;s]$.

Still, we have removed some walks too many times. Only walks using $(s,t)$ exactly once were correctly cancelled, but walks using $(s,t)$ twice were removed twice. We therefore add back $\tilde{\mathbf{S}}[s;t]\chi(st)\tilde{Q}[t,s;s,t]\chi(st)\tilde{S}[t;s]$. Continuing in this way, we now add back walks using $(s,t)$ three times, etc., until we obtain the final result after $k$ iterations\footnote{In fact, it can be seen that traversing the same edge more than twice will always produce a walk extensor evaluating to zero, and thus fewer iterations suffice. However, this observation will not simplify the generalization that follows.}:
\[
\Znew = \Z - \tilde{\mathbf{S}}[s;t]\chi(st)\tilde{S}[t;s] \sum_{i=0}^k \Big(-\tilde{Q}[t,s;s,t]\chi(st)\Big)^i
\]

In general, for both edge insertions and deletions, we can combine the same methods, as follows. We compute the matirx $A_{a,r}[u,v]$ for all $u,v\in I$ and $a,r\ge 0$, $a+r\le k$, which are admissible walks constructed by explicitly placing $a$ added and $r$ removed edges. This notion of ``explicitly'' placing $r$ removed edges may be confusing, and is not equivalent to counting walks using exactly $r$ or at least $r$ removed edges. Instead, it is equivalent to counting exactly $\binom{c}{r}$ times each walk that uses $c$ removed edges. Then more precisely, $A_{a,r}[u,v]$ counts exactly $\binom{c_r}{r}\cdot [c_a=a]$ times each walk that uses exactly $c_r$ removed edges and $c_a$ added edges.

We have $A_{0,0}=Q$, and:

\begin{equation}\label{eq:Ar-full-recursive}
\begin{split}
    A_{a, r}[u,v] &= \sum_{(s,t)\in E^+} Q[u,s] \chi(st) \bigg(A_{a-1,r}[t, v] - [s\in V_2,t\in V_1] \chi(t)\chi(ts)A_{a-2,r}[s,v]\bigg) \\
    &+ \sum_{(s,t)\in E^-} \tilde{Q}[u,s;t] \chi(st) \bigg(A_{a,r-1}[t, v] - [s\in V_2,t\in V_1] \chi(t)\chi(ts)A_{a,r-2}[s,v]\bigg).
\end{split}
\end{equation}

We then compute:
\begin{equation}
\begin{split}
    \Znew &- \Z = \sum_{a+r=1}(-1)^r \sum_{(s,t)\in E^{\pm}}
    \tilde{\mathbf{S}}[s;t]\chi(st)\tilde{S}[t;s] \\
    &+ \sum_{a+r\ge 2}(-1)^r\sum_{(s_1,t_1),(s_2,t_2)\in E^{\pm}} \tilde{\mathbf{S}}[s_1;t_1]\chi(s_1 t_1)\tilde{A}_{r',a'}[t_1,s_2;s_1,t_2]\chi(s_2 t_2)\tilde{S}[t_2;s_2],
\end{split}
\end{equation}
where $r'$ is $r$ minus the number of $E^-$ edges among $(s_1,t_1),(s_2,t_2)$, and $a'$ is $a$ minus the number of $E^+$ edges among them. This concludes the proof of \Cref{thm:so-undirected-fast}.

An argument similar to that used in the proof of \Cref{lem:Z-new} shows that all admissible walks in the new graph with nonzero extensor value are counted exactly once.

\undirectedbipartiteso*

\begin{proof}[Proof sketch]
    This follows in the same way as described in \cite{narrow-sieves} for the static case, by using the $S$ and $T$ as $V_1$ and $V_2$ in the sensitivity oracle for the undirected case, noting that there will be no $V_2\times V_2$ edges and hence we can set $k_2=0$ and $k_1=k/2$.
\end{proof}

\section{Dynamic algorithms} \label{sec:dynamicalgs}
\subsection{\textsc{Exact \texorpdfstring{$k$}{k}-Partial Cover}} \label{sec:exact-partial-cover}
We now present a dynamic algorithm for a different problem, using similar techniques. The problem is a parameterization of the \textsc{Exact Set Cover}. Recall \Cref{def:exact-partial-cover}:
\exactpartialcover*

This is ``exact'' in two senses. The first is that the cover is by disjoint subsets (so every covered element is covered exactly once), and the second is that the size of the covered set is exactly $k$. Both are contrasted with the \textsc{$k$-Partial Cover} problem discussed in \Cref{sec:partial-cover}.

We impose no constraints on the size of $T$ in the above definition, but we note that our techniques can also be adapted for such constraints (for example, deciding the existence under the constraint $|T|\le t$ for some new parameter $t$, or finding the minimum $|T|$ of a solution). 

In the dynamic setting of this problem, subsets $S_i$ are added one by one, and some possibly removed after being previously added. Our result can be stated as follows:

\begin{restatable}{theorem}{exactcovertheorem}\label{thm:exact-set-cover}
	The \textsc{Exact $k$-Partial Cover} problem admits a randomized (one-sided error) dynamic algorithm with $O^*(2^k)$ update time, and a deterministic dynamic algorithm with $O^*(4^k)$ update time. Here $O^*$ hides factors that are polynomial in $k$, $\log n$ and $\log N$.
\end{restatable}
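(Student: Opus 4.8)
The plan is to adapt the algebraic-coding framework developed for \textsc{$k$-Path}, replacing ``walk extensors'' by ``set extensors''. Give every universe element $a\in[N]$ a Vandermonde vector $\chi(a)\in\FF^k$ --- with $\FF=\QQ$ in the deterministic algorithm and $\FF$ a field of characteristic $2$ of size $\poly(k)$ in the randomized one --- and in the deterministic case lift it to $\bar\chi(a)=\binom{\chi(a)}{0}\wedge\binom{0}{\chi(a)}\in\Lambda(\QQ^{2k})$ exactly as in \Cref{sec:extensor-coding}. To a set $S$ with $1\le|S|\le k$ associate its \emph{set extensor} $g_S:=\prod_{a\in S}\chi(a)$ (resp.\ $\bar g_S:=\prod_{a\in S}\bar\chi(a)$), and ignore sets of size $0$ or $>k$, as those can never influence whether a valid sub-collection exists. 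The data structure (initialized with $P=1$ for the empty collection) maintains the single extensor
\[
 P \;=\; \prod_{\substack{S\text{ present}\\ 1\le|S|\le k}}\bigl(1+g_S\bigr)\;\in\;\Lambda(\FF^k)
 \qquad\Bigl(\text{resp. } P=\prod\bigl(1+\bar g_S\bigr)\in\Lambda(\QQ^{2k})\Bigr),
\]
where in the randomized case we insert an independent uniformly random coefficient $x_S\in\FF$ in each factor (so the factor is $1+x_S g_S$). After each update we answer \emph{Yes} iff $[e_{[k]}]P\ne 0$ (resp.\ $[e_{[2k]}]P\ne 0$), a single coefficient lookup.

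For correctness, expand $P=\sum_T\bigl(\prod_{S\in T}x_S\bigr)\bigwedge_{a\in\bigcupdot_{S\in T}S}\chi(a)$ over sub-collections $T$ of the contributing sets: a term is killed unless the sets of $T$ are pairwise disjoint (otherwise a vector repeats in the wedge), and the wedge of more than $k$ vectors in $\FF^k$ vanishes, so the coefficient of $e_{[k]}$ collects exactly the sub-collections that are pairwise disjoint with $\bigl|\bigcupdot_{S\in T}S\bigr|=k$ --- i.e.\ the solutions --- each weighted by $\bigl(\prod_{S\in T}x_S\bigr)\det\bigl(\chi|_{\bigcupdot_{S\in T}S}\bigr)$, and each such Vandermonde determinant is a nonzero field element. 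In the randomized setting the monomials $\prod_{S\in T}x_S$ are pairwise distinct over $T$, so $[e_{[k]}]P$ is a nonzero polynomial in the $x_S$'s (of degree $\le k$) exactly when a solution exists; evaluating at the random choices and applying the DeMillo--Lipton--Schwartz--Zippel Lemma (\Cref{lem:schwartz-zippel}) yields one-sided error, driven below any constant by a larger field or $O(\poly(k))$ repetitions. In the deterministic setting, \Cref{eq:lifts_determinant} gives $\bigwedge_{a\in\bigcupdot_{S\in T}S}\bar\chi(a)=(-1)^{\binom{k}{2}}\det\bigl(\chi|_{\bigcupdot_{S\in T}S}\bigr)^2 e_{[2k]}$ whenever the union has size $k$, so all contributions to $[e_{[2k]}]P$ carry the same sign $(-1)^{\binom{k}{2}}$ and cannot cancel; hence $[e_{[2k]}]P\ne 0$ iff a solution exists. (Characteristic $2$ makes $\Lambda(\FF^k)$ commutative, while the lift makes each $\bar g_S$ of even degree hence central in $\Lambda(\QQ^{2k})$; either way the factors of $P$ commute, so $P$ is independent of the order in which sets were inserted.)

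For the dynamic maintenance, the crucial point --- as in the \textsc{$k$-Partial Cover} sketch --- is that we never perform a general extensor multiplication (which in $\Lambda(\QQ^{2k})$ would cost $\Omega(2^{\omega k})\gg 4^k$ by \Cref{par:general-prod}), but only additions and skew multiplications. An \textbf{insertion} of $S$ multiplies $P$ by $1+g_S$; writing this as $P\mapsto P+(P\wedge g_S)$ and computing $P\wedge g_S$ by wedging with the vectors $\chi(a)$, $a\in S$, one at a time (each $\bar\chi(a)$ being two such vector-wedges in the lifted case) costs at most $2|S|\le 2k$ skew multiplications. A \textbf{deletion} of $S$ must divide $P$ by $1+g_S$; since $g_S$ is a product of at least one vector it is nilpotent with $g_S^{k+1}=0$, and it is central, so
\[
 (1+g_S)^{-1}=\sum_{j=0}^{k}(-1)^j g_S^{\,j},
 \qquad\text{hence}\qquad
 P\ \longmapsto\ \sum_{j=0}^{k}(-1)^j\bigl(P\wedge g_S^{\,j}\bigr),
\]
which we evaluate iteratively, each of the $O(k)$ partial products $P\wedge g_S^{j}$ obtained from $P\wedge g_S^{j-1}$ by another $\le 2k$ skew multiplications. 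Thus every update uses $O(k^2)$ skew multiplications in $\Lambda(\FF^k)$ (resp.\ $\Lambda(\QQ^{2k})$), each costing $O(2^k\poly(k))$ (resp.\ $O(2^{2k}\poly(k))$) field operations by \Cref{par:skew_prod}. A routine estimate shows every coefficient ever appearing has bit-length $\poly(k,\log n,\log N)$ (only the $\le n^{k}$ solution-like sub-collections reach any relevant basis element, each with a determinant bounded by $N^{O(k^2)}$), so a field operation costs $\poly(k,\log n,\log N)$ time, giving the claimed $O^*(2^k)$ and $O^*(4^k)$ update times.

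The main obstacle is exactly this avoidance of general extensor multiplication while still being able both to multiply in and to cancel out the factors $1+g_S$: the insertion side is handled by the skew-multiplication rearrangement above, and the deletion side by the observation that $1+g_S$ is a central unit whose inverse is an explicit low-order polynomial in the nilpotent element $g_S$, so that division too reduces to $\poly(k)$ skew multiplications. A secondary subtlety, specific to the characteristic-$2$ randomized implementation, is that without the per-set variables $x_S$ distinct sub-collections covering the same $k$-element set would cancel in pairs; the $x_S$ make the monomials of distinct sub-collections distinct and eliminate this.
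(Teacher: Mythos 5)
Your proposal is essentially the paper's own construction: maintain the product $\prod_{S}\bigl(1+y_S\chi(S)\bigr)$ (with random per-set coefficients over a characteristic-2 field, or lifted Vandermonde extensors over $\QQ$ with coefficients $1$), answer by reading off $[e_{[k]}]$ resp.\ $[e_{[2k]}]$, handle insertions by skew multiplications, and handle deletions by multiplying with the inverse factor, using commutativity (char.\ 2, resp.\ evenness/centrality of the lifted extensors). Your deletion formula $\sum_{j=0}^{k}(-1)^j g_S^{\,j}$ is correct but overkill: since $g_S$ is a wedge of vectors, $g_S^2=0$, so the inverse is just $1-g_S$ (and in characteristic 2 it is the original factor $1+x_Sg_S$ itself), which is exactly what the paper multiplies by.

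One parameter slip to fix: in the randomized case you take $|\FF|=\poly(k)$ but still assign \emph{Vandermonde} vectors to the $N$ universe elements and assert every solution's Vandermonde determinant is nonzero. That requires an injective map $[N]\to\FF$, i.e.\ $|\FF|\ge N$ (the paper takes $|\FF|=2^{\lceil\log_2 N\rceil}$); with $|\FF|=\poly(k)\ll N$ the evaluation points collide and the determinants of some (possibly all) solutions can vanish. Either enlarge the field to $\max\{N,\Theta(k)\}$ (field operations still cost $\polylog(N)+\poly(k)$, within $O^*$) or use uniformly random vectors and fold their contribution into the Schwartz--Zippel analysis; with that adjustment your argument goes through as in the paper.
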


We mention that the running times in \Cref{thm:exact-set-cover} assume we are able to query the size of a set in $O^*(1)$ time. Otherwise, we need to allow $O^*$ to hide factors polynomial in the size of the updated set.

The techniques used are similar to those used by Koutis \cite{koutis-packing} and Koutis and Williams \cite{algebraic-problems}, using the algebra that is isomorphic to $\Lambda(\FF^k)$ for $\FF$ of characteristic 2, which we show how to adapt for the dynamic case, and a deterministic version using $\Lambda(\QQ^{2k})$.

\begin{proof}
	We choose a field $\FF$ of characteristic 2, with size $|\FF| = 2^{\lceil \log_2 N\rceil}$, and compute in $\Lambda(\FF^k)$. Again we have a mapping $\chi:U\to \FF^k$ into different Vandermonde vectors (which is possible due to the size of $\FF$), and we define $\chi(S)=\bigwedge_{a \in S} \chi(a)$, where the order of the product is fixed given $S$ (so for example, after sorting the elements of $S$).
	
	The main observation is as follows: there exists a solution if and only if \[ [e_{[k]}]\prod_{i=1}^n \left(1+y_i \chi(S_i)\right) \neq 0\]
	(the product is in $\Lambda(\FF^k)$, so this is the wedge product) as a polynomial in the $y$ variables. Indeed, upon expanding the product, we get $\sum_{I\subseteq [n]} \prod_{i\in I} y_i \chi(S_i)$. All products of non-pairwise-disjoint sets vanish, and only those whose union is of size $k$ contribute to the coefficient of $e_{[k]}$. All $I\subseteq [n]$ that constitute a solution contribute a different monomial $\prod_{i\in I}y_i$ to the coefficient of $e_{[k]}$, with nonzero coefficient.
	
	Thus, the algorithm is to keep track of the product $\prod_{i=1}^n \left(1+y_i \chi(S_i)\right)$, after randomly assigning values $y_i\in\FF$ to the subsets. It is noted that we must remember the $y_i$ we used on the sets currently in the pool, but need not remember those that were removed.
	
	Now, on updates, we do:
	\begin{itemize}
		\item Adding a set $S$: we choose a random $y_S\in \FF$ and multiply the product by $1+y_S \chi(S)$.
		\item Removing a set $S$: we multiply again product by the same $1+y_S \chi(S)$, where $y_S$ is the same value used at insertion.
	\end{itemize}
	
	Correctness of the increments is evident. For the decrements, we note that $\Lambda(\FF^k)$ is commutative since $\Char \FF=2$, and hence we can assume the maintained product was computed with $S$ being the last inserted set. Then the effect of multiplying by $1+y_S \chi(S)$ again amounts to multiplying in total by $(1+y_S \chi(S))^2 = 1+y_S^2 \chi(S)^2 = 1$, because $\Char \FF=2$ and $\chi(S)^2=0$. Hence, we effectively remove the factor corresponding to $S$.
	
	This finishes the proof for the randomized case. For the deterministic case, we work over $\QQ$ instead of an $\FF$ with characteristic 2, and use the usual trick of lifting to work in $\Lambda (\QQ^{2k})$, using $\bar{\chi}(v) = \binom{\chi(v)}{0}\wedge \binom{0}{\chi(v)}$. Now there is no need for the $y_S$ variables (or equivalently, we set them all to 1), and removing a set now amounts to multiplying by $1-\bar{\chi}(S)$. While here the algebra is not commutative, the factors in the product do commute. Indeed, since all values of $\bar{\chi}$ are spanned by $\{e_i\wedge e_j:i,j\in [2k]\}$, all extensor values involved are seen to have only even-degree monomials, which always commute with any extensor (see~\Cref{sec:extensor-coding}), hence the computation remains valid\footnote{Put differently, all calculations take place in $\bigoplus_{i=0}^k \Lambda^{2i}(\QQ^{2k})$, which is commutative.}. We further note that each product with a term of the form $1\pm\bar{\chi}(S)$ can be done in $O^*(4^k)$ time, because multiplying by $\bar{\chi}(S)$ amounts to iteratively multiplying by each vector composing it, in order, thus reducing to skew multiplications.
\end{proof}

We note that, just as with the \textsc{$k$-Path} problem, we can adapt the deterministic version to an approximate-counting algorithm. We state the corollary for the special case of the \textsc{$m$-Set $k$-Packing} problem in the following subsection.

We are also able to adapt this solution to produce a dynamic algorithm for the problem of determining whether there exists $I\subseteq [n]$ with $\big|\bigcupdot_{i\in I} S_i\big| \ge k$, having randomized update time $O^*(4^k)$ and deterministic $O^*(8^k)$. To that end, we keep track of the product only over sets in the pool whose size is at most $k$ - if any set in the pool has size at least $k$, we announce that a solution exists. Otherwise, we run the algorithm for all $k'$ with $k \le k' \le 2k$, and announce there is a solution if and only if there is a solution to one of these values for $k'$.

\subsection{\textsc{\texorpdfstring{$k$}{k}-Partial Cover}} \label{sec:partial-cover}
We move to a different problem that is a parameterization of the \textsc{Set Cover} problem. Recall \Cref{def:partial-cover}:
\partialcover*

In the dynamic setting of this problem, subsets $S_i$ are added one by one, and some possibly removed after being previously added. Our result can be stated as follows:

\begin{restatable}{theorem}{partialcovertheorem}\label{thm:partial-cover}
The \textsc{$k$-Partial Cover} problem admits a randomized dynamic algorithm with $O^*(2^k)$ update time, and a deterministic dynamic algorithm with $O^*(4^k)$ update time. Here $O^*$ hides factors that are polynomial in $k$, $\log n$ and $\log N$.
\end{restatable}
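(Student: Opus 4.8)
The plan is to realize the extensor polynomial $P(z)$ sketched in \Cref{sec:overview} as a data structure. I would carry out the deterministic $O^*(4^k)$ case in detail, and then observe that the randomized $O^*(2^k)$ case is analogous after replacing $\QQ$ by a characteristic-$2$ field and the lifting trick by randomization. Give each universe element $a\in[N]$ a distinct Vandermonde vector $\chi(a)\in\QQ^k$ and its lift $\bar\chi(a)=\binom{\chi(a)}{0}\wedge\binom{0}{\chi(a)}\in\Lambda^2(\QQ^{2k})$, and for a set $S$ write $\bar\chi(S)=\bigwedge_{a\in S}\bar\chi(a)$ in a fixed order. For the current pool $\mathcal P$ of sets, maintain, truncated modulo $z^{k+1}$, the element
\[ P(z)=\!\!\prod_{S\in\mathcal P,\ |S|<k}\!\!\Bigl(1+z\bigl[\textstyle\prod_{a\in S}(1+\bar\chi(a))-1\bigr]\Bigr)\ \in\ \Lambda(\QQ^{2k})[z]/(z^{k+1}), \]
together with a counter of how many sets in $\mathcal P$ have size $\ge k$. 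Expanding the product, each choice of a sub-collection $T$ together with a nonempty $A_S\subseteq S$ for every $S\in T$ contributes $\pm z^{|T|}\bar\chi\!\left(\bigcup_{S\in T}A_S\right)$, which survives precisely when the $A_S$ are pairwise disjoint and $\left|\bigcup_{S\in T}A_S\right|=k$; by \Cref{eq:lifts_determinant} every surviving contribution has the same sign, so no cancellation occurs. A greedy argument (repeatedly keep only a set that enlarges the running union, stopping once it reaches $k$, then thin that union to exactly $k$ elements while leaving every chosen $A_S$ nonempty) shows that whenever some sub-collection covers $\ge k$ elements, one of size $\le k$ does. Hence the answer is the least $t\in\{1,\dots,k\}$ with $[e_{[2k]}z^t]P\neq 0$, or $1$ if the counter is positive, or ``no solution'' if neither holds; the same degree count justifies truncating at $z^k$.

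For an insertion of a set $S$ with $|S|<k$ we multiply $P$ by its factor, which we rewrite as $(1-z)\,P(z)+z\cdot\bigl(P(z)\wedge\textstyle\prod_{a\in S}(1+\bar\chi(a))\bigr)$. The first summand is coordinate-wise, and the second is obtained by multiplying $P$ successively by each $1+\bar\chi(a)$ via the map $R\mapsto R+\bar\chi(a)\wedge R$; since $\bar\chi(a)$ is a product of two vectors, each such step costs two skew multiplications (\Cref{par:skew_prod}) per $z$-coefficient. Over the fewer than $k$ elements of $S$ and the at most $k+1$ powers of $z$, an insertion therefore costs $O^*(4^k)$.

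\textbf{The deletion step is the main obstacle.} When $S$ leaves $\mathcal P$ we must multiply $P$ by the inverse of $1+X$, where $X=-z+z\prod_{a\in S}(1+\bar\chi(a))$. Every monomial of $X$ is $z$ times a wedge of at least one $\bar\chi(a)$, hence an even-degree extensor of degree $\ge 2$; therefore $X$ is central, and $X^{k+1}$ has extensor-degree $>2k$ and so vanishes in $\Lambda(\QQ^{2k})$, giving $(1+X)^{-1}=\sum_{i=0}^{k}(-X)^i$. The pitfall is that naively forming this inverse and multiplying by it would invoke a general extensor multiplication, costing $O^*(2^{\omega k})$, which exceeds the $O^*(4^k)$ budget. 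The fix is to evaluate $P\wedge(1+X)^{-1}$ by the Horner recursion $R\leftarrow P-X\wedge R$ run $k$ times starting from $R=P$ (which returns $P\wedge\sum_{i=0}^{k}(-X)^i$), where each application of ``$\wedge\,X$'' equals $-z(\cdot)+z\bigl((\cdot)\wedge\prod_{a\in S}(1+\bar\chi(a))\bigr)$ and is therefore a chain of skew multiplications exactly as in the insertion step. Thus a deletion is $\poly(k)$ rounds of $O^*(4^k)$-time work, i.e.\ $O^*(4^k)$ overall.

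For the randomized $O^*(2^k)$ algorithm one works instead in $\Lambda(\FF^k)[z]/(z^{k+1})$ over a field $\FF$ with $\Char\FF=2$ and $|\FF|$ a polynomially large power of two, using randomization in the style of the algebraic-coding algorithms of Koutis and Koutis--Williams~\cite{koutis-packing,algebraic-problems} (as for the related problems in this section) so that distinct solutions do not cancel. The essential gain is that $\Lambda(\FF^k)$ is now commutative, so by fast subset convolution (\Cref{par:general-prod-char-2}) a single multiplication costs $O^*(2^k)$, and hence each insertion and deletion --- carried out either by the same algebraic rewrites as above or directly, since general multiplication has become cheap enough --- costs $O^*(2^k)$; one-sided error follows from \Cref{lem:schwartz-zippel}, since the relevant polynomial has degree polynomial in $k$ and the pool size, so a polynomially larger field or $O(1)$ independent repetitions suffice (we use that queries are independent of the oracle's randomness). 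In both settings the maintained object has $O(k)$ coefficients, each an extensor with $2^{O(k)}$ coordinates that are field elements or integers of magnitude $N^{O(k)}$, so every arithmetic operation costs $\poly(k,\log n,\log N)$ time in the word-RAM model; a query reads off the least $t$ for which the appropriate coefficient of $P$ is nonzero (also consulting the large-set counter), yielding the claimed update and query times.
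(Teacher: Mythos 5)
Your deterministic $O^*(4^k)$ argument is essentially identical to the paper's: you maintain the same truncated extensor polynomial
\[
P(z)=\prod_{S\in\mathcal P,\,|S|<k}\Bigl(1+z\bigl[\textstyle\prod_{a\in S}(1+\bar\chi(a))-1\bigr]\Bigr)\in\Lambda(\QQ^{2k})[z]/(z^{k+1}),
\]
do insertions by the same rewrite $(1-z)P+zP\wedge\prod_{a\in S}(1+\bar\chi(a))$, and do deletions by the same inverse series $(1+X)^{-1}=\sum_{i=0}^k(-X)^i$ evaluated only through repeated ``multiply-by-$X$'' steps; your explicit Horner recursion is a perfectly fine implementation of what the paper describes. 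The sign and truncation arguments via \Cref{eq:lifts_determinant} also line up with the paper's.

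For the randomized $O^*(2^k)$ case, however, your proposal departs from the paper and has a real gap. The paper's randomized algorithm does \emph{not} use $P(z)$: it maintains the $k$ sums $P_j=\sum_{S:|S|<k}\prod_{a\in S}(1+y_{a,j}\chi(a))$ for $j=1,\dots,k$, with freshly indexed variables $y_{a,j}$ per element \emph{and} per position $j$ (following Koutis--Williams), and tests $[e_{[k]}]\prod_{j=1}^t P_j$ for increasing $t$; in characteristic $2$ general multiplication already costs $O^*(2^k)$, so no restructuring is needed. These extra $y_{a,j}$ indices are exactly what stops cancellation. Your $P(z)$ over a characteristic-$2$ field, with random $\chi(a)$ but no explicit placement of the auxiliary variables, does not enjoy this protection: for instance with $S_1=\{a,b\}$, $S_2=\{a,c\}$, $S_3=\{b,c\}$ and $k=3$, the minimum cover size is $t=2$, yet the six contributions to $[z^2e_{[3]}]P$ coming from the ordered partitions of $\{a,b,c\}$ into two nonempty blocks are all equal as extensors, so over characteristic $2$ they cancel to $0$ and your oracle would wrongly report no solution. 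Saying ``use randomization in the style of Koutis--Williams so that distinct solutions do not cancel'' asserts the needed property without supplying the mechanism; you must say exactly where the random variables live (the paper's $y_{a,j}$ in the $\prod_{j=1}^t P_j$ formulation, or a carefully chosen per-set/per-element-per-set scheme in the $P(z)$ formulation) before \Cref{lem:schwartz-zippel} can be invoked, because that lemma needs a \emph{nonzero} polynomial, and the choice of variable structure is precisely what guarantees nonvanishing.
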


As with \Cref{thm:exact-set-cover}, we mention that the running times in \Cref{thm:partial-cover} assume we are able to query the size of a set in $O^*(1)$ time. Otherwise, we need to allow $O^*$ to hide factors polynomial in the size of the updated set.

\begin{proof}
The techniques used are similar to those used in Koutis and Williams \cite{algebraic-problems}, which specifically propose the polynomial we use in the randomized case, and which we generalize here as well.

We use the same observation implicit in \cite{algebraic-problems}, that it is possible to cover at least $k$ elements with $t$ sets, if and only if we can \emph{exactly} cover $k$ elements with $t$ pairwise disjoint subsets of the $S_i$s. It follows that there is a solution with $|I|=t$ if and only if $\left(\sum_{i=1}^n \prod_{a\in S_i}(1+x_a)\right)^t$ contains a $k$-multilinear term (that is, a monomial of degree $k$ where all variables appear with exponent at most 1). There is a subtlety here, however, when working over a finite field, as the relevant coefficients can be divisible by the characteristic. By contrast, in other polynomials we used, all the relevant monomials had a coefficient of 1. To deal with this issue, instead of raising the polynomial to the power of $t$, we use different variables in each occurrence.

We introduce the variables $\{y_{a,t}:a\in U, t\in [k]\}$. Now, for any $t\le k$, there exists a solution if and only if
\[ [e_{[k]}]\prod_{j=1}^t \left(\sum_{i=1}^n \prod_{a\in S_i}(1+y_{a,j} \chi(a))\right) \neq 0 \]
as a polynomial in the $y$ variables, and we seek the minimum value of $t$ for which this is true. We note that, if there exists a solution, one must exist with $t\le k$.

To dynamically maintain the answer, we can therefore proceed as follows. First, we maintain $n$ and the number of sets we have of size at least $k$, and as long as this number is positive, we output that the minimal solution is $\min |I| = 1$. We dynamically for each $j\in [k]$ we maintain the value $P_j = \sum_{S:|S| < k}\prod_{a\in S}(1+y_{a,j} \chi(a))$ which runs only over the sets of cardinality less than $k$, and if there are no sets of cardinality at least $k$, we look for the first $t$ such that $[e_{[k]}]\prod_{j=1}^t P_j \neq 0$ (or declare than no such $t$ exists). This can be done with a simple iterative search between the extremes $t=1$ and $t=k$.

For the randomized case, we pick $\Char \FF=2$ as always, then updating $P$ requires $O^*(2^k)$ time, and computing the minimum $t$ requires $O^*(2^k)$ as well (due to complexity of multiplication in characteristic 2, see \Cref{par:general-prod-char-2}). We thus update in time $O^*(2^k)$ in total.

To make this deterministic, we can pick lifts of Vandermonde vectors to work in $\Lambda(\QQ^{2k})$ as always. Updating $P$ requires $O^*(4^k)$ time because we only compute sums and skew-products, but computing the minimal $t$ now requires $O(2^{\omega\cdot 2k / 2}) = O(2^{\omega k})$ field operations.
For this reason, we use a different computation for the deterministic case (which can also be used for the randomized case). To that end, we introduce a single new variable $z$, and consider the polynomial
\begin{equation}
    P(z) = \prod_S \left(1 + z\left[\prod_{a\in S} (1+\bar{\chi}(a)) - 1\right] \right).
\end{equation}
We note that $[e_{[2k]}z^t]P\neq 0$ (that is, the coefficient of $e_{[2k]}$ in the coefficient of $z^t$ in $P$) if and only if we can cover $k$ elements with exactly $t$ disjoint nonempty subsets of the $S$'s. Thus, the lowest $t$ for which this is nonzero is the answer we seek. Because the minimal solution (if one exists) always has $t\le k$, it is enough to consider $P(z)$ modulo $z^{k+1}$. However, we see that $\deg P(z) \le k$ because any product of more than $k$ extensors of the form $\bar\chi(a)$ vanishes. It follows that the size of $P$ is not too high to handle dynamically.

To update the polynomial $P(z)$ with a new set $S$ of size $|S|<k$, we can multiply $P(z)$ by the corresponding factor $1 + z\left[\prod_{a\in S} (1+\bar{\chi}(a)) - 1\right]$ with only additions and skew multiplications, thus requiring only $O^*(4^k)$ field operations. Indeed, the product of $P(z)$ with the new factor is $(1-z)P(z) + P(z)\prod_{a\in S} (1+\bar{\chi}(a))$, which can be performed by separately computing $(1-z)P(z)$ and $P(z)\prod_{a\in S} (1+\bar{\chi}(a))$, where the latter can be computed by repeated skew-multiplications.

It remains to describe how we can remove a set with the same time complexity. For that, we note that $1 + z\left[\prod_{a\in S} (1+\bar{\chi}(a)) - 1\right]$ has an inverse in $\Lambda(\QQ^{2k})[z]$. Indeed, denote $X = z\left[\prod_{a\in S} (1+\bar{\chi}(a)) - 1\right]$. Then $X$ has no constant extensor term (that is, $[e_{\emptyset}]X$ is the zero polynomial), and hence $X^{k+1}=0$ (as all monomials of degree more than $2k$ equal zero, and $X$ itself contains only monomials of degree at least 2). Therefore, \[(1+X)\left(1-X+X^2-X^3+...+(-X)^k\right) = 1 - (-X)^{k+1} = 1.\]
Hence, we can multiply $P$ by $(1-X+X^2-X^3+...+(-X)^k)$ to remove the factor corresponding to $S$. Since multiplying by $X$ requires only a $\poly(k)$ additions and skew multiplications, we are able to multiply by $(1+X)^{-1}$ in $O^*(4^k)$ time, as promised.
\end{proof}

\subsection{\textsc{\texorpdfstring{$m$}{m}-Set \texorpdfstring{$k$}{k}-Packing}} \label{sec:set-packing}

Recall \Cref{def:m-set-k-packing}:
\setpacking*

This problem is studied, for example, in \cite{koutis-packing}. This is then a special case of the \textsc{Exact $mk$-Partial Cover} discussed in \Cref{sec:exact-partial-cover} (with the same input universe $U$ and subsets $S_i$), and hence we obtain:

\begin{restatable}{theorem}{setpackingtheorem}\label{thm:m-set-k-packing}
The \textsc{$m$-Set $k$-Packing} problem admits a randomized (one-sided error) dynamic algorithm with $O^*(2^{mk})$ update time, and a deterministic dynamic algorithm with $O^*(4^{mk})$ update time. Here $O^*$ hides factors that are polynomial in $m,k,\log n$ and $\log N$.
\end{restatable}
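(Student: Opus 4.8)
\emph{Proof proposal.} The plan is to derive \Cref{thm:m-set-k-packing} as an immediate corollary of \Cref{thm:exact-set-cover}, via a parameter-preserving reduction from \textsc{$m$-Set $k$-Packing} to \textsc{Exact $mk$-Partial Cover} on the \emph{same} universe $U=[N]$ and the \emph{same} family of subsets $S_1,\ldots,S_n$. The first step is the size bookkeeping that powers the reduction: if $T\subseteq\{S_1,\ldots,S_n\}$ consists of pairwise disjoint sets then $\big|\bigcupdot_{S\in T}S\big|=\sum_{S\in T}|S|=m\cdot|T|$, using that every $|S_i|=m$. Hence $T$ is a valid $k$-packing if and only if $T$ is a pairwise-disjoint sub-collection with $\big|\bigcupdot_{S\in T}S\big|=mk$; equivalently, the \textsc{$m$-Set $k$-Packing} instance is a ``yes'' instance exactly when the \textsc{Exact $mk$-Partial Cover} instance on the same data is.

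Next I would note that this equivalence is stable under the dynamic updates: inserting or deleting a set $S_i$ in the packing instance is literally the same operation in the exact-cover instance, and after any such update all remaining sets still have size $m$, so the correspondence of solution sets is preserved at every step. Therefore the dynamic data structure of \Cref{thm:exact-set-cover}, instantiated with its parameter set to $mk$, is already a correct dynamic algorithm for \textsc{$m$-Set $k$-Packing}. Substituting $mk$ for the parameter in the bounds of \Cref{thm:exact-set-cover} yields randomized update time $O^*(2^{mk})$ and deterministic update time $O^*(4^{mk})$; the hidden $O^*$ factors are polynomial in the exact-cover parameter $mk$, in $\log n$, and in $\log N$, hence polynomial in $m$, $k$, $\log n$, $\log N$, as claimed. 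The remark about needing $O^*(1)$-time access to a set's cardinality carries over verbatim, and here it is in fact vacuous, since $|S_i|=m$ is fixed and known.

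There is essentially no difficult step; the only point requiring a line of verification is the ``no spurious solutions'' direction of the equivalence, namely that any pairwise-disjoint sub-collection covering exactly $mk$ elements must consist of exactly $k$ sets --- which is precisely where the hypothesis $|S_i|=m$ for all $i$ is used, and which would fail if the sets were allowed to differ in size. Once the equivalence and its stability under updates are in hand, nothing in the construction or the running-time analysis of \Cref{thm:exact-set-cover} needs any modification, and the same reduction also transfers the approximate-counting variant discussed after \Cref{thm:exact-set-cover}.
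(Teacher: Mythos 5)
Your proposal is correct and matches the paper's own argument: the paper proves this theorem exactly by observing that \textsc{$m$-Set $k$-Packing} is a special case of \textsc{Exact $mk$-Partial Cover} on the same universe and sets, and then invoking \Cref{thm:exact-set-cover} with parameter $mk$. Your write-up simply spells out the size bookkeeping ($|{\bigcupdot}_{S\in T}S|=m|T|$) and the stability under updates that the paper leaves implicit.
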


Applying the randomized dynamic algorithm to the static problem recovers the running time $2^{mk}\poly(mk) n$ given in Koutis~\cite{koutis-packing} for a randomized algorithm.

As described, we can also adapt the methods to approximately count the number of solutions with a dynamic algorithm.

\begin{corollary} \label{cor:m-set-k-packing-count}
	The \textsc{$m$-Set $k$-Packing} approximate counting problem admits a randomized dynamic algorithm with $O^*\left(\frac{1}{\epsilon^2}4^{mk}\right)$ update time. Here $O^*$ hides factors that are polynomial in $m,k$, $\log n$ and $\log N$. The algorithm outputs an estimate to the number of solutions, that is within relative error $\epsilon$ with probability $> 99\%$.
\end{corollary}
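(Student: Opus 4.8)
The plan is to obtain the counting algorithm essentially for free from the deterministic dynamic data structure for \textsc{Exact $k$-Partial Cover} of \Cref{thm:exact-set-cover}, invoked with parameter $mk$, by randomizing the base vectors exactly as Brand, Dell and Husfeldt~\cite{extensor-coding} turn their deterministic detection routine into an approximate counter. First I would note that for an $m$-Set $k$-Packing instance every $S_i$ has size $m$, so a sub-collection of pairwise disjoint sets with union of size exactly $mk$ is precisely a packing of $k$ of the sets; in particular the number of packings equals the number of \textsc{Exact $mk$-Partial Cover} solutions, so it suffices to approximately count the latter. Recall that the data structure maintains, in $\Lambda(\QQ^{2mk})$, the product $P=\prod_i(1+\bar\chi(S_i))$ over the current pool (with $\bar\chi(a)=\binom{\chi(a)}{0}\wedge\binom{0}{\chi(a)}$ and $\bar\chi(S)=\bigwedge_{a\in S}\bar\chi(a)$), handling an insertion by multiplying by $1+\bar\chi(S)$ and a deletion by multiplying by the inverse $1-\bar\chi(S)$ (valid since $\bar\chi(S)\wedge\bar\chi(S)=0$), each in $O^*(4^{mk})$ time using only additions and skew multiplications. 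The single change I would make is to draw each $\chi(a)\in\{+1,-1\}^{mk}$ independently and uniformly at random instead of taking Vandermonde vectors.

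Next I would set up the estimator. At a query I read off the scalar $X:=[e_{[2mk]}]P$. Expanding $P=\sum_{I}\prod_{i\in I}\bar\chi(S_i)$ over sub-collections, every non-pairwise-disjoint $I$ contributes $0$, and every packing $I$ with $\bigcupdot_{i\in I}S_i=A$ contributes $(-1)^{\binom{mk}{2}}\det(\chi|_A)^2$ by \eqref{eq:lifts_determinant}, where $\chi|_A$ is the $mk\times mk$ matrix whose columns are the vectors $\chi(a)$ for $a\in A$ in the fixed order. Hence $X=(-1)^{\binom{mk}{2}}\sum_A c_A\det(\chi|_A)^2$, the sum being over $mk$-subsets $A$, where $c_A$ is the number of packings with union $A$ and $\sum_A c_A=N$ is the number of packings. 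Since the columns of $\chi|_A$ have i.i.d.\ entries of mean $0$ and variance $1$, the standard identity gives $\mathbb{E}[\det(\chi|_A)^2]=(mk)!$ for every $A$, so $\mathbb{E}[X]=(-1)^{\binom{mk}{2}}(mk)!\cdot N$ and $\widehat N:=(-1)^{\binom{mk}{2}}X/(mk)!$ is an unbiased estimator of $N$. I would then run $R$ independent copies of the whole data structure, each with fresh randomness, and return the average of their estimators, so that the update time becomes $R$ times that of one copy.

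The crux is the variance bound that fixes how large $R$ must be. Writing $D_A=\det(\chi|_A)^2$, the covariances $\mathrm{Cov}(D_A,D_B)$ vanish when $A\cap B=\emptyset$ and are otherwise at most $\sqrt{\mathrm{Var}(D_A)\mathrm{Var}(D_B)}\le\mathbb{E}[\det_{mk}^4]$ for an $mk\times mk$ random sign matrix, and the near-concentration of determinants of random sign matrices (the ingredient already used in~\cite{extensor-coding}) gives $\mathbb{E}[\det_{mk}^4]\le\poly(mk)\cdot((mk)!)^2$; so $\mathrm{Var}(X)=\sum_{A,B}c_Ac_B\,\mathrm{Cov}(D_A,D_B)\le\poly(mk)\,((mk)!)^2\left(\sum_A c_A\right)^2=\poly(mk)\cdot\mathbb{E}[X]^2$, i.e.\ $\mathrm{Var}(\widehat N)\le\poly(mk)\cdot N^2$. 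Taking $R=\Theta(\poly(mk)\,\epsilon^{-2})=O^*(\epsilon^{-2})$, Chebyshev's inequality then yields relative error $\epsilon$ with probability $>99\%$ (a median-of-means variant would trim the $\poly(mk)$ from $R$, but this is unnecessary since $O^*$ absorbs it). I expect this variance estimate to be the main obstacle, though only a bookkeeping one: the estimator is structurally identical to the $k$-path counting estimator of~\cite{extensor-coding} — a weighted sum of squared determinants of random $\pm1$ submatrices, with the packing multiplicities $c_A$ as weights — so their analysis transfers essentially verbatim.

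Finally I would check bit complexity, which is as in \Cref{thm:exact-set-cover} (in fact slightly easier with $\pm1$ entries): by Hadamard's bound $|\det(\chi|_A)|\le(mk)^{mk/2}$ and there are at most $n^{mk}$ packings, so $X$ and every intermediate extensor coefficient is an integer of magnitude $n^{O(mk)}$, each field operation costing $\poly(mk)\polylog(n)$ time, which $O^*$ absorbs. The update time is therefore $R\cdot O^*(4^{mk})=O^*(\epsilon^{-2}4^{mk})$, as claimed.
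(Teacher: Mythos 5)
Your proposal is correct and follows essentially the same route as the paper: reduce \textsc{$m$-Set $k$-Packing} to \textsc{Exact $mk$-Partial Cover}, keep the deterministic dynamic product in $\Lambda(\QQ^{2mk})$ but with random $\pm 1$ vectors, and read off the $e_{[2mk]}$ coefficient as an unbiased (up to the $(mk)!$ factor) estimator, repeating $O^*(\epsilon^{-2})$ times. The paper states this only as a brief remark deferring the variance analysis to Brand--Dell--Husfeldt, which is exactly the fourth-moment/Chebyshev argument you spell out.
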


This is contrasted with the static \emph{exact} counting algorithm presented in \cite{algebraic-problems}, running in time $n^{\lceil mk/2\rceil}\poly(n)$.

\subsection{\textsc{\texorpdfstring{$t$}{t}-Dominating Set}} \label{sec:t-dominating}

We next focus on a parameterization of a different \textsf{NP}-Hard problem: the \textsc{Dominating Set}. Recall \Cref{def:t-dominating}:
\tdominating*

This is contrasted with a different parameterization of the \textsc{Dominating Set} problem that the reader might be familiar with, that is known to be W[2]-complete \cite{parameterized-algorithms-book}. Namely, the problem of deciding whether a graph has an $n$-dominating set (that is, a set of vertices dominating the whole graph) of size at most $k$. The \textsc{$t$-Dominating Set} problem, in contrast, is Fixed-Parameter Tractable.

This is then a special case of the \textsc{$k$-Partial Cover} discussed in \Cref{sec:partial-cover} (for $k=t$), where the universe size is a bound on the number of vertices, the sets are $\forall v\in V(G): S_v := \{v\}\cup N(v)$.

We are therefore able to support vertex updates (additions and removals) and also edge updates (by removing the two affected sets and inserting the updated ones) in a dynamic setting. We have thus shown:

\begin{restatable}{theorem}{tdominatingtheorem}\label{thm:t-dominating}
	The \textsc{$t$-Dominating Set} problem admits a randomized (one-sided error) dynamic algorithm with $O^*(2^t)$ update time, and a deterministic dynamic algorithm with $O^*(4^t)$ update time. Here $O^*$ hides factors that are polynomial in $t$, $\log n$ and $\log N$.
\end{restatable}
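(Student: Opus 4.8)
The plan is to reduce \textsc{$t$-Dominating Set} directly to \textsc{$k$-Partial Cover} with $k = t$, and then invoke \Cref{thm:partial-cover} as a black box. First I would set up the reduction: given the input graph $G$ on vertex set $V(G)$, define for each vertex $v \in V(G)$ the set $S_v := \{v\} \cup N(v) \subseteq V(G)$, and take the ground universe to be $[N]$ where $N = |V(G)|$ (identifying vertices with elements of $[N]$). Then a sub-collection $T \subseteq \{S_v : v \in V(G)\}$ has $\big|\bigcup_{S_v \in T} S_v\big| \ge t$ if and only if the corresponding set of vertices $S = \{v : S_v \in T\}$ satisfies $|S \cup N(S)| \ge t$, since $\bigcup_{v \in S} S_v = S \cup N(S)$ by definition of $N(S)$. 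Hence the minimum size of a valid sub-collection in the \textsc{$k$-Partial Cover} instance equals the minimum size of a set dominating at least $t$ vertices in $G$, so solving one solves the other.

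Next I would explain how dynamic updates to $G$ translate into dynamic updates on the \textsc{$k$-Partial Cover} instance. A vertex insertion or deletion on $G$ corresponds to inserting or deleting the single set $S_v$ (and, for a deletion, also affects the sets $S_u$ for $u \in N(v)$, each of which loses the element $v$). An edge update $\{u,v\}$ changes exactly the two sets $S_u$ and $S_v$, which we handle by deleting the two stale sets and inserting the two updated ones. In all cases, a single graph update is simulated by $O(1)$ (or, for a vertex deletion with high degree, $O(\deg(v))$, which is hidden in the $\poly(n)$ / $\polylog(n)$ factors depending on the update-time convention) set-insertion and set-deletion operations on the \textsc{$k$-Partial Cover} data structure. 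Since \Cref{thm:partial-cover} gives randomized update time $O^*(2^k)$ and deterministic update time $O^*(4^k)$ for \textsc{$k$-Partial Cover}, substituting $k = t$ yields randomized update time $O^*(2^t)$ and deterministic update time $O^*(4^t)$ for \textsc{$t$-Dominating Set}, where $O^*$ hides factors polynomial in $t$, $\log n$, and $\log N$; this is exactly the claimed bound.

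There is essentially no hard step here — the proof is a reduction, and the only thing to be careful about is the bookkeeping of how each type of graph update propagates to the affected sets (in particular, that deleting a vertex $v$ requires touching all sets $S_u$ with $u \sim v$, which is why we must be slightly careful to either charge this to the degree or observe it is absorbed into the polynomial overhead). The one genuine subtlety, if any, is matching the universe size: one must observe that $N$ in the \textsc{$k$-Partial Cover} instance can be taken to be (a bound on) the number of vertices of $G$, which is fine since $\log N$ only enters polynomially. I would therefore state the reduction, verify the correspondence $\bigcup_{v\in S} S_v = S \cup N(S)$, note the $O(1)$-amortized (or $O(\deg)$-worst-case) simulation of updates, and conclude by citing \Cref{thm:partial-cover}.
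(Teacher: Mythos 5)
Your proposal is correct and follows exactly the paper's own approach: reduce to \textsc{$k$-Partial Cover} with $k=t$, $S_v = \{v\}\cup N(v)$, universe $[N]$ a bound on the vertex count, and simulate graph updates by constantly many set insertions/deletions, then invoke \Cref{thm:partial-cover}. In fact you are slightly more careful than the paper about one point: deleting a vertex $v$ that is not isolated forces you to also update all sets $S_u$ with $u\in N(v)$, which is $O(\deg(v))$ set operations; since the theorem's $O^*$ only hides $\polylog(n)$ (not $\poly(n)$), this is not literally ``absorbed'' as you suggest. The clean resolution, and the one the paper implicitly assumes by phrasing edge and vertex updates separately, is the standard convention that a vertex may only be deleted once isolated (its incident edges having been removed first), so that every single graph update touches $O(1)$ sets.
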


\subsection{\texorpdfstring{$d$}{d}-Dimensional \texorpdfstring{$k$}{k}-Matching} \label{sec:dimensionalmatching}
Another natural question that is solved dynamically with the same methods is the dynamic $k$-Matching problem, which asks to maintain whether or not a (not necessarily bipartite) undirected graph has a matching of at least $k$ pairs of vertices.

This is reducible to the \textsc{$2$-Set $k$-Packing} problem, by mapping a edge $e_i=\{u_i,v_i\}$ to the set $S_i=\{u_i, v_i\}$, and hence we inherit the running times presented in \Cref{thm:m-set-k-packing} and \Cref{cor:m-set-k-packing-count} with $m=2$.

For bipartite graphs, the problem is generalized by the well-studied \textsc{$d$-Dimensional $k$-Matching} problem. For example, see \cite{algebraic-problems}. Recall \Cref{def:d-dim-k-matching}:
\dimensionalmatching*

This generalization is also reducible to \textsc{$m$-Set $k$-Packing}, allowing for $O^*(2^{dk})$ randomized and $O^*(4^{dk})$ deterministic dynamic solutions, but borrowing ideas of \cite{algebraic-problems} we can produce faster dynamic solutions, that for the randomized static case match those of \cite{algebraic-problems}.

\begin{restatable}{theorem}{kmatchingtheorem}\label{thm:k-matching}
	The \textsc{$d$-Dimensional $k$-Matching} problem admits a randomized (one-sided error) dynamic algorithm with $O^*(2^{(d-1)k})$ update time, and a deterministic dynamic algorithm with $O^*(4^{(d-1)k})$ update time. $d$ is regarded as constant, and $O^*$ hides factors that are polynomial in $k$, $\log n$ and $\log N$.
\end{restatable}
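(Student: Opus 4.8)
The simple route is to map a tuple $t$ to the set $S_t:=\{t[1],\dots,t[d]\}$, reducing \textsc{$d$-Dimensional $k$-Matching} to \textsc{$m$-Set $k$-Packing} with $m=d$ and giving update times $O^*(2^{dk})$ and $O^*(4^{dk})$ via \Cref{thm:m-set-k-packing}. To save the extra factor and reach the exponents $(d-1)k$, the plan is to borrow the idea of~\cite{algebraic-problems} and charge the exterior algebra for only $d-1$ coordinates, treating $U_d$ combinatorially through a product. Concretely, work in $\Lambda(\FF^{(d-1)k})$ with its $(d-1)k$ coordinates split into $d-1$ blocks of size $k$: for $i\in[d-1]$ give each $u\in U_i$ a Vandermonde vector $\chi_i(u)$ supported on block $i$, and for a tuple $t$ set $\chi(t):=\chi_1(t[1])\wedge\dots\wedge\chi_{d-1}(t[d-1])$, a homogeneous degree-$(d-1)$ extensor. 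With an edge variable $y_t$ per tuple and a factor $f_v:=1+\sum_{t:\,t[d]=v}y_t\,\chi(t)$ for each $v\in U_d$, the object to maintain is $P:=\prod_{v\in U_d}f_v$, and after each update the answer is read off from the coefficient of $e_{[(d-1)k]}$ (resp. $e_{[2(d-1)k]}$ after lifting) in $P$. Expanding $P$, a nonzero contribution to $[e_{[(d-1)k]}]$ picks at most one tuple per value of $U_d$ (product structure), uses pairwise-disjoint coordinates $1,\dots,d-1$ (a repeated Vandermonde vector annihilates the extensor), and hence exactly $k$ tuples by degree; conversely each size-$k$ matching contributes a distinct nonvanishing $y$-monomial, so $[e_{[(d-1)k]}]P\neq 0$ as a polynomial in the $y$'s iff a matching of size $k$ exists. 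As with \textsc{$k$-Path}, the randomized algorithm substitutes for the $y_t$ random elements of a size-$\Theta(k)$ subset of a characteristic-$2$ field and invokes \Cref{lem:schwartz-zippel}; the deterministic one drops the $y_t$, replaces each $\chi_i$ by its lift $\bar\chi_i$, works in $\Lambda(\QQ^{2(d-1)k})$, and uses \Cref{eq:lifts_determinant} so matching contributions never cancel.

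The key structural facts for the dynamic maintenance are: (i) $X_v:=f_v-1$ is homogeneous of degree $d-1$ (or $2(d-1)$ after lifting), hence lies in a space of dimension $\binom{(d-1)k}{d-1}=\poly(k)$ for constant $d$ and can be stored and updated in $\poly(k)\polylog(n,N)$ time; and (ii) $X_v^{k+1}=0$, since it is homogeneous of degree larger than the dimension of the ambient space, so $f_v^{-1}=\sum_{j=0}^{k}(-X_v)^j$. An update inserts or deletes a tuple $t$ with $v:=t[d]$: write $\Delta:=\pm y_t\,\chi(t)$ ($+$ for insertion, $-$ for deletion) and let $M_v:=\prod_{v'\neq v}f_{v'}=P\cdot f_v^{-1}$, computed from the \emph{old} $f_v$. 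Since all $f_{v'}$ have only even-degree monomials they commute, and one checks uniformly (covering a value of $U_d$ becoming or ceasing to be live) that the new product equals $P'=P+M_v\wedge\Delta$ while the new $X_v$ is the old $X_v+\Delta$. The only costly step is forming $M_v=\sum_{j=0}^{k}(-1)^j\,P\wedge X_v^{\,j}$, which we compute by the iteration $P\wedge X_v^{\,j}=(P\wedge X_v^{\,j-1})\wedge X_v$; each step multiplies a dense extensor by the $\poly(k)$-sparse extensor $X_v$ (in characteristic $2$ one may equivalently use fast subset convolution), costing $O^*(2^{(d-1)k})$ in the randomized case and $O^*(4^{(d-1)k})$ in the deterministic case, while $M_v\wedge\Delta$ is only $O(d)$ skew products. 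Run from the empty instance, this matches the static randomized running time of~\cite{algebraic-problems}.

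The main obstacle is removing a factor. Unlike the factor $\prod_{a\in S}(1+\bar\chi(a))$ in \textsc{$k$-Partial Cover}, the factor $f_v$ is a \emph{sum} over up to $\Theta(n)$ tuples, not a short product of vectors, so we cannot peel it off $P$ with skew products directly; and over $\QQ$ we cannot afford even a single general extensor multiplication in $\Lambda(\QQ^{2(d-1)k})$, which by \Cref{par:general-prod} takes $O(2^{\omega(d-1)k})$ operations, already more than $4^{(d-1)k}$ since $\omega>2$. The resolution is the homogeneity-and-sparsity observation above: because $X_v$ is homogeneous of bounded degree it is $\poly(k)$-sparse, so its powers $X_v^{\,j}$ and their products with the dense $P$ stay within budget and behave, for complexity purposes, like skew products. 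Verifying the identity $P'=P+M_v\wedge\Delta$ and the nilpotency $X_v^{k+1}=0$ in both the commutative (characteristic-$2$) and the even-subalgebra (lifted $\QQ$) settings is the one place that requires care; otherwise the analysis parallels that of \Cref{thm:exact-set-cover} and \Cref{thm:partial-cover}.
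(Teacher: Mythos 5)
Your proposal is correct and follows essentially the same route as the paper: maintain a product of factors indexed by one coordinate's universe (you group by $t[d]$, the paper by $t[1]$), each factor being $1$ plus a $\poly(k)$-sparse homogeneous extensor sum over the incident tuples, update by multiplying in the new factor and cancelling the old one via its nilpotency-based inverse (geometric series over $\QQ$, trivial in characteristic $2$), with the sparsity of the factor keeping each update at $O^*(2^{(d-1)k})$ resp.\ $O^*(4^{(d-1)k})$. The only deviations --- dropping the auxiliary variable $z$ in favor of degree-homogeneity bookkeeping and using block-structured rather than global Vandermonde vectors --- are cosmetic and equally valid.
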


 We note that \cite{narrow-sieves} presents a randomized algorithm for the static problem in time $O^*(2^{(d-2)k}\poly(n))$.

\begin{proof}
We build on the same polynomial constructed in \cite{algebraic-problems} for this problem, and begin by discussing a randomized dynamic algorithm. Each element $a\in \cupdot_{i=2}^d U_i$ is mapped to an extensor, as usual, but now with a different dimension $\chi:\cupdot_{i=2}^d U_i\to\FF^{(d-1)k}$, where $\FF$ is a field of characteristic 2. Now for each $t=(t[1],...,t[d])\in T$ we multiply all extensors given only to the entries $2$ through $d$, as $M_t = \prod_{i=2}^d \chi(t[i])$. We also introduce tuple variables, $y_t$ for each $t\in T$. For any $a\in U_1$ we define $Q_a = \sum_{t:t[1]=a} M_t y_t$.
Then, introducing a new variable $z$, we consider \begin{equation}
    P(z):=\prod_{a\in U_1} \left(1 + z Q_a \right)=\prod_{a\in U_1} \left(1 + z \sum_{t:t[1]=a} M_t y_t \right).
\end{equation}
We note that the coefficient of $z^k e_{[(d-1)k]}$ in $P(z)$ is a nonzero polynomial in the $y$ variables if and only if there exists a solution. Furthermore, $\deg P(z) \le k$ because any term contributing to higher powers of $z$ vanish, as products of more than $(d-1)k$ vectors.

Now we can proceed similarly to what we did in the \textsc{$k$-Partial Cover} problem (see \Cref{thm:partial-cover} in \Cref{sec:partial-cover}). Here, we will maintain each $Q_a$. An insertion or removal of a tuple $t$ only changes one of the $Q$ values, namely $Q_{t[1]}$, which can be done in time $O^*(2^{(d-1)k})$.

Then, we can update $P(z)$ as follows. Mark the previous value of $Q_{t[1]}$ as $Q'_{t[1]}$ (and the new value is denoted $Q_{t[1]}$). Then we first cancel the previous factor $1+zQ'_{t[1]}$ from $P(z)$, then add the new factor $1+zQ_{t[1]}$. Adding the new factor amounts to multiplying two extensors (or, in fact, performing $\poly(k)$ extensors multiplications, because we are now working with polynomials) over characteristic 2, which requires $O^*(2^{(d-1)k})$.

As for removing the old factor, we note that since over a field of characteristic 2 we have $(a+b)^2=a^2+b^2$ for any two extensors $a$ and $b$, and because squares of vectors vanish, $(1+zQ'_{t[1]})^2 = 1^2 + z^2 (Q'_{t[1]})^2 = 1$, since $Q'_{t[1]}$ does not have any constant extensor term. Thus, removing the factor $1+zQ'_{t[1]}$ amounts to multiplying by the same thing, which can be done in $O^*(2^{(d-1)k})$ time.

This finishes the proof for the randomized case.

For the deterministic case, we make the usual changes. Put $y_t=1$ for all $t$, double the underlying field dimension from $(d-1)k$ to $2(d-1)k$, change $\FF$ to $\QQ$, and lift the extensors $\chi:\cupdot_{i=2}^d U_i\to\QQ^{(d-1)k}$ to $\bar{\chi}:\cupdot_{i=2}^d U_i\to\QQ^{2(d-1)k}$.

We now follow the same technique employed in the proof of \Cref{thm:partial-cover}. Changing $Q_{t[1]}$ for the updated $t$ now requires $O^*(4^{(d-1)k})$ time. Multiplying by the new factor $1+zQ_{t[1]}$ seems harder, but we note that all terms in $Q_{t[1]}$ are products of exactly $d-1$ of the $2(d-1)k$ basis elements $e_i$ of $\QQ^{2(d-1)k}$. Thus, there are at most $\binom{2(d-1)k}{d-1} = \poly(k)$ such terms (recall we assume $d$ is constant), and we can perform a multiplication of a general extensor by $Q_{t[1]}$ by simply expanding the product, requiring $O^*(4^{(d-1)k}\poly(k)) = O^*(4^{(d-1)k})$ time.

Removing the factor $1+zQ'_{t[1]}$ is done by noting that $(Q'_{t[1]})^{k+1}=0$, so we can compute its inverse as $(1+zQ'_{t[1]})^{-1}=\sum_{i=0}^{k} (-zQ'_{t[1]})^i$. We do not compute it explicitly, but rather can effectively multiply by it by repeatedly multiplying by $Q'_{t[1]}$, which we already described how to perform in $O^*(4^{(d-1)k})$ time. Since the number of such operations required is only $\poly(k)$, we see that all computations require $O^*(4^{(d-1)k})$ time in total.
\end{proof}

\section{Adding bounding constraints}\label{sec:constraints}
To further demonstrate the robustness of the exterior algebra-based techniques, we show how we can add bounding constraints to the problems discussed, and still employ the same methods. These ideas are also directly applicable to generalizations of the \textsc{$k$-MlD} ($k$-Multilinear Detection) problem. In this problem, one is interested in deciding whether an arithmetic circuit computes a polynomial in several variables which, when expanded into a sum of monomials, contains has at least one multilinear monomial (that is, a monomial where all variables appear with exponent at most 1) of degree $k$ with a nonzero coefficient.

The basic idea we generalize was given in Koutis~\cite{koutis-constraints} for the constrained multilinear detection problem (\textsc{$k$-CMlD}) over an fields of characteristic 2, motivated by the \textsc{Graph Motif} problem.

\paragraph*{Unique colors}
The idea presented in Koutis~\cite{koutis-constraints} is phrased as an algebraic algorithm using an algebraic structure isomorphic to $\Lambda(\FF^{k})$ for $\FF$ of characteristic 2. We repeat the idea here for the exterior algebra, generalizing the technique. Given an arithmetic circuit on $n$ input variables, we wish to detect whether it contains a degree-$k$ multilinear term with some constraints. Specifically, each variable is also attached a predetermined \emph{color}, and we wish to detect whether there is a degree-$k$ multilinear term in which each color $c$ can only appear a certain $\mu(c)\in\NN$ amount of times.

The solution proposed is to randomly select subspaces of $\FF^k$, such that for each color $c$ we have a different subspace $S_c\subseteq \FF^k$ of dimension $\dim S_c = \mu(c)$ (when $\FF$ is a finite field, this can be achieve by randomly sampling vectors in $\FF^k$ until $\mu(c)$ of them are independent, and then $S_c$ is their span). Then, the vector $\chi(x_i)\in\FF^k$ assigned to the $i$th variable is chosen as a random vector in $S_c$. 

The immediate effect is that any monomial containing more than $\mu(c)$ variables of the color $c$ vanishes. This can be seen by writing each participating $\chi(x_i)$ with the color $c$ as a linear combination of some basis of $S_c$, and noting that upon expanding the product $\bigwedge_i \chi(x_i)$, each term will necessarily repeat a basis element, and therefore vanish. In the other direction, \cite{koutis-constraints} analyzes the probability that degree-$k$ multilinear monomial that meets the constraints does not vanish, when $\FF=\FF_2$ is the field of size 2. The probability can be seen to grow significantly with the size of $\FF$.

\begin{remark}
In cases where we will be interested in $\FF = \QQ$, it is unclear how to randomly pick vectors. However, in this cases we are usually interested in a deterministic algorithm, and indeed we can pick the subspaces deterministically, as spans of Vandermonde vectors, and choose coefficients which are also components of Vandermonde vectors. As an example, see \Cref{constrained-k-path}.
\end{remark}

\paragraph*{Multiple colors}
While each variable receives a single color in the above discussion, one might encounter problems in which a variable has multiple colors, and all of their respective constraints must be fulfilled\footnote{This is contrasted with a different notion of 'multiple colors' sometimes used in the literature. Namely, having each variable 'choose' one of the assigned color and discard the rest.}. Put precisely, we have color space $[C]$, variables $x_i$ for $i\in [n]$, color mapping $\eta: [n] \to 2^{[C]}$, and bounding constraints $\mu:[C]\to \NN$. We wish to decide whether there is a degree-$k$ multilinear monomial $\prod_{j=1}^k x_{i_j}$ where for any color $c$ it holds that $|\{j:c\in \eta(i_j)\}|\le \mu(c)$.

We are also able to deal with these kinds of constraints, in the following way. First, for each color $c\in [C]$ we randomly draw a subspace $S_c$ as before. Then, each variable with this color, i.e., $x_i$ with $c\in\eta(i)$, is given a random vector from $S_c$, say $v^{(c)}_i$. Then, the extensor value $\chi(x_i)$ for each variable is the wedge product (in arbitrary order), $\chi(x_i)=\bigwedge_{c\in\eta(i)} v^{(c)}_i$. While this can be seen to take care of nullifying any monomial that does not meet the constraints, we now run into three immediate problems:
\begin{enumerate}
    \item A monomial might vanish simply because it  has a too-high extensor degree. That is, it might be a product of more than $k$ vectors in $\FF^k$. Therefore, we need to increase the dimension of the underlying vector space, to correctly bound the largest extensor degree we expect in a solution. Let $d$ be a bound, and we continue compute $\Lambda(\FF^d)$.
    \item It might not always be clear which coefficient we should look at in the result. The coefficient of $e_{[d]}$, for example, can only be reached with a multilinear monomial who happens to have the correct sum of extensor degrees. To correctly catch all degree-$k$ multilinear monomials, we introduce a new variable, $z$. We then compute let each $\chi(x_i)$ be a multiple of $z$:
    \[ \chi(x_i)=z\cdot \bigwedge_{c\in\eta(i)} v^{(c)}_i\]
    We can now work in $\Lambda(\FF^d)[z]\operatorname{ mod }z^{k+1}$ instead of $\Lambda(\FF^d)$, and inspect the coefficient of $z^k$ in the result. Since we are now working with polynomials, operations will require more time. However, this is seen to only incur a polynomial factor in $k$. A different solution is to compute the circuit multiple times with several values of $z\in\FF$, then interpolate.\footnote{It is noted that Lagrange interpolation works as-is in the exterior algebra, if the values used for $z$ are all scalars.}
    \item In general $\Lambda(\FF^k)$ is noncommutative, so substituting extensors for variables might not reproduce the expected monomials (for example, if the arithmetic circuit computes $xy - yx = 0$, with extensors this could equal $2xy$ instead). This is not always an issue, but when it is, we can make sure to wedge an even number of vectors for each variable. Now calculations take place in $\bigoplus_i \Lambda^{2i}(\FF^k)\subseteq \Lambda(\FF^k)$, which \emph{is} commutative.
\end{enumerate}

\paragraph*{Repeating variables}
In cases where $\Char \FF$ is not too small, we can also allow for some of the variables to repeat at most some predetermined number of times, counting their colors with the same multiplicity, at the expense of increasing the dimension. For ease of presentation, we describe here only a simple example. Suppose $x_1$ is allowed to appear with multiplicity at most 3. That is, we potentially allow terms of the form $x_1^3x_2x_5$ to count as a solution. We use the folklore trick of replacing $x_1$ by a sum of three other variables, $x_1=x_{1,1}+x_{1,2}+x_{1,3}$ obtaining several monomials upon expanding, one of which is $3! x_{1,1}x_{1,2}x_{1,3}x_2x_5$, while the others are not multilinear in the new variables. We thus need to have $\Char \FF\notin\{2,3\}$, for otherwise the new multilinear term vanishes. But for this to also hold when $\chi(x_1)$ is replaced with $\chi(x_{1,1})+\chi(x_{1,2})+\chi(x_{1,3})$, we must require the $\chi(x_{1,j})$ extensors to commute, or equivalently to have an even extensor degree. If this is not the case, then, we multiply each of the $\chi(x_{1,j})$ extensors by an additional random vector.

We end the discussion on bounding constraints with two examples.

\begin{example}[\textsc{$k$-Walk} repeating at most once]
Given a directed graph $G$ on $n$ vertices, decide whether it contains a $k$-walk with at most one repeating vertex. That is, a walk of length $k$ on at least $k-1$ vertices. We describe an efficient deterministic algorithm for this problem.

To this end, we create two copies of $G$, denoted $G_1$ and $G_2$, and let $G'$ be the union of $G_1$ and $G_2$. For each $u\in V(G)$ denote by $u_1\in V(G_1), u_2\in V(G_2)$ their corresponding copies in $G_1$ and $G_2$. Then, for each $(u,v)\in E(G)$, add the edges $(u_1,v_2)$ and $(u_2,v_1)$ to $G'$. We then choose $n+1$ different Vandermonde vectors in $\QQ^k$, and lift them to extensors of degree 2 in $\Lambda(\QQ^{2k})$ in the usual way. Each vertex in $G_1$ will receive a different such extensor, and the remaining one is given to all vertices in $G_2$. This effectively constraints the a walk to pass in $G_2$ at most once. It is seen that the sum of all walk extensors of walks in $G'$ has a nonzero coefficient of $e_{[2k]}$ if and only if there exists a walk with the determined constraints, and hence we are able to statically detect this in time $4^k \poly(k) n^2$, as well as with a sensitivity oracle, by using the sensitivity oracle designed in \Cref{sec:k-path-so}.
\end{example}

\begin{example}[Constrained \textsc{$k$-Path}] \label{constrained-k-path}
Given a directed graph $G$ on $n$ vertices, two (possibly intersecting) subsets $V_1,V_2\subseteq V$, and two numbers $\mu_1,\mu_2\in\NN$, decide whether $G$ contains a $k$-path that contains at most $\mu_1$ vertices of $V_1$ and at most $\mu_2$ vertices of $V_2$. We describe an efficient deterministic algorithm for this problem as well.

We pick $\mu_1+\mu_2$ different Vandermonde vectors $u_1,...,u_{\mu_1}$, $w_1,...,w_{\mu_2}$ in $\QQ^d$ (for some $d$ chosen later) with positive entries. For the $i$th vertex in $V_1$ we give the vector $\sum_{t=1}^{\mu_1} i^t u_t$ (this construction ensures that any $\mu_1$ of these vectors are linearly independent). We do the same thing to $V_2$, with $\mu_2$ and the $w$ vectors. For all vertices not in $V_1\cup V_2$, we give other Vandermonde vectors.

Now, as we always do in the deterministic algorithms, we lift the vectors to degree-2 extensors in $\Lambda(\FF^{2d})$.
Any vector in the intersection $V_1\cap V_2$ was given two vectors, which were lifted to degree-2 extensors, and we wedge them to obtain a single degree-4 extensor.

Define the \emph{degree} of a walk to be the sum of extensor degrees of the extensors given to the vertices in it.
It can be seen that, as long as $2d$ is at least the degree of a $k$-path that meets the constraints, the sum of walk-extensors of $k$-walks is nonzero if and only if there exists a solution\footnote{For example, by noting that when a solution exists, there are vectors we can wedge to the result and obtain a nonzero coefficient for $e_{[2d]}$.}. Thus, it is enough to take $d=k+\min\{k, |V_1\cap V_2|\}$. Since calculations take place in $\Lambda(\FF^{2d})$, we obtain a deterministic running time of $4^{k+\min\{k, |V_1\cap V_2|\}}\poly(k) n^2$ for this problem, as well as a deterministic sensitivity oracle, by using the sensitivity oracle designed in \Cref{sec:k-path-so}.
\end{example}

\section{Discussion on fixed-parameter complexity classes} \label{sec:dynamic-vs-sensitivity}

We discuss the relationship between the following definitions.

\begin{definition}[FPT] \label{def:fpt}
A parameterized problem is in \textsf{FPT} if it is decidable in time $f(k)\poly(n)$ for a computable function $f$.
\end{definition}

\begin{definition}[FPD] \label{def:fpd}
A parameterized problem is in \textsf{FPD} (Fixed-Parameter Dynamic) if there is a dynamic algorithm for it requiring $f(k)\poly(n)$ preprocessing time and $g(k)n^{o(1)}$ update time, for computable functions $f$ and $g$.
\end{definition}

\begin{definition}[FPSO] \label{def:fpso}
A parameterized problem is in \textsf{FPSO} (Fixed-Parameter Sensitivity Oracle) if there is a sensitivity oracle for it requiring $f(k)\poly(n)$ preprocessing time and $\poly(\ell) g(k)n^{o(1)}$ update time, for a computable function $g$.
\end{definition}

We note that $\textsf{FPD} \subseteq \textsf{FPSO} \subseteq \textsf{FPT}$.
We can show that the inclusions are strict, at least under plausible hardness conjectures. For example, \cite{dynamic-parameterized} shows that \textsc{$k$-Path} in directed graphs does not admit a dynamic algorithm under hardness conjectures (which are shown in this paper to be in $\textsf{FPSO}$). \cite{so-hardness} shows that the \textsc{\#SSR} problem does not have an efficient sensitivity oracle, assuming $\text{SETH}$, which in turn can be used to show that, assuming $\text{SETH}$, there exists an \textsf{FPT} problem that is not in \textsf{FPSO}.

We further note that many problems shown in \cite{dynamic-parameterized} to not be in \textsf{FPD} can be shown to be in \textsf{FPSO}. As a few examples:
\begin{enumerate}
    \item \textsc{Triangle Detection}, which is the problem of detecting whether a graph contains a triangle, can be solved efficiently by a sensitivity oracle by precomputing the square of the graph's adjacency matrix and computing the number of triangles in time $O(n^\omega)$, then updating the number of triangles in time $O(\ell^\omega)$ ($O(\ell)$ for triangles that use a single updated edge, $O(\ell^2)$ for triangles using two updated edges, and $O(\ell^\omega)$ for those using three updated edges).
    \item Incremental \textsc{st-Reachability}, which is the problem of deciding whether two predetermined vertices $s$ and $t$ are connected in a directed graph, only allowing incremental updates, can be solved efficiently by precomputing reachability between any two vertices (for example by running BFS from all vertices in time $\poly(n)$), then using dynamic programming to answer a query in time $\poly(\ell)$ by updating the connectivity information only on the $\le 2\ell+2$ vertices that are either $s,t$ or are part of any inserted edge
    \item \textsc{3SUM}, which is the problem of deciding whether there are 3 elements in a list that sum to 0. Here it is possible to precompute the sums of all pairs in time $O(n^2)$ (counting multiplicities), and the number of triples whose sum is 0. Then when adding or removing $\ell$ numbers it is possible to compute in $\poly(\ell)$ time the number of new solutions and the number of previous solutions that should be removed, and checking if the remaining number of solutions is nonzero.
    \item \textsc{$k$-Layered Reachability Oracle} (\textsc{$k$-LRO}), the problem of deciding whether two vertices $u,v$ are connected in a directed $k$-layered graph (that is, a graph whose vertices are partitioned into $k$ parts, with edges only going from one part to the next), also has an efficient sensitivity oracle, and in fact can be seen to be equivalent to the directed \textsc{$k$-Path} problem. In particular, the same \textsc{$k$-Path} sensitivity oracle devised in \Cref{sec:k-path-so} can be used here without any changes.
\end{enumerate} 

\section{Open questions}

We briefly discuss a few natural questions that arise.

\begin{enumerate}
	\item Is there a fully-dynamic \textsc{$k$-Path} detection algorithm on undirected graphs with $f(k)n^{O(1)}$ preprocessing time for some computable function $f$, and update time $2^{O(k)}n^{o(1)}$?
	
	This would beat the dynamic algorithm proposed in \cite{dynamic-parameterized} that is a simple adaptation of the original color-coding idea \cite{color-coding}, while to the best of our knowledge, no improvement on this original color-coding idea is known to transfer to the dynamic setting. While our work shows a better dependency on $k$, we do not reach a truly dynamic algorithm, but rather only a sensitivity oracle.
	
	We note that it is unlikely that such an an algorithm exists for directed graphs, due to the conditional lower bound presented in \cite{dynamic-parameterized}.
	
	\item Is there an efficient sensitivity oracle for the \textsc{$k$-Tree} problem?
	
	The \textsc{$k$-Tree} problem is as follows: given an undirected graph $G$ on $n$ vertices and a tree $T$ on $k$ vertices, determine whether $T$ is a (not necessarily induced) subgraph of $G$. This problem is known to be \textsf{FPT}, and for example is shown in \cite{algebraic-problems} to be solvable in time $2^k \poly(k) \poly(n)$, with techniques similar to those applied for the \textsc{$k$-Path} problem. However, it is unclear how to adapt the techniques here for an efficient sensitivity oracle. We conjecture that there is, in fact, no efficient sensitivity oracle for this problem. More specifically, we conjecture this for any \textsc{$k$-Tree} formed by connecting a single vertex to the beginning of $\Theta(\sqrt{k})$ paths of length $\Theta(\sqrt{k})$. We note that the techniques of \cite{so-decremental} for a decremental sensitivity oracle work for any \textsc{$k$-Tree} just as well as they do for the \textsc{$k$-Path} problem.
	\item Can other techniques used to solve the static versions of the problems discussed in this paper, or other parameterized problems, be used to design faster dynamic algorithms and sensitivity oracles?
\end{enumerate}	

\section*{Acknowledgements}
We thank Jan van den Brand for telling us about the reference~\cite{algebraic-distance-oracles}, as well as Shyan Akmal, Cornelius Brand, and anonymous reviewers for helpful comments. 
This research was supported in part by a grant from the Simons Foundation (Grant Number 825870 JA), by NSF grant CCF-2107187, by a grant from the Columbia-IBM center for Blockchain and Data Transparency, by JPMorgan Chase \& Co., and by LexisNexis Risk Solutions. Any views or opinions expressed herein are solely those of the authors listed.

\bibliography{main}

\end{document}